\DeclareMathAlphabet{\mathsfit}{T1}{\sfdefault}{\mddefault}{\sldefault}
\SetMathAlphabet{\mathsfit}{bold}{T1}{\sfdefault}{\bfdefault}{\sldefault}
\newtheorem{theorem}{Theorem}[section]
\newtheorem{lemma}[theorem]{Lemma}
\newtheorem{definition}[theorem]{Definition}
\newtheorem{proposition}[theorem]{Proposition}
\newtheorem*{claim*}{Claim}
\newtheorem*{problem*}{Problem}
\newtheorem{remark}[theorem]{Remark}
\newtheorem*{remark*}{Remark}
\newtheorem{observation}[theorem]{Observation}
\newtheorem*{observation*}{Observation}
\numberwithin{equation}{section}
\numberwithin{table}{section}
\newcommand{\R}{\ensuremath{\mathbb R}}
\newcommand{\mZ}{\ensuremath{\mathbb Z}}
\newcommand{\E}[1]{{\mathbb{E}}\left[#1\right]}
\newcommand{\junk}[1]{}
\newcommand{\norm}[1]{\left\lVert#1\right\rVert}
\def\eps{{\varepsilon}}
\DeclareMathAlphabet{\mathsfit}{T1}{\sfdefault}{\mddefault}{\sldefault}
\SetMathAlphabet{\mathsfit}{bold}{T1}{\sfdefault}{\bfdefault}{\sldefault}
\def\vc{{\mathsfit c}}
\def\vu{{\mathsfit u}}
\def\vv{{\mathsfit v}}
\def\vw{{\mathsfit w}}
\def\vx{{\mathsfit x}}
\def\vz{{\mathsfit z}}
\def\mA{{\mathsfit A}}
\def\mB{{\mathsfit B}}
\def\mI{{\mathsfit I}}
\def\mX{{\mathsfit X}}
\def\mY{{\mathsfit Y}}
\def\mZ{{\mathsfit Z}}
\def\tr{\operatorname{tr}}
\global\long\def\E{\mathbb{E}}
\global\long\def\R{\mathbb{R}}
\newcommand{\inner}[2]{\langle #1, #2 \rangle} 
\renewcommand{\succeq}{\succcurlyeq}
\title{Experimental Design for Any $p$-Norm}
\author{
Lap Chi Lau\footnote{School of Computer Science, University of Waterloo. Email: \href{mailto:lapchi@uwaterloo.ca}{lapchi@uwaterloo.ca}},~~~~~
Robert Wang\footnote{School of Computer Science, University of Waterloo.
Email: \href{mailto:robert.wang2@uwaterloo.ca}{robert.wang2@uwaterloo.ca}},~~~~~
Hong Zhou\footnote{School of Mathematics and Statistics, Fuzhou University. Email: \href{mailto:hong.zhou@fzu.edu.cn}{hong.zhou@fzu.edu.cn}}
}
\date{}
\begin{document}

\maketitle

\begin{abstract}
We consider a general $p$-norm objective for experimental design problems that captures some well-studied objectives (D/A/E-design) as special cases.    
We prove that a randomized local search approach provides a unified algorithm to solve this problem for all $p$.
This provides the first approximation algorithm for the general $p$-norm objective, and a nice interpolation of the best known bounds of the special cases.
\end{abstract}

\newpage

\section{Introduction}

In experimental design problems, we are given vectors $\vv_1, \ldots, \vv_n \in \R^d$ and a budget $k \geq d$, and the goal is choose a subset $S$ of $k$ vectors so that $\sum_{i \in S} \vv_i \vv_i^\top$ optimizes some objective function that measures the ``diversity'' of the input data.
The most popular and well-studied objective functions are:
\begin{itemize}
    \item D-design: Maximizing $(\det(\sum_{i \in S} \vv_i \vv_i^\top))^{\frac1d}$. 
    \item A-design: Minimizing $\frac1d \tr((\sum_{i \in S} \vv_i \vv_i)^{-1})$. 
    \item E-design: Maximizing $\lambda_{\min}(\sum_{i \in S} \vv_i \vv_i^\top)$. 
\end{itemize}
Experimental design problems have a long history and a wide range of applications, from statistics to machine learning to numerical linear algebra to graph algorithms.
For more information about these applications, we refer the reader to~\cite{SX18,NST19,AZLSW21,LZ20,LZ21} and the references therein.

Although the objectives of D/A/E-design look quite different, 
we observe that there is a natural generalization using eigenvalues that captures all three objective functions as special cases.

\begin{definition}[$p$-Norm of Inverse Eigenvalues]
Given a $d$-dimensional real-symmetric matrix $\mA$ with eigenvalues $\lambda_1, \ldots, \lambda_d > 0$ and a natural number $0 \leq p \leq \infty$, we define
\begin{equation} \label{eq:Phi-p}
    \Phi_p(\mA) := \bigg( \frac{1}{d} \tr\big( \mA^{-p}\big) \bigg)^{\frac1p} = \bigg( \frac{1}{d} \sum_{i=1}^d \lambda_i^{-p} \bigg)^{\frac1p},
\end{equation}
with $\Phi_0(\mA) := \lim_{p \to 0^+} \Phi_p(\mA)$ and $\Phi_{\infty}(\mA) := \lim_{p \to +\infty} \Phi_p(\mA)$.
\end{definition}

Given $p \geq 0$, we refer to the experimental design problem with objective function $\Phi_p$ as $\Phi_p$-design.
To see that $\Phi_p$-design is a generalization of D/A/E-design, let $\mA = \sum_{i \in S} \vv_i \vv_i^T$ and note that:
\begin{itemize}
    \item For $p = \infty$, $\Phi_\infty(\mA) = \lambda_{\max}(\mA^{-1}) = 1/\lambda_{\min}(\mA)$, which is the inverse of the E-design objective;
    \item For $p=1$, $\Phi_1(\mA) = \frac1d \tr(\mA^{-1})$ is exactly the A-design objective;
    \item For $p=0$, $\Phi_0(\mA)$ is the inverse of the D-design objective, as
    \[
    \Phi_0(\mA) = \lim_{p \to 0^+} \bigg(\frac1d \sum_{i=1}^d \lambda_i^{-p} \bigg)^{1/p} = \bigg( \prod_{i=1}^d \lambda_i^{-1} \bigg)^{1/d} = \det(\mA)^{-1/d},
    \]
    where the second equality is a well-known fact (see, e.g., Exercise 28 in Chapter 5 of~\cite{WZ15}).

\end{itemize}

It is known that $\Phi_p(\mA)$ is convex in $\mA$ for any given $0 \leq p \leq \infty$, and so the following is the natural convex programming relaxation for $\Phi_p$-design:
\begin{equation} \label{eq:convex-p}
    \begin{aligned}
        & \underset{\vx \in \R^n}{\rm minimize} & & \Phi_p\bigg( \sum_{i=1}^n \vx(i) \cdot \vv_i \vv_i^\top \bigg) \\ 
        & \text{\rm subject to} & & \sum_{i=1}^n \vx(i) \leq k, \\
        & & & 0 \leq \vx(i) \leq 1, \quad ~\text{ for } 1 \leq i \leq n.
    \end{aligned} 
\end{equation}

To the best of our knowledge,
there are no known approximation algorithms for the general $\Phi_p$-design problem,
other than the special cases $p=0,1,\infty$ which we summarize as follows (the notation $x \gtrsim y$ denotes that $x \geq c y$ for some large enough constant $c$):
\begin{enumerate}
    \item There is a $(1+\eps)$-approximation algorithm for $\Phi_0$-design (D-design) when $k \gtrsim d/\eps$~\cite{SX18,MSTX19,NST19,LZ21}.
    \item There is a $(1+\eps)$-approximation algorithm for $\Phi_1$-design (A-design) when $k \gtrsim d/\eps$~\cite{MSTX19,NST19,LZ21}.
    \item There is a $(1+\eps)$-approximation algorithm for $\Phi_{\infty}$-design (E-design) when $k \gtrsim d/\eps^2$~\cite{AZLSW17c,AZLSW21,LZ20}.
\end{enumerate}
These results are tight in the sense that they match the known integrality gap lower bound of the convex programming relaxation \eqref{eq:convex-p} (see~\cite{NST19} for integrality gap examples).

Note that there is a $d/\eps$ vs $d/\eps^2$ gap between the relaxations for D/A-design ($p=0,1$) and for E-design ($p=\infty$). 
The main question that we study in this paper is:
How does the integrality gap of the convex programming relaxation \eqref{eq:convex-p} change with varying value of $p$? 
In particular, where does the transition from $d/\eps$ to $d/\eps^2$ happen?

\subsection{Main Result}

Our main result is that, when $k \gtrsim \min\{dp/\eps, d/\eps^2 \}$, there is a $(1+\eps)$-approximation algorithm for $\Phi_p$-design.

\begin{theorem} \label{t:main}
Given an integer $p \geq 1$, let $\vx \in [0,1]^n$ be an optimal fractional solution to \eqref{eq:convex-p}. 
For any $\eps \in (0,1)$, let $\gamma = \max\{\eps, 1/p\}$, if $k \gtrsim d / (\gamma \eps)$,
then there is a randomized polynomial time algorithm which returns an integral solution $\mZ = \sum_{i=1}^n \vz(i) \cdot \vv_i \vv_i^\top$ with 
$\vz(i) \in \{0,1\}$ for $1 \leq i \leq n$ such that
\[
\Phi_p\bigg( \sum_{i=1}^n \vz(i) \cdot \vv_i \vv_i^\top \bigg) \leq (1+\eps) \cdot \Phi_p\bigg( \sum_{i=1}^n \vx(i) \cdot \vv_i \vv_i^\top\bigg)
\quad \textrm{and} \quad
\sum_{i=1}^n \vz(i) \leq k,
\]
with probability at least $1 - O\left(\frac{(d^2+d/\gamma)^2}{\eps^2 p^2} \cdot e^{-\Omega( \gamma \sqrt{d})}\right) - e^{-\Omega(\eps d/\gamma)}$.
\end{theorem}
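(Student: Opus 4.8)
The plan is to analyze a randomized local search algorithm driven by the given fractional solution. Write $\mX := \sum_{i=1}^n \vx(i)\,\vv_i\vv_i^\top$ and $\psi(\mA):=\tr(\mA^{-p})$, so that $\Phi_p(\mA)=(\psi(\mA)/d)^{1/p}$ and the target is an integral $\mZ\succ0$ with $\sum_i\vz(i)\le k$ and $\psi(\mZ)\le(1+\eps)^p\psi(\mX)$. We begin from any integral solution $\mZ_0$ of exactly $k$ vectors with $\mZ_0\succ0$, computable in polynomial time with $\psi(\mZ_0)\le\poly(d)\cdot\psi(\mX)$ (any polynomial-factor approximation suffices). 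Each iteration performs one swap of the current matrix $\mZ=\sum_i\vz(i)\vv_i\vv_i^\top$: first sample an index $j$ to add with probability proportional to $\vx(j)\cdot\vv_j^\top\mZ^{-(p+1)}\vv_j$, then sample an index $i$ to remove uniformly (up to a reweighting that cancels the Sherman--Morrison denominator) from $\{\,i\in\supp(\vz):\vv_i^\top(\mZ+\vv_j\vv_j^\top)^{-1}\vv_i\le\tfrac12\,\}$, and set $\mZ':=\mZ+\vv_j\vv_j^\top-\vv_i\vv_i^\top$. The restriction on $i$ guarantees $\mZ'\succeq\tfrac12(\mZ+\vv_j\vv_j^\top)\succ0$, and this set is nonempty because $\sum_{i\in\supp(\vz)}\vv_i^\top(\mZ+\vv_j\vv_j^\top)^{-1}\vv_i=\tr\big((\mZ+\vv_j\vv_j^\top)^{-1}\mZ\big)\le d$ while $k\gg d$. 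We track the regret $R(\mZ):=\psi(\mZ)-\psi(\mX)\ge0$ and show that, on a high-probability ``regularity'' event, one iteration obeys $\mathbb{E}\big[R(\mZ')\big]\le(1-\Omega(1/k))R(\mZ)$ as long as $\psi(\mZ)>(1+\eps)^p\psi(\mX)$; since $R(\mZ_0)\le\poly(d)\cdot\psi(\mX)$, iterating for $T=O\big((d^2+d/\gamma)^2/(\eps^2p^2)\big)$ steps drives $R(\mZ)$ below $((1+\eps)^p-1)\psi(\mX)$, which is exactly $\Phi_p(\mZ)\le(1+\eps)\Phi_p(\mX)$.

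\textbf{Leading-order contraction.}
The fractional solution certifies that a substantial improvement is available \emph{on average over the random addition}. Since $\mA\mapsto\psi(\mA)=\tr(\mA^{-p})$ is convex (a consequence of the stated convexity of $\Phi_p$ and $p\ge1$), the supporting-hyperplane inequality at $\mZ$ with $\nabla\psi(\mZ)=-p\,\mZ^{-(p+1)}$ gives
\[
\psi(\mX)\;\ge\;\psi(\mZ)-p\,\tr\!\big(\mZ^{-(p+1)}(\mX-\mZ)\big)\;=\;(p+1)\,\psi(\mZ)-p\,\tr\!\big(\mZ^{-(p+1)}\mX\big),
\]
so $\sum_j\vx(j)\,\vv_j^\top\mZ^{-(p+1)}\vv_j=\tr\big(\mZ^{-(p+1)}\mX\big)\ge\tfrac{p+1}{p}\psi(\mZ)-\tfrac1p\psi(\mX)$. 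Because $p$ is an integer, $\mZ^{-p}$ is a polynomial in $\mZ^{-1}$ and iterated Sherman--Morrison yields an exact rank-one update for $\psi(\mZ+\vv_j\vv_j^\top)$ whose leading term is $-\,p\,\vv_j^\top\mZ^{-(p+1)}\vv_j$; a Cauchy--Schwarz step using $\sum_j\vx(j)\le k$ then shows the expected addition gain is at least $\tfrac1k\big((p+1)\psi(\mZ)-\psi(\mX)\big)$ at leading order. Dually, $\sum_{i\in\supp(\vz)}\vv_i^\top(\mZ+\vv_j\vv_j^\top)^{-(p+1)}\vv_i\le\psi(\mZ+\vv_j\vv_j^\top)\le\psi(\mZ)$, so averaging over the nearly uniform choice of $i$ bounds the expected removal loss by $\tfrac pk\psi(\mZ)\cdot(1+O(d/k))$ at leading order. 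Subtracting, the leading-order change of $\psi$ is at most $\tfrac1k\big(p\psi(\mZ)-(p+1)\psi(\mZ)+\psi(\mX)\big)=-\tfrac1k R(\mZ)$, giving the claimed $1-\Omega(1/k)$ contraction of the regret in expectation.

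\textbf{Main obstacle and conclusion.}
The crux is controlling the second-order terms of the exact rank-one updates once the randomized choices are in play. These terms carry factors of $\vv^\top\mZ^{-1}\vv$ (from the Sherman--Morrison denominators) together with combinatorial coefficients that grow with $p$ (from differentiating a $p$-th power), and keeping their total contribution below an $\eps$-fraction of the leading terms is exactly what forces the budget $k\gtrsim d/(\gamma\eps)=\min\{dp/\eps,\,d/\eps^2\}$; the two regimes of $\gamma=\max\{\eps,1/p\}$ correspond to which family of error terms dominates. On the deterministic side the identity $\sum_{i\in\supp(\vz)}\vv_i^\top\mZ^{-1}\vv_i=d$ makes the removed vector small ($\vv_i^\top\mZ^{-1}\vv_i\lesssim\gamma\eps$) on average; to handle the added vector and the randomness we introduce a ``regularity'' event asserting that the relevant quadratic forms of the sampled vectors and the accumulated spectrum of $\mZ$ stay close to their expectations, and we bound its failure probability by a Bernstein / matrix-Chernoff estimate, which produces the $e^{-\Omega(\gamma\sqrt d)}$ factor. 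A union bound over the $T=O\big((d^2+d/\gamma)^2/(\eps^2p^2)\big)$ iterations gives the first term of the stated failure probability, and a single concentration bound certifying that the cumulative deviation over all iterations stays within the $\eps$-budget contributes the additive $e^{-\Omega(\eps d/\gamma)}$ term. Conditioned on these events the regret contracts geometrically to $((1+\eps)^p-1)\psi(\mX)$ within $T$ steps, and the resulting integral solution satisfies all requirements, completing the proof.
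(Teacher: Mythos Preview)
Your leading-order computation is correct and pleasant: convexity of $\psi(\mA)=\tr(\mA^{-p})$ gives $\tr(\mZ^{-(p+1)}\mX)\ge\tfrac{p+1}{p}\psi(\mZ)-\tfrac1p\psi(\mX)$, and Cauchy--Schwarz turns this into the $-\tfrac1k R(\mZ)$ contraction you claim. The problem is entirely in the part you label ``main obstacle,'' which you do not actually resolve.

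\textbf{Higher-order terms force a $\kappa=O(1/p)$ cap, not $1/2$.} After a rank-one update the exact change in $\psi$ carries terms $\binom{p}{i}\frac{(\vv^\top\mZ^{-1}\vv)^{i-1}\,\vv^\top\mZ^{-(p+1)}\vv}{(1\pm\vv^\top\mZ^{-1}\vv)^i}$ for $i=1,\dots,p$. With your removal rule $\vv_i^\top(\mZ+\vv_j\vv_j^\top)^{-1}\vv_i\le\tfrac12$, the sum over $i$ is of order $\big(\tfrac{1/2}{1-1/2}\big)^{p}=1$ times $\vv^\top\mZ^{-(p+1)}\vv$, and in fact the loss can be $2^{\Theta(p)}$ times the leading term; the paper remarks explicitly (Remark~\ref{r:exponential}) that the unmodified exchange gives only $k\gtrsim 2^{O(p)}d/\eps$. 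To get $k\gtrsim pd/\eps$ one must cap removals at $\vv_i^\top\mZ^{-1}\vv_i\le\kappa$ with $\kappa=\Theta(1/p)$, so that $\sum_{i\ge2}\binom{p}{i}(\kappa/(1-\kappa))^{i-1}=O(p)$. Your algorithm does not have this mechanism.

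\textbf{No spectral control, hence no bound on increments or variance.} Your ``regularity event'' is not specified, and there is nothing in your algorithm that keeps $\mZ$ spectrally close to $\mX$. In the paper the sampling is driven not by $\mZ^{-(p+1)}$ but by the regret-minimization action matrix $\mA_t=(\alpha\mZ_t-c_t\mI)^{-2}$ with learning rate $\alpha=\sqrt{d}/\gamma$; this is what forces $\lambda_{\min}(\mZ_t)\ge 1-5\gamma$ after an initial phase (Propositions~\ref{p:terminate1-p} and~\ref{p:bounded_min-p}) and is the source of the $e^{-\Omega(\gamma\sqrt d)}$ factor. That spectral lower bound, combined with the KKT identity $\vu_i^\top\mX^{-(p+1)}\vu_i\le\tfrac1k\tr(\mX^{-p})$ for fractional $i$ (Lemma~\ref{l:optimality-p}), is what bounds each $\vu_{i_t}^\top\mY_t^{-(p+1)}\vu_{i_t}$ and $\vu_{j_t}^\top\mY_t^{-(p+1)}\vu_{j_t}$ by $O(\tfrac1k)\tr(\mX^{-p})$, giving the $O(p/k)\tr(\mX^{-p})$ range and $O(p^3\sqrt d/(kM))(\tr(\mX^{-p}))^2$ variance needed for Freedman's inequality. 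Under your sampling rule a single added vector can have $\vv_j^\top\mZ^{-(p+1)}\vv_j$ arbitrarily large relative to $\psi(\mX)$ (nothing prevents a near-singular direction of $\mZ$), so no per-step bound of the required form is available, and the martingale argument does not go through. The paper's two-phase analysis, the $\kappa$-restriction, and the use of the convex-program optimality condition are not decorations; they are exactly what closes the gap you left open.
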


\begin{remark*}
The result in Theorem~\ref{t:main} can be generalized to all real $p \geq 1$, see Remark~\ref{r:real_p} in Section~\ref{s:objective} for more details.
\end{remark*}

This is the first approximation algorithm for $\Phi_p$-design for general $p$.
\Cref{t:main} shows that $\Phi_p$-design for constant $p$ admits as good an approximation algorithm as for D/A-design, and there is a unifying algorithm to achieve this guarantee.

Note that, when $p \to +\infty$, $\Phi_p$ becomes the E-design objective and $\gamma = \max\{\eps, 1/p\} = \eps$. Thus, \Cref{t:main} provides a nice interpolation between the $d/\eps$ bound for D/A-design and the $d/\eps^2$ bound for E-design.

We further remark that our results can be generalized to the weighted setting to handle multiple budget/knapsack constraints as in~\cite{LZ21}, but we omit the details to keep the presentation cleaner as they are the same as in~\cite{LZ21}.

The proof of \Cref{t:main} is built on the randomized local search approach in~\cite{LZ21} and~\cite{LZ20}, but several new technical ideas are needed to handle higher moments that are introduced by higher $p$-norm.
In \Cref{s:framework}, we will review the background and previous work, present the algorithm and the overall structure of the analysis, and explain the new ideas in this work.
Then, in \Cref{s:objective}, we will present the details for $p$-norm experimental design.

\subsection{Related Work}

In this paper, we are focusing on generalizing D/A/E-design with the $p$-norm objective in \eqref{eq:Phi-p}. 
However, there are other different ways to extend D/A/E-design. 
For example, the Bayesian framework of experimental design~\cite{CV95} extends the problem by adding a fixed matrix $\mB$ (which encodes some prior information, e.g., a multiple of identity) to the covariance matrix before applying the objective function. When $\mB = 0$, we recover classical experimental design problems.
Tantipongpipat~\cite{Tan20} provided an approximation algorithm for Bayesian A-design problem when $\mB$ is a multiple of identity. 
Dereziński et. al.~\cite{DLM20} studied approximation algorithms for Bayesian A/D-design (together with C/V-design, i.e., two other objectives) with a general PSD matrix $\mB$.
Another example is using elementary symmetric polynomials to generalize A/D-design~\cite{MS17}. Nikolov et. al.~\cite{NST19} extended their approximation algorithm for A-design to tackle this family of generalized ratio objectives.

All these are interesting generalizations of D/A/E-design.
It would be interesting to see whether the randomized local search approach in~\cite{LZ20,LZ21} and this paper can be applied to these settings to get better results.

\section{The Framework} \label{s:framework}

In this section, we first review the randomized local search approach in \Cref{ss:previous}.
Then, we present the full algorithm in \Cref{ss:algorithm} and state the main technical theorem.
Then, in \Cref{ss:plan}, we provide the overall proof plan and the precise statements for analyzing the randomized exchange algorithm, and then put together the statements to prove the main technical theorem.
Finally, we discuss the main difficulty and the new ideas needed to analyze the $\Phi_p$ objective in \Cref{s:objective}.

\subsection{Randomized Local Search Approach} \label{ss:previous}

The proof of \Cref{t:main} is built on the randomized local search approach in~\cite{LZ21},
which is based on the regret minimization framework developed in~\cite{AZLSW21} for experimental design problems and the randomized spectral rounding techniques in~\cite{LZ20}.
We review this approach in this subsection.

In~\cite{LZ21}, the first step is to solve \eqref{eq:convex-p} for D/A/E-design to obtain a solution $\vx \in \R^n$, and then to normalize the vectors $\vv_i$'s so that $\sum_{i=1}^n \vx(i) \cdot \vv_i \vv_i^\top = \mI$ for using the regret minimization framework in~\cite{AZLSW21}.
Then, the rounding algorithm starts from a random initial solution set $S_0$ that is independently sampled according to $\vx$.
Using the density matrix $\mA_t$ maintained by the regret minimization framework at each step $t$, the algorithm randomly chooses a pair of vectors $\vv_{i_t}$ and $\vv_{j_t}$ with the following probability distributions:
\begin{equation} \label{e:E-distribution}
\begin{aligned}
& \Pr(i_t = i) \propto \big(1-\vx(i)\big) \cdot \Big(1- \alpha \inner{\vv_i \vv_i^\top}{\mA_t^{1/2}}\Big)
{\rm~~and~~}
\Pr(j_t = j) \propto \vx(j) \cdot \Big(1+ \alpha \inner{\vv_j \vv_j^\top}{\mA_t^{1/2}}\Big),
\end{aligned}
\end{equation}
and set $S_{t+1} \gets S_t -\vv_{i_t} + \vv_{j_t}$.
Using the above randomized local search strategy, it can be shown that the size of the solution is expected to stay around $k$, while the potential function from the regret minimization framework~\cite{AZLSW21} related to the minimum eigenvalue is expected to improve as long as the minimum eigenvalue is less than $1-\eps$.
Freedman's martingale inequality and a new concentration inequality for non-martingales are used in~\cite{LZ20} to prove that all these quantities are close to their expected values with high probability.
This randomized local search algorithm provides a $(1 \pm \eps)$-approximate solution for D/A/E-design when $k \gtrsim d/\eps^2$.

The main contribution in~\cite{LZ21} is to prove that the randomized local search algorithm can be adapted to achieve $(1 \pm \eps)$-approximation for D/A-design when $k \gtrsim d /\eps$, thus providing a unifying approach to achieve the optimal integrality gap for D/A/E-design.
Essentially, the algorithm is the same as the one for E-design but only require that the solution to have minimum eigenvalue $3/4$ rather than $1-\eps$.
For the analysis, the randomized local search algorithm is conceptually divided into two phases.
In the first phase, the algorithm will find a solution with minimum eigenvalue at least $3/4$ with high probability when $k \gtrsim d/\eps$.
In the second phase, the minimum eigenvalue will maintain to be at least $1/4$ with high probability, and the objective value for D/A-design will improve to $(1\pm\eps)$ times the optimal objective value in polynomial time with high probability.
The analysis of the first phase follows directly from earlier work on spectral rounding in~\cite{LZ20}.
The analysis of the second phase includes two main parts:
(1) to show that, in expectation, the probability distributions in~(\ref{e:E-distribution}) for E-design are also good for making progress towards D/A-design objectives;
(2) to show that the progress in the objective value is concentrated around the expectation with a martingale concentration argument.
The condition that the minimum eigenvalue is at least $1/4$ is very important in both parts in the second phase, and the optimality conditions for the convex program~\eqref{eq:convex-p} is crucially used in the second part in the second phase.

In this paper, we will extend the algorithm and the analysis described above, and show that the randomized local search algorithm provides a unifying approach for $\Phi_p$-design for all $p$.

\subsection{The Algorithm} \label{ss:algorithm}

We present the full algorithm for $\Phi_p$-design in this subsection.

\begin{framed}
\noindent {\bf Randomized Exchange Algorithm}

\noindent Input: $m$ vectors $\vu_1, ..., \vu_m \in \R^d$, a budget $k \geq d$, an accuracy parameter $\eps \in (0,1)$.

    \begin{enumerate}
    
    \item Solve the convex programming relaxation~\eqref{eq:convex-p} and obtain an optimal solution $\vx \in [0,1]^m$ with at most $d^2 + 1$ fractional entries, i.e.~$|\{i \in [m] \mid 0 < \vx(i) < 1\}| \leq d^2 + 1$.
    Let $\mX = \sum_{i=1}^m \vx(i) \cdot \vu_i \vu_i^\top$. 

    \item Preprocessing: 
	Let $\vv_i \leftarrow \mX^{-1/2} \vu_i$ for all $i \in [m]$, 
	so that $\sum_{i=1}^m \vx(i) \cdot \vv_i \vv_i^\top = \mI_d$.
    
    \item Initialization: $t \gets 1$, $S_0 \gets \emptyset$, $\gamma = \max\{\frac{\eps}{6}, \frac{1}{6p}\}$, $\kappa=\max\{ \frac{\eps}{2}, \frac{1}{2p}\}$, $M \gets \frac{d}{\gamma} + d^2 + 1$, and $\alpha \gets \frac{\sqrt{d}}{\gamma}$.
    
    \item Add $i$ into $S_0$ independently with probability $\vx(i)$ for each $i \in [m]$. Let $\mY_1 \gets \sum_{i \in S_0} \vu_i \vu_i^\top$ and $\mZ_1 \gets \sum_{i \in S_0} \vv_i \vv_i^\top$.
    
    \item While the termination condition
    $ (\tr( \mY_t^{-p} ) )^{\frac1p} \leq (1+\eps) (\tr(\mX^{-p}))^{\frac1p} $ 
    is not satisfied and $t \leq \frac{2M}{\gamma} + \frac{2M}{\eps p}$,
    do the following:
    \begin{enumerate}
        \item $S_t \gets$ Exchange($S_{t-1}$).
        \item Set $\mY_{t+1} \gets \sum_{i \in S_t} \vu_i \vu_i^\top$, $\mZ_{t+1} \gets \sum_{i \in S_t} \vv_i \vv_i^\top$ and $t \gets t+1$.
    \end{enumerate}
    
    \item Return $S_{t-1}$ as the solution. 
\end{enumerate}
\end{framed}

\vspace{1mm}

The main algorithm is almost the same as the one in~\cite{LZ21}, but with two additional parameters $\gamma, \kappa$ that will depend on the value of $p$. 
For A-design (when $p=1$), the algorithm in~\cite{LZ21} is just a special case of the randomized exchange algorithm with parameter $\gamma = \frac18$ and $\kappa = 1$.
In this paper, the parameter $\gamma$ is used to adjust the learning rate $\alpha$ of the regret minimization framework and the parameter $\kappa$ will be used in the exchange subroutine, which is described as follows.

\vspace{1mm}
\begin{framed}
\noindent {\bf Exchange Subroutine}

\noindent Input: the current solution set $S_{t-1}$.

    \begin{enumerate}
        \item Compute the action matrix $\mA_t := (\alpha \mZ_t - c_t \mI)^{-2}$, where $\mZ_t = \sum_{i \in S_{t-1}} \vv_i \vv_i^\top$ and $c_t$ is the unique scalar such that $\mA_t \succ 0$ and $\tr(\mA_t) = 1$.
        \item Define $S'_t := \big\{ i \in S_{t-1} \mid \alpha \inner{\vv_i \vv_i^\top}{\mA_t^{1/2}} \leq \frac12~\text{and}~\inner{\vv_i \vv_i^\top}{\mZ_t^{-1}} \leq \kappa \big\}$.
        
        \item Sample $j_t \in [m] \backslash S_{t-1}$ from the following probability distribution
        \begin{align*}
            & \Pr(j_t = j) =\frac{\vx(j)}{M} \cdot \big(1+ \alpha \inner{\vv_j \vv_j^\top}{\mA_t^{1/2}}\big), \text{ for } j  \in [m] \backslash  S_{t-1} \text{ and} \\ 
            & \Pr(j_t = \emptyset) = 1- \sum_{j \in [m] \backslash S_{t-1}} \frac{\vx(j)}{M} \cdot \big(1+ \alpha \inner{\vv_j \vv_j^\top}{\mA_t^{1/2}}\big).
        \end{align*}
        
        \item Sample $i_t \in S'_{t-1}$ from the following probability distribution 
        \begin{align*}
            & \Pr(i_t = i) =\frac{1-\vx(i)}{M} \cdot \big(1-\alpha \inner{\vv_i \vv_i^\top}{\mA_t^{1/2}}\big), \text{ for } i \in S'_{t-1} \text{ and} \\
            & \Pr(i_t = \emptyset) = 1- \sum_{i \in S'_{t-1}} \frac{1-\vx(i)}{M} \cdot \big(1-\alpha \inner{\vv_i \vv_i^\top}{\mA_t^{1/2}}\big).
        \end{align*}
        
        \item Return $S_t := S_{t-1} \cup \{ j_t\} \backslash \{i_t\}$.
    \end{enumerate}
\end{framed}

\vspace{1mm}

The exchange subroutine is also almost the same as in~\cite{LZ21}.
The key difference is to use the new parameter $\kappa$ to further restrict the set of vectors that are allowed to remove from the current solution.

\begin{remark} \label{r:exponential}
Using the new analysis in this paper, the same algorithm in~\cite{LZ21} (without changing the parameters $\gamma$ and $\kappa$) can achieve $(1+\eps)$-approximation for $\Phi_p$-design when $k \gtrsim 2^{O(p)} d/\eps$, with an exponential dependence on $p$ in the budget requirement, much worse than $k \gtrsim \min\{dp/\eps, d/\eps^2\}$ in \Cref{t:main}.
\end{remark}

\subsubsection{Main Technical Theorem}

To prove \Cref{t:main}, we will prove that the randomized exchange algorithm is a bicriteria approximation algorithm,
such that it returns a solution that is $(1+\eps)$-approximate in the $\Phi_p$ objective and the size of the solution is not much larger than $k$. 

\begin{theorem} \label{t:main-p-norm}
Given $\eps \in (0,1)$,  
if $k \gtrsim \frac{d}{\gamma  \eps}$, 
then the randomized exchange algorithm returns a solution set $S$ within $\frac{2M}{\gamma} + \frac{2M}{\eps p}$ iterations such that
\[
\Big( \tr\Big( \Big(\sum_{i \in S} \vu_i \vu_i^\top \Big)^{-p}\Big) \Big)^{\frac1p} \leq (1+\eps) \cdot \left( \tr\left(\mX^{-p}\right) \right)^{1/p}
\]
with probability at least $1-O\big(\frac{M^2}{\eps^2 p^2} \cdot e^{-\Omega\left( \gamma \sqrt{d} \right)} \big)$, 
where $\mX$ is an optimal fractional solution to \eqref{eq:convex-p}.
Moreover, the solution set $S$ satisfies
$|S| \leq (1+\eps)k + O\big( \frac{d}{\gamma} + \frac{d}{\kappa} \big)$
with probability at least $1-e^{-\Omega(\eps d / \min\{\gamma, \kappa\})}$. 
\end{theorem}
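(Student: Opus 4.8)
The plan is to follow the two-phase structure of the analyses in~\cite{LZ21,LZ20}, now specialized to the $\Phi_p$ objective, and to track along the run of the randomized exchange algorithm three quantities: the spectrum of $\mZ_t$ (governed by the regret-minimization framework with action matrix $\mA_t=(\alpha\mZ_t-c_t\mI)^{-2}$, normalized so $\tr(\mA_t)=1$), the objective $\tr(\mY_t^{-p})$, and the size $|S_t|$. Conceptually the iteration budget splits into a warm-up phase of about $\tfrac{M}{\gamma}$ steps, whose goal is to raise $\lambda_{\min}(\mZ_t)$ above $\tfrac{3}{4}$, and a main phase of the remaining steps, during which $\lambda_{\min}(\mZ_t)$ is kept at least $\tfrac{1}{4}$ while $\tr(\mY_t^{-p})$ is driven below the termination threshold $(1+\eps)^p\tr(\mX^{-p})$.

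For the warm-up phase I would reuse the spectral-rounding argument of~\cite{LZ20} almost verbatim: the regret bound for $\mA_t$ together with the $j_t$-sampling distribution shows that $\lambda_{\min}(\mZ_t)$ increases in expectation at rate $\Omega(\gamma/\sqrt d)$ per step while it is below $\tfrac{3}{4}$, so it reaches $\tfrac{3}{4}$ within $O(M/\gamma)$ steps and stays at least $\tfrac{1}{4}$ thereafter; a Freedman-type martingale inequality then makes both events hold with probability $1-e^{-\Omega(\gamma\sqrt d)}$ once $k\gtrsim d/(\gamma\eps)$ (so that $|S_0|$, a sum of independent Bernoullis, is well concentrated). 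The only change from~\cite{LZ20,LZ21} is carrying the factor $\gamma$ through the learning rate $\alpha=\sqrt d/\gamma$. The resulting uniform bound $\|\mZ_t^{-1}\|\le 4$ --- equivalently $\mY_t\succeq\tfrac{1}{4}\mX$, hence $\mY_t^{-1}\preceq 4\mX^{-1}$ and $\tr(\mY_t^{-p})\le 4^p\tr(\mX^{-p})$ once the warm-up succeeds --- is what all of the following estimates rely on.

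The main phase is where the new work lies, and I expect the curvature control for general $p$ to be the principal obstacle. Writing $\mY_{t+1}=\mY_t-\vu_{i_t}\vu_{i_t}^\top+\vu_{j_t}\vu_{j_t}^\top$ and analyzing the two rank-one updates $\mY\mapsto\mY\pm\vu\vu^\top$ by Sherman--Morrison --- first $(\mY\pm\vu\vu^\top)^{-1}=\mY^{-1}\mp(1\pm\ell_1(\vu))^{-1}\mY^{-1}\vu\vu^\top\mY^{-1}$ with $\ell_a(\vu):=\langle\vu\vu^\top,\mY^{-a}\rangle$, then the binomial expansion of the $p$-th power --- the change in $\tr(\mY^{-p})$ is $\mp p\,\ell_{p+1}(\vu)$ to first order plus higher-order corrections, where each order-$j$ correction is $\pm(1\pm\ell_1(\vu))^{-j}$ times a product of $j$ factors $\ell_a(\vu)$ whose subscripts sum to $p+j$. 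The new parameter $\kappa=\max\{\tfrac{\eps}{2},\tfrac{1}{2p}\}$ enters precisely here: restricting removals to $S'_t=\{i\in S_{t-1}:\alpha\langle\vv_i\vv_i^\top,\mA_t^{1/2}\rangle\le\tfrac{1}{2},\ \langle\vv_i\vv_i^\top,\mZ_t^{-1}\rangle\le\kappa\}$ forces $\ell_1(\vu_{i_t})=\langle\vv_{i_t}\vv_{i_t}^\top,\mZ_t^{-1}\rangle\le\kappa$, which keeps $1-\ell_1(\vu_{i_t})$ away from $0$ and, combined with the log-convexity $\ell_a(\vu)^2\le\ell_{a-1}(\vu)\,\ell_{a+1}(\vu)$ of the $\ell$'s, makes the higher-order tail summable and controlled by the first-order term $p\,\ell_{p+1}(\vu)$; for additions, $\lambda_{\min}(\mZ_t)\ge\tfrac{1}{4}$ and the $\vx(j)$-weighting of the sampling distribution play the analogous role. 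The delicate point --- the heart of the proof --- is that this tail bound must be only mildly $p$-dependent: when $\eps p\le 1$ one has $p\kappa\le\tfrac{1}{2}$ and the target $(1+\eps)^p\tr(\mX^{-p})$ is merely a constant-factor relaxation, so a bounded-factor curvature slack is affordable, whereas when $\eps p>1$ one has $\kappa=\tfrac{\eps}{2}$ and the per-step slack is $O(\eps)$, accumulating to only a $1+O(\eps)$ factor over the $O(M/(\eps p))$ main-phase steps; this is exactly where the naive analysis of Remark~\ref{r:exponential} loses the factor $2^{O(p)}$. Once the curvature is absorbed, I would replace $\mY_t^{-(p+1)}$ by $\mX^{-(p+1)}$ up to $\|\mZ_t-\mI\|$ errors that the regret potential accounts for, and then combine (i) the KKT conditions of~\eqref{eq:convex-p}, which make $\langle\vu_i\vu_i^\top,\mX^{-(p+1)}\rangle$ equal to a common value $\tau\approx\tr(\mX^{-p})/k$ on the fractional coordinates of $\vx$ (and $\le\tau$, resp.\ $\ge\tau$, where $\vx(i)=0$, resp.\ $\vx(i)=1$), with (ii) the fact that $\mA_t^{1/2}=(\alpha\mZ_t-c_t\mI)^{-1}$ emphasizes the small-eigenvalue directions of $\mZ_t$, i.e.\ the large-eigenvalue directions of $\mZ_t^{-1}$ and hence, after the $\mX^{1/2}$ change of basis, of $\mY_t^{-(p+1)}$: the $\alpha\langle\vv\vv^\top,\mA_t^{1/2}\rangle$ biases in the sampling thus steer additions toward large, and removals toward small, values of $\langle\vu\vu^\top,\mY_t^{-(p+1)}\rangle$, and the regret bound turns this bias into an expected one-step decrease of $\tr(\mY_t^{-p})$ that, iterated over the allotted $\tfrac{2M}{\gamma}+\tfrac{2M}{\eps p}$ steps, pushes the objective below the termination threshold whenever it has not already fallen there.

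Finally, for concentration and the size bound I would set $X_t=\log\tr(\mY_t^{-p})$, note that its increments are $O(p\kappa/M)$ in magnitude by the same $S'_t$ and $\|\mZ_t^{-1}\|\le 4$ bounds, and apply the non-martingale Freedman-type inequality of~\cite{LZ20} to conclude that the termination condition is met within $\tfrac{2M}{\gamma}+\tfrac{2M}{\eps p}$ iterations with probability $1-O\big(\tfrac{M^2}{\eps^2p^2}\,e^{-\Omega(\gamma\sqrt d)}\big)$, which is the first claim. For the size, $|S_0|$ is a sum of independent Bernoullis of total mean at most $k$, so $|S_0|\le k+O(\sqrt{k\cdot\eps d/\min\{\gamma,\kappa\}})\le(1+O(\eps))k$ when $k\gtrsim d/(\gamma\eps)$; and $|S_{t+1}|-|S_t|=\mathbf{1}[j_t\ne\emptyset]-\mathbf{1}[i_t\ne\emptyset]$ has conditional mean that becomes negative once $|S_t|$ exceeds $k$ by more than $c\,(\tfrac{d}{\gamma}+\tfrac{d}{\kappa})$ --- the $\tfrac{d}{\kappa}$ bounding the ``frozen'' elements $S_{t-1}\setminus S'_t$ that can never be removed (at most $d/\kappa$ of them, since $\sum_{i\in S_{t-1}}\langle\vv_i\vv_i^\top,\mZ_t^{-1}\rangle=d$), and the $\tfrac{d}{\gamma}$ coming from the $O(d)$-sized sums $\alpha\langle\mZ_t,\mA_t^{1/2}\rangle$ accumulated over the $O(M/\gamma)$ steps --- so an Azuma/bounded-difference argument (as in~\cite{LZ20,LZ21}) confines $|S_t|$ to $(1+\eps)k+O(\tfrac{d}{\gamma}+\tfrac{d}{\kappa})$ throughout, with the stated failure probability $e^{-\Omega(\eps d/\min\{\gamma,\kappa\})}$.
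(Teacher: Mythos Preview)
Your two-phase scaffolding and your identification of $\kappa$ as the lever that tames the removal-side higher-order terms are both correct. But there is a genuine gap in the eigenvalue control: you fix the phase-one target and the phase-two floor at the A-design constants $\tfrac34$ and $\tfrac14$, and you explicitly base ``all of the following estimates'' on $\mY_t\succeq\tfrac14\mX$, hence $\tr(\mY_t^{-p})\le 4^p\tr(\mX^{-p})$. That $4^p$ is precisely the $2^{O(p)}$ loss of Remark~\ref{r:exponential}. It reappears every time you pass from $\mY_t^{-(p+1)}$ to $\mX^{-(p+1)}$ in order to invoke the optimality bound $\langle\vu_i\vu_i^\top,\mX^{-(p+1)}\rangle\le\tr(\mX^{-p})/k$: with only $\mY_t\succeq\tfrac14\mX$ you get $\vu_i^\top\mY_t^{-(p+1)}\vu_i\le 4^{p+1}\,\vu_i^\top\mX^{-(p+1)}\vu_i$, and this exponential factor contaminates both the expected-progress estimate and the variance bound in the Freedman step. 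The parameter $\kappa$ cannot repair this; it only controls the denominators $(1-\ell_1(\vu_{i_t}))^{-i}$ in the removal expansion, not the comparison of $\mY_t$ to $\mX$.

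The paper's fix is to make the eigenvalue thresholds $\gamma$-dependent: phase one pushes $\lambda_{\min}(\mZ_t)$ up to $1-2\gamma$ (Proposition~\ref{p:terminate1-p}) and phase two maintains $\lambda_{\min}(\mZ_t)\ge 1-5\gamma$ (Proposition~\ref{p:bounded_min-p}), with $\gamma=\max\{\eps/6,1/(6p)\}$. The point is that then $(1-5\gamma)^{-(p+1)}=O(1)$, so every replacement of $\mY_t$ by $\mX$ costs only a constant. This also yields a clean case split you do not make: when $p\ge 1/\eps$ one has $\gamma=\eps/6$, so $\lambda_{\min}(\mZ_{\tau_1})\ge 1-\eps/3$ already gives $(\tr(\mY_{\tau_1}^{-p}))^{1/p}\le(1+\eps)(\tr(\mX^{-p}))^{1/p}$ and no second phase is needed; the objective analysis (Theorem~\ref{t:termination-p}) is only required when $p\le 1/\eps$. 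Two further tools you omit are also load-bearing: the Lieb--Thirring inequality is what collapses your ``product of $j$ factors $\ell_a$ with subscripts summing to $p+j$'' into the clean binomial form of Lemma~\ref{l:one-step-p} (each order-$i$ term involves only $\ell_1^{\,i-1}\ell_{p+1}$), and a H\"older step (Lemma~\ref{l:holder}) is what converts $\langle\mX,\mY_t^{-p-1}\rangle-\tr(\mY_t^{-p})$ into a positive multiple of $\tr(\mX^{-p})$ whenever the termination condition fails, closing the expected-progress bound.
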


With \Cref{t:main-p-norm}, we can prove \Cref{t:main} by first noticing that both $\gamma$ and $\kappa$ are chosen in the order of $\Theta(\max\{ \eps, 1/p \})$ and then turning the bicriteria approximation result into a true approximation with a scaling argument as in~\cite{LZ21}. We omit the standard proof and refer the readers to~\cite{LZ21} for more details.

\subsection{The Proof Plan} \label{ss:plan}

In this subsection, we provide the overall plan and the precise statements for analyzing the randomized exchange algorithm, and then put together the statements to obtain the main technical theorem.

\subsubsection{Well-Defined Algorithm}

First, we prove that the randomized exchange algorithm is well-defined. In particular, we need to show that a fractional optimal solution to the convex relaxation~\eqref{eq:convex-p} with at most $O(d^2)$ fractional entries can be found in polynomial time, 
and also the probability distributions in the exchange subroutine are well-defined for $M = d^2 + \frac{d}{\gamma} + 1$. 
These can be established using the same arguments (with a different value for $\gamma$) as in Lemma~4.2 and Claim~4.4 in~\cite{LZ21}. 
Note that the modified exchange subroutine does not affect these arguments. 

The following simple observation (Observation 4.3 in~\cite{LZ21}) will be useful in the analysis of the algorithm.

\begin{observation} \label{obs:prob}
    For any $t \geq 0$, it holds that $i \in S_t$ for all $i$ with $\vx(i) = 1$ and $j \in [n] \backslash S_t$ for all $j$ with $\vx(j) = 0$. 
    This further implies that $\Pr(i_t = i) = 0$ for all $i$ with $\vx(i) \in \{0,1\}$ and $\Pr(j_t = j) = 0$ for all $j$ with $\vx(j) \in \{0,1\}$.
\end{observation}

\subsubsection{Solution Size Bound} 

Then, we show that the algorithm returns a solution set $S$ of size not much larger than $k$ with high probability.

\begin{theorem}[Variant of Theorem 3.12 of~\cite{LZ20}] \label{t:cost-kappa}
    Let $\alpha = \sqrt{d}/\gamma$ and $\kappa$ be the parameters used in the randomized exchange algorithm.
    Suppose that the solution $S_t$ of the randomized exchange algorithm satisfies $\lambda_{\min}\left(\sum_{i \in S_{t}} \vv_i \vv_i^\top\right) < 1$ for all $t \in [\tau]$. 
    Then, for any given $\delta \in [0,1]$,
    \[
    \Pr\bigg[ |S_{\tau}| \leq (1+\delta) \cdot \sum_{i=1}^n \vx(i) + \Big( \frac{12d}{\gamma} + \frac{2d}{\kappa} \Big) \bigg] \geq 1 - \exp\Big(-\Omega\Big(\frac{\delta d}{\min\{\gamma, \kappa\}}\Big)\Big).
    \]
\end{theorem}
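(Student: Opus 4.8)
The plan is to track the size process $\{|S_t|\}_{t \ge 0}$. Each invocation of the exchange subroutine adds at most one index ($j_t$, if $j_t \neq \emptyset$) and removes at most one index ($i_t$, if $i_t \neq \emptyset$), so $|S_t| - |S_{t-1}| = \mathbf{1}[j_t \neq \emptyset] - \mathbf{1}[i_t \neq \emptyset] \in \{-1,0,1\}$. Let $\mathcal{F}_{t-1}$ be the $\sigma$-algebra generated by $S_0, \dots, S_{t-1}$. The heart of the argument is the one-step drift estimate
\[
\mathbb{E}\big[\, |S_t| - |S_{t-1}| \,\big|\, \mathcal{F}_{t-1}\big] \;\le\; \frac{1}{M}\Big( \textstyle\sum_{i=1}^n \vx(i) - |S_{t-1}| + \frac{7d}{\gamma} + \frac{d}{\kappa}\Big),
\]
which says $|S_t|$ drifts, geometrically, toward $\Lambda := \sum_i \vx(i) + \frac{7d}{\gamma} + \frac{d}{\kappa}$.

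To prove the drift estimate, bound the addition and removal masses separately. For the addition mass, since $\sum_{j \notin S_{t-1}} \vx(j)\,\vv_j\vv_j^\top \preceq \sum_j \vx(j)\,\vv_j\vv_j^\top = \mI_d$ and $\tr(\mA_t^{1/2}) \le \sqrt{d\,\tr(\mA_t)} = \sqrt{d}$, the sampling rule for $j_t$ gives $M\cdot\Pr(j_t \neq \emptyset) \le \sum_{j \notin S_{t-1}}\vx(j) + \alpha\sqrt{d} = \sum_{j\notin S_{t-1}}\vx(j) + \frac{d}{\gamma}$. For the removal mass, first work with the full index set $S_{t-1}$: using $\sum_{i\in S_{t-1}}(1-\vx(i))\,\vv_i\vv_i^\top \preceq \mZ_t$ together with the identity $\inner{\mZ_t}{\mA_t^{1/2}} = \frac{d}{\alpha} + \frac{c_t}{\alpha}\tr(\mA_t^{1/2})$ and the bound $c_t < \alpha\,\lambda_{\min}(\mZ_t) < \alpha$ — this is where the hypothesis $\lambda_{\min}(\mZ_t) < 1$ enters — one gets $\inner{\mZ_t}{\mA_t^{1/2}} \le \frac{d}{\alpha} + \sqrt d \le 2\sqrt d$, and hence $\sum_{i \in S_{t-1}}(1-\vx(i))\big(1 - \alpha\inner{\vv_i\vv_i^\top}{\mA_t^{1/2}}\big) \ge |S_{t-1}| - \sum_i\vx(i) - \frac{2d}{\gamma}$. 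Passing from $S_{t-1}$ to $S'_{t-1}$ drops at most one unit per excluded index, and $|S_{t-1}\setminus S'_{t-1}| \le \frac{4d}{\gamma} + \frac{d}{\kappa}$: by Markov at most $2\alpha\inner{\mZ_t}{\mA_t^{1/2}} \le \frac{4d}{\gamma}$ indices of $S_{t-1}$ violate $\alpha\inner{\vv_i\vv_i^\top}{\mA_t^{1/2}} \le \frac12$, and at most $\inner{\mZ_t}{\mZ_t^{-1}}/\kappa = d/\kappa$ violate $\inner{\vv_i\vv_i^\top}{\mZ_t^{-1}} \le \kappa$ — this last term is the only new contribution coming from the modified subroutine. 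Hence $M\cdot\Pr(i_t \neq \emptyset) \ge |S_{t-1}| - \sum_i\vx(i) - \frac{6d}{\gamma} - \frac{d}{\kappa}$, and subtracting the two bounds yields the displayed drift estimate.

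With the drift estimate in hand, the conclusion follows from the same concentration argument as Theorem~3.12 of \cite{LZ20}. Writing $W_t := |S_t| - \Lambda$, we have unit-bounded increments, $\mathbb{E}[W_t \mid \mathcal{F}_{t-1}] \le (1 - \tfrac1M)\,W_{t-1}$, and $W_0 = |S_0| - \Lambda$ with $|S_0| = \sum_i \mathbf{1}[i \in S_0]$ a sum of independent Bernoullis of mean $\sum_i\vx(i)$. Two facts combine to bound the failure event $\{W_\tau > s\}$, where $s := \delta\sum_i\vx(i) + \frac{5d}{\gamma} + \frac{d}{\kappa}$ (note $\Lambda + s = (1+\delta)\sum_i\vx(i) + \frac{12d}{\gamma} + \frac{2d}{\kappa}$): (i) a Bernstein bound on $|S_0|$ — since $s \ge \max\{\delta\sum_i\vx(i),\ \tfrac{d}{\min\{\gamma,\kappa\}}\}$ we have $s^2 \gtrsim \big(\delta\sum_i\vx(i)\big)\cdot\frac{d}{\min\{\gamma,\kappa\}}$, so $\Pr\big[|S_0| \ge \sum_i\vx(i) + s\big] \le \exp\big(-\Omega\big(\min\{s^2/\!\sum_i\vx(i),\ s\}\big)\big) \le \exp\big(-\Omega\big(\tfrac{\delta d}{\min\{\gamma,\kappa\}}\big)\big)$, the additive offset $\Theta(\tfrac{d}{\min\{\gamma,\kappa\}})$ in $s$ being exactly what upgrades the usual $\delta^2$ in the exponent to $\delta$; and (ii) a negative-drift martingale argument (as in \cite{LZ20}) showing that, conditioned on a start with $W_0$ below threshold, the probability that $W$ ever climbs by a further $\Omega(\tfrac{d}{\gamma} + \tfrac{d}{\kappa})$ within the at most $\tfrac{2M}{\gamma} + \tfrac{2M}{\eps p}$ iterations is $\exp(-\Omega(\tfrac{d}{\gamma} + \tfrac{d}{\kappa}))$, which is also $\le \exp(-\Omega(\tfrac{\delta d}{\min\{\gamma,\kappa\}}))$ since $\delta \le 1$. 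A union bound over (i) and (ii) gives the claim.

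The step I expect to be the main obstacle is the removal-mass estimate in the presence of the new restriction $\inner{\vv_i\vv_i^\top}{\mZ_t^{-1}} \le \kappa$ defining $S'_{t-1}$: one must verify that it removes only an additive $O(d/\kappa)$ from the removal mass (not a multiplicative factor), which is precisely what the trace identity $\sum_{i\in S_{t-1}}\inner{\vv_i\vv_i^\top}{\mZ_t^{-1}} = \inner{\mZ_t}{\mZ_t^{-1}} = d$ delivers, and that the estimate $\inner{\mZ_t}{\mA_t^{1/2}} = O(\sqrt d)$ — and with it the whole drift bound — still goes through using only the hypothesis $\lambda_{\min}(\mZ_t) < 1$. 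Once the drift estimate is established in the displayed form, everything else is identical to \cite{LZ20} and only accounts for the extra $d/\kappa$ terms and the appearance of $\min\{\gamma,\kappa\}$ in the exponent.
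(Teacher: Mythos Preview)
Your proposal is correct and follows essentially the same approach as the paper's proof sketch: establish a one-step drift bound on $|S_t|$ by separately bounding the addition and removal masses, with the only new ingredient being the control of $|S_{t-1}\setminus S'_{t-1}|$ via the trace identity $\sum_{i\in S_{t-1}}\inner{\vv_i\vv_i^\top}{\mZ_t^{-1}} = d$ for the $\kappa$ constraint, then defer the concentration argument to~\cite{LZ20}. Your constants are slightly tighter ($7d/\gamma + d/\kappa$ in the drift versus the paper's $11d/\gamma + d/\kappa$), and you derive $\alpha\inner{\mZ_t}{\mA_t^{1/2}} \le 2d/\gamma$ directly from the identity $\alpha\mZ_t = \mA_t^{-1/2} + c_t\mI$ rather than citing Claim~2.14 of~\cite{AZLSW21}, but these are cosmetic differences within the same argument.
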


When $\kappa=1$, the above theorem follows directly from the one-sided spectral rounding result in~\cite{LZ20}. 
With smaller $\kappa$, we are restricting the set of vectors that can be swapped out from the current solution. 
This would increase the chance of not removing a vector, 
and thus increasing the size of the solution. 
Fortunately, we can show that the increase of the solution size can be bounded by an additive $d/\kappa$ term. 
The proof idea is similar to the one in~\cite{LZ20}, and the main difference is a modified bound on the expected change of size. 
We provide a proof sketch in \Cref{a:cost} for completeness.

\subsubsection{Approximation Guarantee} 

The most technical part of the proof is to establish the approximation guarantee.
We follow the analysis in~\cite{LZ21} to conceptually divide the execution of the algorithm into two phases as described in \Cref{ss:previous}.

In the first phase, we show that the minimum eigenvalue will reach $1-2\gamma$ in $O(M/\gamma)$ iterations with high probability, which follows from the spectral rounding result in~\cite{LZ20}.
    
\begin{proposition}[Counterpart of Proposition 4.5 in~\cite{LZ21}] \label{p:terminate1-p}
The probability that the randomized exchange algorithm has terminated successfully within $2M/\gamma$ iterations or there exists $\tau_1 \leq 2M/\gamma$ with $\lambda_{\min}(\mZ_{\tau_1}) \geq 1-2\gamma$ is at least $1-\exp(-\Omega(\sqrt{d}))$. 
\end{proposition}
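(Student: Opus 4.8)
The plan is to reduce this statement to the one-sided spectral rounding analysis of~\cite{LZ20}, invoked with the learning rate $\alpha = \sqrt{d}/\gamma$ (rather than the $\sqrt{d}/\eps$ used for E-design there). The key quantity is the regret-minimization potential associated with the action matrices $\mA_t = (\alpha \mZ_t - c_t \mI)^{-2}$, whose trace-normalization forces $c_t$ to track $\alpha \lambda_{\min}(\mZ_t)$ from below; as long as $\lambda_{\min}(\mZ_t) < 1 - 2\gamma$, the density matrix $\mA_t^{1/2}$ is concentrated on the bottom part of the spectrum of $\mZ_t$, so an exchange step that follows the distributions in the Exchange Subroutine is expected to increase the "regret" potential, hence to push $\lambda_{\min}$ upward. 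First I would check that the modification introduced in this paper — the extra constraint $\inner{\vv_i\vv_i^\top}{\mZ_t^{-1}} \leq \kappa$ defining $S'_t$ — does not hurt the first-phase argument: restricting the set of removable vectors can only make it \emph{less} likely to decrease the minimum eigenvalue, so the drift bound on the potential is unaffected (this is exactly the same phenomenon already exploited for the solution-size bound in \Cref{t:cost-kappa}). With $\alpha = \sqrt{d}/\gamma$ in place of $\sqrt d/\eps$, the per-step expected progress toward $\lambda_{\min} \geq 1 - 2\gamma$ scales so that $O(M/\gamma)$ iterations suffice, where $M = d/\gamma + d^2 + 1$ as in the algorithm.

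The second ingredient is concentration: the potential increments are not a martingale (the action matrix at step $t$ depends on the whole history), so I would reuse the non-martingale concentration inequality from~\cite{LZ20} (their Theorem~3.x / Freedman-type bound) to show that the actual potential stays close to its expectation over the $2M/\gamma$ iterations, giving failure probability $\exp(-\Omega(\sqrt d))$ — note the exponent is $\sqrt d$ and not $\gamma\sqrt d$ here, because in the first phase we only need to reach the coarse threshold $1 - 2\gamma$ (equivalently, the relevant width parameter in the concentration bound is a constant). Combining the drift bound with the concentration bound yields: with probability $1 - \exp(-\Omega(\sqrt d))$, within $2M/\gamma$ steps either the potential has crossed the value certifying $\lambda_{\min}(\mZ_{\tau_1}) \geq 1 - 2\gamma$ for some $\tau_1 \le 2M/\gamma$, or the algorithm has already met its termination condition $(\tr(\mY_t^{-p}))^{1/p} \le (1+\eps)(\tr(\mX^{-p}))^{1/p}$ and stopped. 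This case split (``terminated successfully OR $\lambda_{\min}$ is large'') is handled exactly as in Proposition~4.5 of~\cite{LZ21}: if the algorithm has not stopped, the potential must have had room to grow throughout, so the drift estimate applies for the full $2M/\gamma$ steps.

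I expect the main obstacle to be purely bookkeeping rather than conceptual: namely, verifying that every place in the~\cite{LZ20} analysis where the learning rate $\alpha$ appears, the substitution $\eps \mapsto \gamma$ goes through cleanly — in particular that the ``stay in the regime $\alpha\inner{\vv_i\vv_i^\top}{\mA_t^{1/2}} \le 1/2$'' condition used to define $S'_t$ still captures enough probability mass so that the exchange is non-trivial with adequate probability, and that the target $1 - 2\gamma$ (as opposed to $1-\eps$) is consistent with how $\gamma = \max\{\eps/6, 1/(6p)\}$ is set in the algorithm. Since~\cite{LZ20} proves the statement for $\kappa = 1$ and a general learning rate, and~\cite{LZ21} already isolated the ``first phase'' as a black-box consequence of it, the proof here amounts to (i) citing that result with $\alpha = \sqrt d/\gamma$, (ii) observing $\kappa \le 1$ only helps, and (iii) folding in the termination-condition case split. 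I would therefore keep this proof short, stating it as a corollary of the cited spectral-rounding theorem together with \Cref{obs:prob}, and defer any re-derivation of the drift/concentration estimates to the appendix (as the paper already does for \Cref{t:cost-kappa}).
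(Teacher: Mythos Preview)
Your proposal is correct and matches the paper's approach exactly: the paper omits the proof and simply states that it follows the same argument as Proposition~4.5 in~\cite{LZ21} with the new parameter $\gamma$, remarking that the additional $\kappa$-restriction on $S'_t$ does not affect the argument since removing fewer vectors only helps the minimum eigenvalue. Your identification of the two key points --- the $\eps \mapsto \gamma$ substitution in the~\cite{LZ20} spectral-rounding analysis and the monotonicity observation for the $\kappa$-restricted removal set --- is precisely what the paper relies on.
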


In the second phase, the minimum eigenvalue will be at least $1-5\gamma$ during the next $\Theta\left(\frac{M}{\eps p}\right)$ iterations with good probability.

\begin{proposition}[Counterpart of Proposition 4.6 in~\cite{LZ21}] \label{p:bounded_min-p}
    Suppose $\lambda_{\min}(\mZ_{\tau_1}) \geq 1-2\gamma$ for some $\tau_1$.
    In the randomized exchange algorithm, the probability that $\lambda_{\min}(\mZ_t) \geq 1-5\gamma$ for all $\tau_1 \leq t \leq \tau_1 + \frac{2M}{\eps p}$ is at least $1 -\frac{4M^2}{\eps^2 p^2} \cdot e^{- \Omega(\gamma \sqrt{d})}$. 
\end{proposition}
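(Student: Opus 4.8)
The plan is to control the minimum eigenvalue $\lambda_{\min}(\mZ_t)$ over the window $[\tau_1, \tau_1 + 2M/(\eps p)]$ by tracking the same regret-minimization potential used in~\cite{LZ20,LZ21}, namely a quantity of the form $-\tr((\alpha\mZ_t - c_t\mI)^{-1})$ (equivalently, monitoring how close $\alpha\mZ_t$ stays above the barrier parameter $c_t$), and showing that from the starting condition $\lambda_{\min}(\mZ_{\tau_1}) \ge 1-2\gamma$ it does not drift below $1-5\gamma$ during a window of this length. First I would recall the per-step expected change of the potential from the regret analysis: with the action matrix $\mA_t = (\alpha\mZ_t - c_t\mI)^{-2}$, the exchange step $S_t \gets S_{t-1} + j_t - i_t$ is designed so that $\langle \vv_{j_t}\vv_{j_t}^\top, \mA_t^{1/2}\rangle$ tends to be large and $\langle \vv_{i_t}\vv_{i_t}^\top, \mA_t^{1/2}\rangle$ tends to be small, so that in expectation $\alpha\mZ_t$ moves \emph{up} relative to the sliding lower barrier, i.e. the minimum eigenvalue has a nonnegative expected drift whenever it is below $1$. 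The restriction to $S'_{t-1}$ (both the $\alpha\langle \vv_i\vv_i^\top,\mA_t^{1/2}\rangle \le \tfrac12$ condition and the new $\langle\vv_i\vv_i^\top,\mZ_t^{-1}\rangle\le\kappa$ condition) guarantees the removed vector is ``safe'' — removing it cannot by itself drop $\lambda_{\min}$ by more than a controlled amount, and in particular cannot cause the barrier to be violated in a single step. This is exactly the place where \Cref{obs:prob} and the feasibility of the exchange probabilities (the ``well-defined algorithm'' part) are invoked.

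The second ingredient is concentration. The potential process is not a martingale, but its increments have bounded expectation (the drift computation above) and bounded range; the range bound per step is where the learning rate $\alpha = \sqrt d/\gamma$ enters, giving a per-step fluctuation of order $1/(\alpha\cdot(\text{barrier gap})) = O(1/\sqrt d)$ on the normalized scale, times $\poly$ factors in $1/\gamma$. I would then apply the non-martingale concentration inequality of~\cite{LZ20} (the same tool used to prove the spectral rounding results cited as \Cref{t:cost-kappa} and \Cref{p:terminate1-p}) over a window of length $L = 2M/(\eps p)$: the probability that the potential ever degrades enough to push $\lambda_{\min}$ below $1-5\gamma$ within $L$ steps is at most $L \cdot e^{-\Omega(\gamma\sqrt d)}$ from a union bound over start times of a ``bad excursion,'' and then paying another factor of $L$ for the length of the excursion one is bounding, yielding $L^2 e^{-\Omega(\gamma\sqrt d)} = O\!\big(\tfrac{M^2}{\eps^2 p^2}e^{-\Omega(\gamma\sqrt d)}\big)$, which is the claimed bound. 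The slack between $2\gamma$ (where we start) and $5\gamma$ (the barrier we must not cross) is precisely the room needed to absorb the expected drift plus concentration error over $L$ steps; I would check that the drift over $L$ steps is $O(\gamma)$ and the typical deviation is $o(\gamma)$, so $3\gamma$ of headroom suffices.

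The main obstacle I anticipate is not the concentration bookkeeping but verifying the expected-drift inequality for the \emph{restricted} exchange subroutine — i.e. showing that narrowing the removal candidate set from $S_{t-1}$ to $S'_{t-1}$ (because of the $\kappa$-threshold) does not destroy the property that the minimum-eigenvalue potential has the right one-step expected improvement. One must argue that the vectors excluded from $S'_{t-1}$ collectively carry little probability mass under the original $i_t$-distribution, or that excluding them only helps (removing them is what we were trying to avoid anyway, since large $\langle\vv_i\vv_i^\top,\mZ_t^{-1}\rangle$ means $\vv_i$ is ``load-bearing'' for the small eigenvalues). The cleanest route is: since $\lambda_{\min}(\mZ_t) \ge 1-5\gamma \ge \tfrac12$ on the event we are conditioning on, we have $\mZ_t^{-1} \preceq 2\mI$, which bounds $\langle \vv_i\vv_i^\top,\mZ_t^{-1}\rangle$ in terms of $\|\vv_i\|^2$ and lets us show the excluded mass is $O(\gamma\cdot\text{something small})$; combined with the $\alpha$-based bound on $\mA_t^{1/2}$, the drift of the restricted process matches that of the unrestricted one up to a lower-order term absorbable into the $\gamma$-slack. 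I would handle this by mirroring the corresponding step in the proof of Proposition~4.6 of~\cite{LZ21} and only re-deriving the one inequality that changes due to $\kappa < 1$.
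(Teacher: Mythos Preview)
Your proposal is correct and matches the paper's approach, which simply defers to Proposition~4.6 of~\cite{LZ21} with the new parameter~$\gamma$. The one place you overthink is the ``main obstacle'': the paper dispatches the restriction to $S'_{t-1}$ in a single line by observing that removing \emph{fewer} vectors can only help the minimum eigenvalue --- the removal step contributes the nonpositive part of the one-step drift of the potential, and shrinking the candidate set (with the leftover probability mass going to $i_t=\emptyset$, i.e.\ no removal) only shrinks that negative contribution --- so no quantitative bound on the excluded mass is needed; the monotonicity you mention parenthetically is already the whole argument.
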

    
Both of the proofs of \Cref{p:terminate1-p} and \Cref{p:bounded_min-p} follow same arguments in~\cite{LZ21} (with a new parameter $\gamma$).
We remark that the modified exchange subroutine with a restricted $S'_t$ does not affect these arguments, as removing less vectors only helps to improve the minimum eigenvalue of the solution. 
We omit the proofs and refer the readers to~\cite{LZ21}.  
    
The main technical contribution in this paper is to prove that the $\Phi_p$ objective will improve to at most $(1+\eps)$ times the optimal value during the second phase when the minimum eigenvalue is at least $1-5\gamma$.

\begin{restatable}{theorem}{termination} \label{t:termination-p}
Given $\eps \in (0,1)$, if $p \leq 1/\eps$ and $k \gtrsim \frac{p d}{\eps}$ for some $\eps \in (0,1)$, 
    then the probability that the following three events happen simultaneously during the execution of the randomized exchange algorithm is at most $\exp(-\Omega( \gamma \sqrt{d}))$.
\begin{enumerate} \setlength\itemsep{0mm}
\item $\lambda_{\min}(\mZ_{\tau_1}) \geq 1-2\gamma$ for some $\tau_1$;
\item $\lambda_{\min}(\mZ_t) \geq 1-5\gamma$ for all $\tau_1 \leq t \leq \tau_1 + \frac{2M}{\eps p}$; 
\item the randomized exchange algorithm has not terminated by time $\tau_1 + \frac{2M}{\eps p}$.
\end{enumerate}
\end{restatable}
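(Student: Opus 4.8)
The plan is to follow the two-part strategy of the second-phase analysis in~\cite{LZ21}: first show that, in expectation, one exchange step of the subroutine decreases the potential $\Phi := \tr(\mZ_t^{-p})$ by a definite amount whenever the algorithm has not yet terminated (i.e.\ whenever $\Phi_p(\mZ_t) > (1+\eps)\Phi_p(\mX)$); then show that the accumulated decrease over $\Theta(M/(\eps p))$ steps is concentrated around its expectation via a martingale argument, so that with the stated probability the potential must have dropped below the termination threshold within that many iterations. Conditioning throughout on events (1) and (2), so that $\lambda_{\min}(\mZ_t) \geq 1-5\gamma$ for every step in the window, is what makes both parts go through.

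The expected-progress step is the heart of the matter and where the new ideas relative to~\cite{LZ21} are needed. I would expand $\tr(\mZ_{t+1}^{-p}) - \tr(\mZ_t^{-p})$ where $\mZ_{t+1} = \mZ_t + \vv_{j_t}\vv_{j_t}^\top - \vv_{i_t}\vv_{i_t}^\top$. For $p=1$ this is a rank-two update handled by Sherman--Morrison; for general integer $p$ one must control higher-order terms in the expansion of $(\mZ_t + \Delta)^{-p}$, which is exactly the ``higher moments'' difficulty flagged in the introduction. The key structural inputs are: (a) the restriction to $S'_{t-1}$ in the exchange subroutine, which guarantees $\alpha\inner{\vv_i\vv_i^\top}{\mA_t^{1/2}} \leq \frac12$ and $\inner{\vv_i\vv_i^\top}{\mZ_t^{-1}} \leq \kappa$ for every candidate $i$ to be removed, bounding the perturbation $\Delta$ and keeping $\mZ_{t+1}$ well-conditioned; (b) $\lambda_{\min}(\mZ_t) \geq 1-5\gamma$, so $\mZ_t^{-1}$ and all its powers are bounded; (c) the first-order optimality conditions for the convex program~\eqref{eq:convex-p}, which — as in~\cite{LZ21} — let one relate $\sum_i \vx(i)\inner{\vv_i\vv_i^\top}{\mZ_t^{-p-1}}$ (roughly, the expected ``gain'' from the $j_t$-sampling) to $\tr(\mZ_t^{-p})$ and to $\Phi_p(\mX)$; (d) the connection, via the regret minimization framework of~\cite{AZLSW21}, between the action matrix $\mA_t = (\alpha\mZ_t - c_t\mI)^{-2}$ and the negative gradient of the potential — this is why the sampling probabilities in~\eqref{e:E-distribution} / the exchange subroutine, which are tailored to $\mA_t^{1/2}$, also make first-order progress on $\Phi_p$. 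Combining these, I expect to show $\E[\tr(\mZ_{t+1}^{-p})] \leq \tr(\mZ_t^{-p}) - \Omega\!\big(\frac{\eps p}{M}\big)\cdot \tr(\mZ_t^{-p})$ (or an additive version thereof), with the higher-order terms absorbed provided $\alpha = \sqrt d/\gamma$ and $k \gtrsim pd/\eps$ and $p \leq 1/\eps$; the factor $\frac1p$ in $\gamma = \max\{\eps/6, 1/(6p)\}$ is what keeps $\alpha\inner{\vv_i\vv_i^\top}{\mA_t^{1/2}}$-type quantities and the $p$-th power expansion under control simultaneously.

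For the concentration step, I would set $X_t := \tr(\mZ_t^{-p})$ and track the ``drift-corrected'' process, writing $X_{t+1} - X_t = (X_{t+1} - \E_t[X_{t+1}]) + (\E_t[X_{t+1}] - X_t)$; the second term is the negative drift just bounded, and the first is a martingale difference whose magnitude is controlled by the same $S'_{t-1}$-restriction bounds (each step changes $X_t$ by at most something like $O(\gamma/\sqrt d)$ relative to $X_t$, using $\inner{\vv_i\vv_i^\top}{\mZ_t^{-1}}\leq\kappa$ and $\lambda_{\min}(\mZ_t)\geq 1-5\gamma$). Applying Freedman's inequality (or the non-martingale variant of~\cite{LZ20} if the stopping-time structure requires it) over the $\frac{2M}{\eps p}$ steps, the deviation is $e^{-\Omega(\gamma\sqrt d)}$-small, which, weighed against the per-step drift of order $\frac{\eps p}{M}\cdot(\text{threshold})$ over that many steps, forces $X_\tau$ below $(1+\eps)^p\,\tr(\mX^{-p})$ — contradicting non-termination (event (3)) except with probability $e^{-\Omega(\gamma\sqrt d)}$. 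The main obstacle I anticipate is genuinely the expected-progress computation for $p \geq 2$: controlling $(\mZ_t+\Delta)^{-p}$ to the needed order, and in particular verifying that the cross terms between the added vector $\vv_{j_t}$ and the removed vector $\vv_{i_t}$, and the intrinsic $p$-th-power nonlinearity, do not swamp the first-order gain unless $p \lesssim 1/\eps$ — this is precisely the regime boundary that produces the $dp/\eps$ versus $d/\eps^2$ transition in \Cref{t:main}. A union bound over the $\Theta(M/(\eps p))$ possible values of $\tau_1$ and stopping times accounts for the $M^2/(\eps^2 p^2)$ prefactor appearing in \Cref{p:bounded_min-p}, though in \Cref{t:termination-p} itself (a single conditional-probability statement) only the $e^{-\Omega(\gamma\sqrt d)}$ tail survives.
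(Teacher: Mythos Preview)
Your high-level plan matches the paper's: bound the one-step change of the trace potential, show expected progress when the termination condition fails, then apply Freedman's inequality over $\Theta(M/(\eps p))$ steps. But several concrete pieces are either missing or misattributed, and at least two of them are load-bearing.

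First, you track $\tr(\mZ_t^{-p})$, but the termination condition is on $\tr(\mY_t^{-p})$ in the \emph{original} coordinates, and $\mY_t = \mX^{1/2}\mZ_t\mX^{1/2}$ does not give any simple relation between the two traces for $p>1$. The paper works with $\mY_t$ throughout the objective analysis and only uses $\mZ_t$ for eigenvalue control; mixing the two as you do would break the link between ``drift'' and ``termination''. Second, you propose to ``expand $(\mZ_t+\Delta)^{-p}$'' but do not say how; the paper's one-step bound (Lemma~\ref{l:one-step-p}) needs the Lieb--Thirring inequality after Sherman--Morrison to collapse $\tr((\mY^{-1/2}(\mI-R)\mY^{-1/2})^p)$ to $\tr(\mY^{-p}(\mI-R)^p)$ --- without it the cross terms from the $p$-fold product do not separate into the clean gain/loss form $g_t - l_t$. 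Third, and most importantly, you assign the wrong role to the optimality conditions. The comparison that actually yields positive expected drift is
\[
\inner{\mX}{\mY_t^{-p-1}} \;\geq\; \Big(\tfrac{\tr(\mY_t^{-p})}{\tr(\mX^{-p})}\Big)^{1/p}\tr(\mY_t^{-p}),
\]
which is a H\"older inequality (Lemma~\ref{l:holder}), not a consequence of KKT. The optimality condition $\inner{\mX^{-p-1}}{\vu_i\vu_i^\top}\leq \tfrac1k\tr(\mX^{-p})$ (Lemma~\ref{l:optimality-p}) is instead what tames the \emph{higher-order} terms in both the loss and gain expansions and what bounds the per-step increment $|X_t|\leq O(p/k)\tr(\mX^{-p})$ in the Freedman argument --- your estimate ``$O(\gamma/\sqrt d)$ relative to $X_t$'' is not the right scale. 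Finally, your item (d) about $\mA_t$ being a gradient of the potential is not how the argument runs: the paper only uses the sandwich $\mZ_t^{-1}\preccurlyeq \alpha\mA_t^{1/2}\preccurlyeq \alpha\lambda_{\min}(\mZ_t)\mZ_t^{-1}$ (Lemma~\ref{l:E-A-gamma}) to make the $(1\pm\alpha\inner{\vv\vv^\top}{\mA_t^{1/2}})$ factors in the sampling probabilities cancel against the $(1\pm\inner{\vv\vv^\top}{\mZ_t^{-1}})$ denominators in the linear terms of $g_t$ and $l_t$.
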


\subsubsection{Proof of \Cref{t:main-p-norm}} \label{ss:wrapup}

We put together the statements in this subsection to obtain \Cref{t:main-p-norm}.

\begin{proof}[Proof of \Cref{t:main-p-norm}]
We start with analyzing the approximation guarantee in the theorem.

Firstly, consider the easier case $p \geq 1/\eps$, which implies that $\gamma =\eps/6$.
By \Cref{p:terminate1-p}, there exists $\tau_1 \leq 2M/\gamma$ such that $\lambda_{\min}(\mZ_{\tau_1}) \geq (1-\eps/3)$ with probability $1 - \exp(-\Omega(\sqrt{d}))$. 
We note that $\lambda_{\min}(\mZ_{\tau_1}) \geq (1-\eps/3)$ is equivalent to $\mY_{\tau_1} \succcurlyeq (1-\eps/3) \mX$, which is sufficient to establish \[(\tr(\mY_{\tau_1}^{-p}))^{1/p} \leq (1+\eps) (\tr(\mX^{-p}))^{1/p},\]
i.e., the approximation guarantee in the theorem.
We remark that we do not need the assumption $k \gtrsim d/(\gamma \eps)$ in the proof of this case.

Then, we consider the case of $p \leq 1/\eps$ and define the bad events for the randomized exchange algorithm:
\begin{itemize}
\setlength\itemsep{-1pt}
    \item $B_1$: the algorithm has not terminated successfully within $2M/\gamma$ iterations and $\tau_1 > 2M/\gamma$ where $\tau_1$ is the first time such that $\lambda_{\min}(\mZ_{\tau_1}) \geq 1-2\gamma$.
    \item $B_2$: there exists some $\tau_1 \leq t \leq \tau_1 + \frac{2M}{\eps p}$ such that $\lambda_{\min}(\mZ_t) < 1-5\gamma$.
    \item $B_3$: the termination condition $ (\tr( \mY_t^{-p} ) )^{\frac1p} \leq (1+\eps) (\tr(\mX^{-p}))^{\frac1p} $  is not satisfied for all $\tau_1 \leq t \leq \tau_1 + \frac{2M}{\eps p}$.
\end{itemize}
If none of the bad events happens, then either the algorithm has terminated successfully within $2M/\gamma$ iterations or the termination condition will be satisfied at some time $t \leq \tau_1 + \frac{2M}{\eps p} \leq \frac{2M}{\gamma} + \frac{2M}{\eps p}$.
So, the probability that the randomized exchange algorithm has not satisfied the termination condition within $\frac{2M}{\gamma} + \frac{2M}{\eps p}$ iterations is upper bounded by
\begin{align*}
\Pr[B_1 \cup B_2 \cup B_3] 
& = \Pr[B_1] 
+ \Pr[B_2 \cap \neg B_1] 
+ \Pr[B_3 \cap \neg B_2 \cap \neg B_1]
\\
& \leq O\left(e^{-\Omega(\sqrt{d})} \right)
+ O\bigg( \frac{M^2}{\eps^2 p^2} \cdot e^{-\Omega(\gamma \sqrt{d})}  \bigg)
+ O\left( e^{-\Omega(\gamma \sqrt{d})}  \right)
\\
& \leq O\bigg( \frac{M^2}{\eps^2 p^2} \cdot e^{-\Omega(\gamma \sqrt{d})}  \bigg),
\end{align*}
where $\Pr[B_1]$ is bounded in Proposition~\ref{p:terminate1-p},
$\Pr[B_2 \cap \neg B_1]$ is bounded in Proposition~\ref{p:bounded_min-p},
and $\Pr[B_3 \cap \neg B_2 \cap \neg B_1]$ is bounded in Theorem~\ref{t:termination-p} (note that we need the assumption $p \leq 1/\eps$ and $k \gtrsim p d /\eps$ here).
The termination condition implies the approximation guarantee directly.

Finally, we consider the size of the returned solution.
Note that if $\lambda_{\min}(\mZ_t) \geq 1$ then $\mY_t \succcurlyeq \mX$, which further implies that the termination condition is met at time $t$.
Hence, we can assume $\lambda_{\min}(\mZ_t) < 1$ before the algorithm terminates.
Therefore, we can apply Theorem~\ref{t:cost-kappa} to conclude that the returned solution $S$ satisfies 
$
|S| \leq (1+\eps)k + O\big( \frac{d}{\gamma} + \frac{d}{\kappa} \big) 
$
with probability at least $1-\exp(-\Omega(\frac{\eps d}{\min\{\gamma, \kappa\}}))$.
\end{proof}

\subsection{New Ideas} \label{ss:ideas}

The key in proving \Cref{t:termination-p} is to bound the change of the objective value after an exchange.
For A-design ($p=1$), there is a simple inequality bounding the change of the objective as
\begin{align*}
    & \tr \Big(\big(\mY - \vv \vv^\top + \vw \vw^\top\big)^{-1}\Big) \leq \tr\big(\mY^{-1}\big) + \underbrace{\frac{ \vv^\top \mY^{-2} \vv}{1 - \inner{\vv \vv^\top}{\mY^{-1}}} - \frac{ \vw^\top \mY^{-2} \vw }{1 + \inner{\vw \vw^\top}{\mY^{-1}}}}_{\text{progress}}.
\end{align*}
For general $\Phi_p$-design, the change of the $\Phi_p$ function under rank-two updates is considerably more complicated.
Using Sherman-Morrison formula and Lieb-Thirring inequality, 
we can bound the change of the $\Phi_p$ objective (in fact, the $p$-th power of the $\Phi_p$ objective) as follows:
\begin{multline*}
\tr\left( (\mY + \vw \vw^\top - \vv \vv^\top)^{-p} \right) \\ 
\leq
\tr(\mY^{-p}) + \sum_{i=1}^p \binom{p}{i} \bigg( (-1)^i \frac{(\vw^\top \mY^{-1} \vw)^{i-1} \cdot \vw^\top \mY^{-p-1} \vw}{(1+\vw^\top \mY^{-1} \vw)^i} + \frac{(\vv^\top \mY^{-1} \vv)^{i-1} \cdot \vv^\top \mY^{-p-1} \vv}{(1-\vv^\top \mY^{-1} \vv)^i} \bigg).
\end{multline*}
There are many higher order terms introduced by the $p$-norm,
and dealing with these is the main technical difficulty in this paper.

As discussed in \Cref{r:exponential}, if we use the same algorithm in~\cite{LZ21}, with some careful manipulations including applying H\"older's inequality appropriately, we can achieve $(1+\eps)$-approximation but with the much worst requirement that $k \gtrsim 2^{O(p)} d /\eps$.
The reason is that removing some ``influential'' vectors (even with relatively small probability) from the current solution will blow up the expectation of the change of the objective function due to the higher order terms in the above inequality.

To overcome this issue, we introduce the parameter $\kappa$ and modify the randomized exchange algorithm by restricting those vectors (in $S_t'$) that are allowed to swap out of the current solution.
This helps us to effectively bound those higher order terms in the above inequality about the change of the objective function.
But, with smaller $\kappa$, we are restricting the set of vectors that can be swapped out from the current solution. 
This would increase the chance of not removing a vector, 
and thus increasing the size of the solution. 
Fortunately, we can show that the increase of the solution size can be bounded by an additive $d/\kappa$ term as described in \Cref{t:cost-kappa}.

The analysis for A-design in~\cite{LZ21} contains two parts: (1) bound the expected progress; (2) prove the concentration of the total progress. 
The condition that the minimum eigenvalue is at least $1/4$ is very important in both parts, and the optimality conditions for the convex program~\eqref{eq:convex-p} is crucially used in the concentration argument.
Interestingly, the optimality condition of the convex program~\eqref{eq:convex-p} is also crucial in bounding the expected progress for $\Phi_p$ objective with higher $p$. 
Much effort in this paper is used to get the expectations of the objective value right, while it was relatively easy for D/A-design (when $p=0,1$).

\section{The Analysis of $\Phi_p$ Objective} \label{s:objective}

The goal of this section is to prove \Cref{t:termination-p}. 
Since we are focusing on the case of $p \leq 1/\eps$ in \Cref{t:termination-p}, we can assume without loss of generality that
\[
\gamma = \max\left\{ \frac{\eps}{6}, \frac{1}{6p} \right\} = \frac{1}{6p} \qquad \text{and} \qquad \kappa = \max\left\{ \frac{\eps}{2}, \frac{1}{2p} \right\} = \frac{1}{2p}
\]
in the remaining of this section.
For ease of notation, we also assume the start time of the second phase is $\tau_1 = 1$. 
To analyze progress of the algorithm in terms of the objective function, we upper bound the change of $\tr(\mY_t^{-p})$ 
after a swap using the following lemma, for which we will provide a proof in \Cref{ss:one-step}.

\begin{lemma} \label{l:one-step-p}
Let $\mY \succ 0$ be a $d$-dimensional positive definite matrix and $p \geq 1$ be an integer. For any $\vw \in \R^d$ and $\vv \in \R^d$ such that $\vv^\top \mY^{-1} \vv < 1$, 
\begin{multline*}
\tr\Big( (\mY + \vw \vw^\top - \vv \vv^\top)^{-p} \Big) \\ 
\leq 
\tr(\mY^{-p}) + \sum_{i=1}^p \binom{p}{i} \bigg( (-1)^i \frac{(\vw^\top \mY^{-1} \vw)^{i-1} \cdot \vw^\top \mY^{-p-1} \vw}{(1+\vw^\top \mY^{-1} \vw)^i} + \frac{(\vv^\top \mY^{-1} \vv)^{i-1} \cdot \vv^\top \mY^{-p-1} \vv}{(1-\vv^\top \mY^{-1} \vv)^i} \bigg).
\end{multline*}
\end{lemma}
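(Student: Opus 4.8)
The plan is to split the rank-two update into a rank-one addition followed by a rank-one removal, to control each rank-one step via the Sherman--Morrison formula together with the Lieb--Thirring trace inequality, and finally to transfer the removal term from the intermediate matrix back to $\mY$. Throughout, write $\mY' := \mY + \vw\vw^\top \succeq \mY \succ 0$, so that the left-hand side is $\tr\bigl((\mY'-\vv\vv^\top)^{-p}\bigr)$; note $\mY' - \vv\vv^\top \succeq \mY - \vv\vv^\top \succ 0$ (the last step because $\vv^\top\mY^{-1}\vv<1$) and $\vv^\top\mY'^{-1}\vv \le \vv^\top\mY^{-1}\vv < 1$, so all inverses and both removal steps are legitimate. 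Observe also that, since $\mA\mapsto\tr(\mA^{-p})$ is Loewner-antitone, simply discarding $\vw\vw^\top$ would give $\tr((\mY'-\vv\vv^\top)^{-p}) \le \tr((\mY-\vv\vv^\top)^{-p})$, which is too weak: the $\vw$-terms in the lemma are \emph{negative} and this genuine saving must be kept.

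For a single rank-one step I would factor out $\mY^{1/2}$. By Sherman--Morrison, $(\mY+\vw\vw^\top)^{-1} = \mY^{-1/2}(\mI+\tilde\vw\tilde\vw^\top)^{-1}\mY^{-1/2}$ with $\tilde\vw := \mY^{-1/2}\vw$, hence $(\mY+\vw\vw^\top)^{-p} = \bigl(\mY^{-1/2}(\mI+\tilde\vw\tilde\vw^\top)^{-1}\mY^{-1/2}\bigr)^p$. The Lieb--Thirring inequality (the tool that ``linearizes'' the noncommuting $p$-th power) gives $\tr\bigl((\mY+\vw\vw^\top)^{-p}\bigr) \le \tr\bigl(\mY^{-p}(\mI+\tilde\vw\tilde\vw^\top)^{-p}\bigr)$, and since $\mI+\tilde\vw\tilde\vw^\top$ is a rank-one perturbation of $\mI$ its $-p$-th power is again one, namely $\mI + \tfrac{(1+a)^{-p}-1}{a}\tilde\vw\tilde\vw^\top$ with $a:=\vw^\top\mY^{-1}\vw = \|\tilde\vw\|^2$. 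Thus the right-hand side is $\tr(\mY^{-p}) + \tfrac{(1+a)^{-p}-1}{a}\,\vw^\top\mY^{-p-1}\vw$, and the elementary identity $\tfrac{(1+a)^{-p}-1}{a} = \sum_{i=1}^p\binom pi(-1)^i\tfrac{a^{i-1}}{(1+a)^i}$ shows this is exactly $\tr(\mY^{-p})$ plus the $\vw$-terms of the lemma. Running the identical argument with $-\vv\vv^\top$ on $\mY'$ (here $\mI - \tilde\vv\tilde\vv^\top\succ0$ with $\tilde\vv:=\mY'^{-1/2}\vv$, $\|\tilde\vv\|^2 = \vv^\top\mY'^{-1}\vv =: b' < 1$, and $(\mI-\tilde\vv\tilde\vv^\top)^{-p} = \mI + \tfrac{(1-b')^{-p}-1}{b'}\tilde\vv\tilde\vv^\top$) yields $\tr\bigl((\mY'-\vv\vv^\top)^{-p}\bigr) \le \tr(\mY'^{-p}) + \tfrac{(1-b')^{-p}-1}{b'}\,\vv^\top\mY'^{-p-1}\vv$. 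Chaining these two bounds controls $\tr\bigl((\mY+\vw\vw^\top-\vv\vv^\top)^{-p}\bigr)$ by $\tr(\mY^{-p})$, plus the desired $\vw$-terms, plus $\psi(b')\,\vv^\top\mY'^{-p-1}\vv$, where $\psi(x) := \tfrac{(1-x)^{-p}-1}{x} = \sum_{k=1}^p(1-x)^{-k}$.

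What remains is the transfer inequality $\psi(b')\,\vv^\top\mY'^{-p-1}\vv \le \psi(b)\,\vv^\top\mY^{-p-1}\vv$ with $b := \vv^\top\mY^{-1}\vv$; granting it and re-expanding $\psi(b)\,\vv^\top\mY^{-p-1}\vv = \sum_{i=1}^p\binom pi\tfrac{b^{i-1}\,\vv^\top\mY^{-p-1}\vv}{(1-b)^i}$ recovers the statement of the lemma exactly. I expect this to be the main obstacle. It is genuinely a joint statement about the scalar prefactor and the moment: $\psi$ is increasing and $b' = b - \tfrac{(\vv^\top\mY^{-1}\vw)^2}{1+a} \le b$ because $\mY'^{-1}\preceq\mY^{-1}$, but $\vv^\top\mY'^{-p-1}\vv \le \vv^\top\mY^{-p-1}\vv$ is \emph{false} in general when $p+1\ge2$ (the map $\mA\mapsto\vv^\top\mA^{-p-1}\vv$ is not Loewner-antitone), so one cannot compare factor-by-factor. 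To prove it I would use Sherman--Morrison once more, $\mY'^{-1} = \mY^{-1} - \tfrac{1}{1+a}\mY^{-1}\vw\vw^\top\mY^{-1}$, expand $\vv^\top\mY'^{-p-1}\vv$ and $b'$ as polynomials in the scalars $\vv^\top\mY^{-s}\vv$, $\vw^\top\mY^{-s}\vw$ and $\vv^\top\mY^{-s}\vw$ for $1\le s\le p+1$, and reduce to a polynomial inequality in these scalars verifiable via the log-convexity of $s\mapsto\vv^\top\mY^{-s}\vv$ (these are the moments $s\mapsto\int x^s\,d\mu(x)$ of a finite positive measure) together with Cauchy--Schwarz, $(\vv^\top\mY^{-s}\vw)^2 \le (\vv^\top\mY^{-s}\vv)(\vw^\top\mY^{-s}\vw)$; an alternative is to show $\tfrac{d}{dt}\bigl[\psi(\vv^\top\mA_t^{-1}\vv)\,\vv^\top\mA_t^{-p-1}\vv\bigr]\le0$ along $\mA_t := \mY + t\vw\vw^\top$, $t\in[0,1]$. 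A second possible route, avoiding the transfer entirely, is to apply Lieb--Thirring once to the full rank-two-perturbed matrix $\mY^{1/2}(\mI+\tilde\vw\tilde\vw^\top-\tilde\vv\tilde\vv^\top)\mY^{1/2}$ and then bound the resulting at-most-rank-two correction by the decoupled sum, trading the transfer inequality for a $2\times2$ computation; either way, some inequality comparing the decoupled bound with the true rank-two change appears unavoidable, and I expect that to be where the real work lies.
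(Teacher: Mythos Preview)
Your approach coincides with the paper's: split into a rank-one addition then a rank-one removal, apply Sherman--Morrison and Lieb--Thirring to each, and finally transfer the removal bound from $\mY'=\mY+\vw\vw^\top$ (the paper's $\mY_1$) back to $\mY$. The paper handles the transfer in a single clause, ``Notice that $\mY_1=\mY+\vw\vw^\top\succcurlyeq\mY$ \ldots, thus it holds that \ldots,'' simply replacing every $\mY_1$ in the removal term by $\mY$; so the step you single out as the main obstacle is exactly the one the paper treats as immediate.

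Your suspicion is correct, and the situation is worse than you anticipate: the transfer inequality is \emph{false}, and so is the lemma as stated. Take $p=1$, $\mY=\diag(1,100)$, $\vw=(1,1)^\top/\sqrt2$, $\vv=(0,1)^\top$. Then $\vv^\top\mY^{-1}\vv=\tfrac1{100}<1$; the left side is $\tr\bigl((\mY+\vw\vw^\top-\vv\vv^\top)^{-1}\bigr)=\tfrac{101}{149}$, while the right side is $\tfrac{101}{100}-\tfrac{10001}{30100}+\tfrac{1}{9900}$, and a direct check gives $\tfrac{101}{149}>\tfrac{101}{100}-\tfrac{10001}{30100}+\tfrac{1}{9900}$ (the gap is about $10^{-5}$). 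For $p=1$ the Lieb--Thirring steps are equalities, so the failure of the lemma here is precisely the failure of the transfer $\tfrac{\vv^\top\mY_1^{-2}\vv}{1-\vv^\top\mY_1^{-1}\vv}\le\tfrac{\vv^\top\mY^{-2}\vv}{1-\vv^\top\mY^{-1}\vv}$ (numerically $1.11\times10^{-4}$ versus $1.01\times10^{-4}$). Neither of your two proposed routes to the transfer can therefore succeed, and the paper's argument has a genuine gap at exactly the point you isolated.
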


In the randomized exchange algorithm, we swap vectors $\vu_{i_t}$ and $\vu_{j_t}$ in each iteration where $\vu_{i_t}$ is in the current solution. 
Thus, $\vu_{i_t}^\top \mY_t^{-1} \vu_{i_t} \leq 1$ always holds, and in fact it will be clear later that $\vu_{i_t}^\top \mY_t^{-1} \vu_{i_t}$ is strictly less than 1. 
Hence, \Cref{l:one-step-p} can be applied repeatedly to obtain that, for any $\tau \geq 1$,
\begin{align} \label{eq:trace-bound-p}
    \tr(\mY_{\tau+1}^{-p}) - \tr(\mY_1^{-p})
    & \leq \sum_{t=1}^\tau \bigg( \underbrace{\sum_{i=1}^p \binom{p}{i} \frac{(\vu_{i_t}^\top \mY_t^{-1} \vu_{i_t})^{i-1} \cdot \vu_{i_t}^\top \mY_t^{-p-1} \vu_{i_t}}{(1-\vu_{i_t}^\top \mY_t^{-1} \vu_{i_t})^i}}_{\rm loss} \\
    & \qquad \qquad - \underbrace{\sum_{i=1}^p \binom{p}{i}  (-1)^{i+1} \frac{(\vu_{j_t}^\top \mY_t^{-1} \vu_{j_t})^{i-1} \cdot \vu_{j_t}^\top \mY_t^{-p-1} \vu_{j_t}}{(1+\vu_{j_t}^\top \mY_t^{-1} \vu_{j_t})^i}}_{\rm gain} \bigg). \nonumber
\end{align}

We define gain $g_t$, loss $l_t$, and progress $\Gamma_t$ in the $t$-th iteration as follows
\begin{align*}
& g_t := \sum_{i=1}^p \binom{p}{i}  (-1)^{i+1} \frac{(\vu_{j_t}^\top \mY_t^{-1} \vu_{j_t})^{i-1} \cdot \vu_{j_t}^\top \mY_t^{-p-1} \vu_{j_t}}{(1+\vu_{j_t}^\top \mY_t^{-1} \vu_{j_t})^i}, \\
& l_t := \sum_{i=1}^p \binom{p}{i} \frac{(\vu_{i_t}^\top \mY_t^{-1} \vu_{i_t})^{i-1} \cdot \vu_{i_t}^\top \mY_t^{-p-1} \vu_{i_t}}{(1-\vu_{i_t}^\top \mY_t^{-1} \vu_{i_t})^i}, \\ 
& \Gamma_t := g_t - l_t.
\end{align*}

In Section~\ref{ss:expectation}, we will prove that if $\vx$ is a fractional optimal solution to \eqref{eq:convex-p} and $\mZ_t$ has lower-bounded minimum eigenvalue and the objective value of the current solution is far from optimal, then the expected progress in the $t$-th iteration is large. 
Then, in Section~\ref{ss:concentration}, we will prove that the total progress is concentrated around its expectation. 
Finally, we complete the proof of Theorem~\ref{t:termination-p} in Section~\ref{ss:termination-p}.

\subsection{Change of Objective Value in One Step} \label{ss:one-step}

In~\cite{LZ21}, a rank-two update formula is used to compute the change of the objective value in one step when $p=1$.
For general $\Phi_p$-design, the rank-two update formula becomes considerably more complicated, and instead we do the update in two smaller steps:
We first use a rank-one update to add $\vu_{j_t}$ to the current solution, then use another rank-one update to remove $\vu_{i_t}$ from the current solution.

\begin{proof}[Proof of \Cref{l:one-step-p}]
Let $\mY_1 = \mY + \vw \vw^\top$. By Sherman-Morrison formula (\Cref{l:Sherman-Morrison}), it holds that
\[
\tr(\mY_1^{-p}) = \tr\left( \Big( \mY^{-1} - \frac{\mY^{-1} \vw \vw^\top \mY^{-1}}{1 + \vw^\top \mY^{-1} \vw} \Big)^p \right) = \tr\left( \Big( \mY^{-1/2} \Big( \mI - \frac{\mY^{-1/2} \vw \vw^\top \mY^{-1/2}}{1 + \vw^\top \mY^{-1} \vw }\Big) \mY^{-1/2} \Big)^p \right).
\]
Then, we can apply Lieb-Thirring inequality (\Cref{l:Lieb-Thirring}) to show that
\[
\tr(\mY_1^{-p}) \leq \tr\left( \mY^{-p/2} \Big( \mI - \frac{\mY^{-1/2} \vw \vw^\top \mY^{-1/2}}{1 + \vw^\top \mY^{-1} \vw }\Big)^p \mY^{-p/2} \right) = \tr\left( \mY^{-p} \Big( \mI - \frac{\mY^{-1/2} \vw \vw^\top \mY^{-1/2}}{1 + \vw^\top \mY^{-1} \vw }\Big)^p \right).
\]
Expanding by the binomial theorem, 
\begin{align}
\tr(\mY_1^{-p}) & \leq \sum_{i=0}^p (-1)^i \binom{p}{i} \tr\left( \mY^{-p} \Big( \frac{\mY^{-1/2} \vw \vw^\top \mY^{-1/2}}{1 + \vw^\top \mY^{-1} \vw} \Big)^i\right) \nonumber \\
& = \tr(\mY^{-p}) + \sum_{i=1}^p (-1)^i \binom{p}{i} \frac{(\vw^\top \mY^{-1} \vw)^{i-1} \cdot \vw^\top \mY^{-p-1} \vw}{(1+\vw^\top \mY^{-1} \vw)^i}. \label{eq:add}
\end{align}

For $\mY_2 = \mY_1 - \vv \vv^\top$, we can apply similar argument to show that
\[
\tr(\mY_2^{-p}) \leq \tr(\mY_1^{-p}) + \sum_{i=1}^p \binom{p}{i} \frac{(\vv^\top \mY_1^{-1} \vv)^{i-1} \cdot \vv^\top \mY_1^{-p-1} \vv}{(1-\vv^\top \mY_1^{-1} \vv)^i}.
\]
Notice that $\mY_1 = \mY + \vw \vw^\top \succcurlyeq \mY$ and $\vv^\top \mY^{-1} \vv < 1$, thus it holds that
\begin{align} \label{eq:remove}
\tr(\mY_2^{-p}) \leq \tr(\mY_1^{-p}) + \sum_{i=1}^p \binom{p}{i} \frac{(\vv^\top \mY^{-1} \vv)^{i-1} \cdot \vv^\top \mY^{-p-1} \vv}{(1-\vv^\top \mY^{-1} \vv)^i}.
\end{align}
The lemma follows by combining \eqref{eq:add} and \eqref{eq:remove}.
\end{proof}

\begin{remark} \label{r:real_p}
Lemma~\ref{l:one-step-p} can be generalized to all real $p \geq 1$ by invoking Newton's generalized binomial theorem in the proof. 
To guarantee the convergence of the generalized binomial theorem, we need to ensure a stronger condition $\vv^\top \mY^{-1} \vv \leq \frac12$.
This is not an issue for our application, as our algorithm always removes vectors from the restricted set $S'_t$, which guarantees that $\vv^\top \mY^{-1} \vv \leq \frac12$ is satisfied.
Given the new version of Lemma~\ref{l:one-step-p} with real $p$, we can generalize the main result in this paper (i.e., Theorem~\ref{t:main}) to all real $p \geq 1$ with essentially the same analysis.
\end{remark}

\subsection{Expected Progress} \label{ss:expectation}

To analyze the expected progress, we need to use the following two lemmas. The first one is an implication of the lower-bounded minimum eigenvalue condition, which is an analog of Lemma 4.13 in~\cite{LZ21}. We provide a proof in \Cref{a:proofs-sec-term} for completeness.

\begin{lemma} \label{l:E-A-gamma}
For $\gamma \leq \frac16$, 
if $\mZ_t \succcurlyeq (1-5\gamma) \mI$, then 
\begin{align*}
    \inner{\vv_i \vv_i^\top}{\mZ_t^{-1}} \leq \alpha \inner{\vv_i \vv_i^\top}{\mA_t^{\frac12}} \leq \alpha \lambda_{\min}(\mZ_t) \inner{\vv_i \vv_i^\top}{\mZ_t^{-1}} \qquad \forall i \in [n].
\end{align*}
\end{lemma}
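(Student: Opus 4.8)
The plan is to prove both inequalities as Loewner inequalities between functions of $\mZ_t$ and then read them off eigenvalue by eigenvalue; since every matrix in sight is a function of $\mZ_t$, the claimed inequalities for each individual $\vv_i$ follow from the corresponding matrix inequalities. Recall from the exchange subroutine that $\mA_t = (\alpha \mZ_t - c_t \mI)^{-2}$, where $c_t$ is the unique scalar making $\mA_t \succ 0$ and $\tr(\mA_t) = 1$; in particular $\alpha \mZ_t - c_t \mI \succ 0$, so $c_t < \alpha \lambda_{\min}(\mZ_t)$ and $\mA_t^{1/2} = (\alpha \mZ_t - c_t \mI)^{-1}$. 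Write $\lambda_1 = \lambda_{\min}(\mZ_t) \leq \dots \leq \lambda_d$ for the eigenvalues of $\mZ_t$. The first asserted inequality is implied by $\mZ_t^{-1} \preceq \alpha (\alpha \mZ_t - c_t \mI)^{-1}$, which is equivalent to $\tfrac{1}{\lambda} \leq \tfrac{\alpha}{\alpha \lambda - c_t}$ for every eigenvalue $\lambda$ of $\mZ_t$, i.e.\ (clearing the positive denominators) simply $c_t \geq 0$. The second asserted inequality is implied by $\alpha (\alpha \mZ_t - c_t \mI)^{-1} \preceq \alpha \lambda_1 \mZ_t^{-1}$, which is equivalent to $\tfrac{1}{\alpha \lambda - c_t} \leq \tfrac{\lambda_1}{\lambda}$ for every eigenvalue $\lambda$, i.e.\ $\lambda_1 c_t \leq \lambda(\alpha \lambda_1 - 1)$; since $\lambda \geq \lambda_1 > 0$, this follows from $c_t \leq \alpha \lambda_1 - 1$ provided $\alpha \lambda_1 \geq 1$.

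So the whole lemma reduces to the two-sided bound $0 \leq c_t \leq \alpha \lambda_{\min}(\mZ_t) - 1$ together with $\alpha \lambda_{\min}(\mZ_t) \geq 1$. For the latter fact I would substitute $\alpha = \sqrt{d}/\gamma$ and $\lambda_{\min}(\mZ_t) \geq 1 - 5\gamma$ to get $\alpha \lambda_{\min}(\mZ_t) \geq \tfrac{\sqrt{d}}{\gamma}(1 - 5\gamma) = \sqrt{d}\,(\tfrac{1}{\gamma} - 5) \geq \sqrt{d} \geq 1$, where the hypothesis $\gamma \leq \tfrac16$ is used (and this is the only place it is needed).

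For the upper bound on $c_t$, I would isolate the smallest-eigenvalue term of the trace: the eigenvalues $(\alpha \lambda_j - c_t)^{-2}$ of $\mA_t$ are nonnegative and sum to $1$, so $(\alpha \lambda_1 - c_t)^{-2} \leq 1$, and since $\alpha \lambda_1 - c_t > 0$ this gives $\alpha \lambda_1 - c_t \geq 1$, i.e.\ $c_t \leq \alpha \lambda_{\min}(\mZ_t) - 1$. For the lower bound I would argue by contradiction: if $c_t < 0$ then $\alpha \lambda_j - c_t > \alpha \lambda_j \geq \alpha \lambda_1 \geq \sqrt{d}$ for all $j$, hence $\tr(\mA_t) = \sum_{j=1}^d (\alpha \lambda_j - c_t)^{-2} < d \cdot (\sqrt{d})^{-2} = 1$, contradicting $\tr(\mA_t) = 1$; therefore $c_t \geq 0$. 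Feeding these two bounds into the eigenvalue reductions of the previous paragraph completes the proof.

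This is essentially bookkeeping once the reductions are set up, so I do not anticipate a genuine obstacle; the only thing requiring care is ordering the steps so that the estimate $\alpha \lambda_{\min}(\mZ_t) \geq \sqrt{d} \geq 1$ is established first, since it is invoked both in the contradiction yielding $c_t \geq 0$ and in the eigenvalue reduction of the second inequality.
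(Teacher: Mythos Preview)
Your proposal is correct and follows essentially the same route as the paper: both arguments hinge on establishing $c_t \geq 0$ and $\alpha\lambda_{\min}(\mZ_t) - c_t \geq 1$, and both extract these from the trace constraint $\tr(\mA_t)=1$ together with $\alpha\lambda_{\min}(\mZ_t)\geq\sqrt{d}$. The only cosmetic differences are that the paper obtains $c_t\geq 0$ directly (from $\alpha\lambda_{\min}(\mZ_t)-c_t\leq\sqrt{d}$) rather than by contradiction, and for the second inequality it bounds the ratio $\inner{\vv_i\vv_i^\top}{\mA_t^{1/2}}/\inner{\vv_i\vv_i^\top}{\mZ_t^{-1}}$ by a mediant-type argument instead of your Loewner comparison; since all matrices are functions of $\mZ_t$, the two formulations are equivalent.
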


The other one is an implication of the optimality condition of \eqref{eq:convex-p}.
The proof of the lemma is similar to the one in~\cite{LZ21} for A-design, we include it in \Cref{a:proofs-sec-term} for completeness. 

\begin{restatable}{lemma}{optp} \label{l:optimality-p}
Let $\vx \in [0,1]^n$ be an optimal fractional solution of the convex programming relaxation~\eqref{eq:convex-p} for the $p$-norm problem.
Then, for each $1 \leq i \leq n$ with $0 < \vx(i) < 1$,  
\[
\inner{\mX^{-p-1}}{\vu_i \vu_i^\top} \leq \frac{1}{k} \cdot \tr(\mX^{-p}).
\]
\end{restatable}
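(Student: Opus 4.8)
The plan is to write down the KKT (first-order optimality) conditions for the convex program~\eqref{eq:convex-p} and read off the stated inequality from the constraint on $\sum_i \vx(i)$. First I would observe that since $\Phi_p(\mA) = (\frac1d \tr(\mA^{-p}))^{1/p}$ is a monotone function of $\tr(\mA^{-p})$, minimizing $\Phi_p$ is equivalent to minimizing the (still convex in $\mA$, hence in $\vx$) function $f(\vx) := \tr\big((\sum_i \vx(i)\,\vu_i\vu_i^\top)^{-p}\big)$, so I may apply Lagrangian optimality to $f$ directly. The gradient computation is the key identity: for $\mX = \sum_i \vx(i)\,\vu_i\vu_i^\top$, one has $\frac{\partial}{\partial \vx(i)} \tr(\mX^{-p}) = -p\,\tr(\mX^{-p-1}\vu_i\vu_i^\top) = -p\,\inner{\mX^{-p-1}}{\vu_i\vu_i^\top}$, using $\frac{d}{dt}\tr(\mM(t)^{-p}) = -p\,\tr(\mM(t)^{-p-1}\mM'(t))$ for symmetric $\mM(t)\succ0$ (this in turn follows from $\frac{d}{dt}\mM^{-1} = -\mM^{-1}\mM'\mM^{-1}$ together with cyclicity of trace, or from the integral representation of $\mM^{-p}$).

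Next I would introduce a Lagrange multiplier $\mu \geq 0$ for the budget constraint $\sum_i \vx(i) \leq k$ and multipliers for the box constraints $0 \le \vx(i) \le 1$. The stationarity condition at the optimum reads $-p\,\inner{\mX^{-p-1}}{\vu_i\vu_i^\top} + \mu - \eta_i + \theta_i = 0$ for each $i$, where $\eta_i \ge 0$ is the multiplier for $\vx(i) \ge 0$ and $\theta_i \ge 0$ for $\vx(i) \le 1$, with complementary slackness. For a coordinate $i$ with $0 < \vx(i) < 1$, both box constraints are inactive, so $\eta_i = \theta_i = 0$, giving $\inner{\mX^{-p-1}}{\vu_i\vu_i^\top} = \mu/p$. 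To pin down $\mu$, I would multiply the stationarity relation by $\vx(i)$, sum over all $i$, and use complementary slackness: the terms $\eta_i \vx(i)$ vanish, and $\sum_i \theta_i \vx(i) = \sum_i \theta_i$ (only active on $\vx(i)=1$). This yields $p\,\tr(\mX^{-p}) = \sum_i p\,\vx(i)\inner{\mX^{-p-1}}{\vu_i\vu_i^\top} = \mu \sum_i \vx(i) + \sum_i \theta_i \vx(i) \ge \mu \sum_i \vx(i)$; combined with $\sum_i \vx(i) \le k$ (and noting $\mu \ge 0$, $\theta_i \ge 0$), we get $\mu \le \mu \cdot \frac{\sum_i \vx(i)}{k} \cdot \frac{k}{\sum_i \vx(i)}$... more carefully: $\mu \sum_i \vx(i) \le p\,\tr(\mX^{-p})$ and if the budget is tight ($\sum_i \vx(i) = k$, which holds at optimum since $f$ is decreasing in each $\vx(i)$) this gives $\mu \le \frac{p}{k}\tr(\mX^{-p})$. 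Hence for $0 < \vx(i) < 1$, $\inner{\mX^{-p-1}}{\vu_i\vu_i^\top} = \mu/p \le \frac1k \tr(\mX^{-p})$, as claimed.

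The main obstacle I anticipate is not the Lagrangian bookkeeping but justifying two analytic points cleanly: (i) that the budget constraint is tight at the optimum — since $\mX \succ 0$ is required for the objective to be finite (the feasible region's relevant part has $\mX$ invertible, guaranteed because $k \ge d$ and the $\vu_i$ span $\R^d$), and since increasing any $\vx(i)$ weakly increases $\mX$ in the Loewner order and hence weakly decreases $\tr(\mX^{-p})$, any slack in the budget could be spent to improve (or not worsen) the objective, so we may assume $\sum_i \vx(i) = k$; and (ii) differentiability of $f$ at the optimum, which holds because $\mX \succ 0$ so $\mA \mapsto \tr(\mA^{-p})$ is smooth in a neighborhood. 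With these in hand the argument is a direct transcription of the A-design case ($p=1$) from~\cite{LZ21}, and I would simply remark that the only change is the exponent $p+1$ in place of $2$ in the gradient formula. I would place the full details in \Cref{a:proofs-sec-term} as the excerpt indicates.
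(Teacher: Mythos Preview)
Your proof is correct and follows essentially the same KKT/Lagrangian approach as the paper. The only notable difference is that you work with the equivalent objective $\tr(\mX^{-p})$ rather than $(\tr(\mX^{-p}))^{1/p}$ and bound the multiplier by multiplying the stationarity condition by $\vx(i)$ and summing, whereas the paper keeps the $1/p$-th power, introduces an auxiliary variable $\mX$ with a matrix dual $\mY$, and reads off $\mu k + \sum_i \beta_i^+ = (\tr(\mX^{-p}))^{1/p}$ from strong duality; your route is slightly more elementary and avoids invoking Slater's condition explicitly. One small cleanup: instead of arguing budget tightness via monotonicity (which shows only that \emph{some} optimum has $\sum_i \vx(i)=k$), you can simply note that either $\mu=0$, in which case the claimed inequality is trivial, or $\mu>0$ and complementary slackness forces $\sum_i \vx(i)=k$.
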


Now, we are ready to lower bound the expected progress. 
We will first handle the expected loss and expected gain separately in \Cref{l:loss-p} and \Cref{l:gain-p}. Then, combine the two parts to lower bound the expected progress in \Cref{l:p-exp-progress}.
For simplicity, we denote $\E_t[ \cdot ]$ as the conditional expectation given what had happened up to time $t$, that is, $\E[ \cdot \mid S_{t-1}]$.

\subsubsection{Expected Loss}

The minimum eigenvalue lower bound (\Cref{l:E-A-gamma}), the optimality condition (\Cref{l:optimality-p}), and the introduction of the new parameter $\kappa$ in the randomized exchange algorithm are all crucial in the following lemma.

\begin{lemma}[Expected Loss] \label{l:loss-p}
Let $S_{t-1}$ be the solution set at time $t$ and $\mZ_t = \sum_{i \in S_{t-1}} \vv_i \vv_i^\top$ for $1 \leq t \leq \tau$.
Suppose $\vx$ is an optimal solution of \eqref{eq:convex-p}, $\lambda_{\min}(\mZ_t) \in [1-5\gamma,1)$, $\gamma = 1/6p$, and $\kappa = 1/2p$.
Then
\[
\E_t[l_t] \leq \frac{p}{M} \Big( \tr(\mY_t^{-p}) - \inner{\mX_{S_{t-1}}}{\mY_t^{-p-1}} \Big) + O\Big(\frac{p^2 d}{kM} \Big) \cdot \tr(\mX^{-p}), 
\]
where we denote $\mX_S := \sum_{i \in S} \vx(i) \vu_i \vu_i^\top$ for any set $S \subseteq [n]$. 
\end{lemma}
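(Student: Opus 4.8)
The plan is to bound $\E_t[l_t]$ by unpacking the definition of $l_t$ term by term in the index $i$ of the binomial sum, then taking expectation over the random choice of $i_t$ using the sampling distribution from the exchange subroutine. Recall that $\Pr(i_t = i) = \frac{1-\vx(i)}{M}(1 - \alpha\inner{\vv_i\vv_i^\top}{\mA_t^{1/2}})$ for $i \in S'_{t-1}$, and crucially that $i$ ranges only over the \emph{restricted} set $S'_{t-1}$, on which $\inner{\vv_i\vv_i^\top}{\mZ_t^{-1}} \leq \kappa = 1/2p$ and $\alpha\inner{\vv_i\vv_i^\top}{\mA_t^{1/2}} \leq \frac12$. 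First I would rewrite everything in terms of the original vectors $\vu_i$ and the matrix $\mY_t = \mX^{1/2}\mZ_t\mX^{1/2}$: note $\vu_i^\top \mY_t^{-1}\vu_i = \vv_i^\top\mZ_t^{-1}\vv_i = \inner{\vv_i\vv_i^\top}{\mZ_t^{-1}}$, so the restriction says each such quantity is at most $\kappa$, which makes the denominators $(1-\vu_{i_t}^\top\mY_t^{-1}\vu_{i_t})^i$ comparable to $1$ — specifically $(1-\kappa)^{-i} \leq (1-1/2p)^{-p} = O(1)$. This is exactly why $\kappa$ was introduced: it tames all the higher-order terms $i \geq 2$.

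Next I would split the sum $\sum_{i=1}^p \binom{p}{i}(\ldots)$ into the $i=1$ term and the tail $i \geq 2$. For the $i=1$ term, $\binom{p}{1} = p$ and the term is $\frac{\vu_{i_t}^\top\mY_t^{-p-1}\vu_{i_t}}{1-\vu_{i_t}^\top\mY_t^{-1}\vu_{i_t}}$; taking expectation over $i_t$, the factor $(1-\vx(i))(1-\alpha\inner{\vv_i\vv_i^\top}{\mA_t^{1/2}})$ from the sampling probability combines with the denominator. The intended main term $\frac{p}{M}(\tr(\mY_t^{-p}) - \inner{\mX_{S_{t-1}}}{\mY_t^{-p-1}})$ should emerge as follows: $\sum_{i \in S'_{t-1}} \vu_i\vu_i^\top \approx \mY_t$ (up to vectors excluded from $S'_{t-1}$, which must be argued to be a lower-order correction using the minimum-eigenvalue bound and the definition of $S'_{t-1}$), and $\sum_i \vx(i)\vu_i\vu_i^\top$ restricted to $S_{t-1}$ is $\mX_{S_{t-1}}$; the $-\vx(i)$ in the probability is what produces the $-\inner{\mX_{S_{t-1}}}{\mY_t^{-p-1}}$ correction. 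The leftover $-\alpha\inner{\vv_i\vv_i^\top}{\mA_t^{1/2}}$ factor and the contribution of vectors in $S_{t-1}\setminus S'_{t-1}$ should both be absorbed into the error term $O(p^2 d/(kM))\tr(\mX^{-p})$, using \Cref{l:E-A-gamma} to convert $\mA_t^{1/2}$-inner products into $\mZ_t^{-1}$-inner products and the optimality condition \Cref{l:optimality-p} in the form $\inner{\mX^{-p-1}}{\vu_i\vu_i^\top} \leq \frac1k\tr(\mX^{-p})$ for fractional $i$ (with \Cref{obs:prob} handling the integral coordinates). For the tail $i \geq 2$: here I would bound $\binom{p}{i} \leq p^i/i!$ and $(\vu_{i_t}^\top\mY_t^{-1}\vu_{i_t})^{i-1} \leq \kappa^{i-1} = (2p)^{-(i-1)}$, so each tail term carries a factor roughly $p^i (2p)^{-(i-1)} = 2p \cdot 2^{-i}$, and the geometric sum over $i\geq 2$ is $O(p)$; multiplied by $\frac1M$ and the per-vector bound $\vu_i^\top\mY_t^{-p-1}\vu_i$ summed/averaged appropriately (again via optimality), this lands inside the $O(p^2 d/(kM))\tr(\mX^{-p})$ error budget.

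The main obstacle I anticipate is accounting carefully for the vectors in $S_{t-1} \setminus S'_{t-1}$ — those excluded either because $\alpha\inner{\vv_i\vv_i^\top}{\mA_t^{1/2}} > \frac12$ or because $\inner{\vv_i\vv_i^\top}{\mZ_t^{-1}} > \kappa$. Since the clean main term wants a sum over all of $S_{t-1}$ (to reconstitute $\mY_t$ and $\mX_{S_{t-1}}$) but the sampling only hits $S'_{t-1}$, I need the total ``mass'' $\sum_{i \in S_{t-1}\setminus S'_{t-1}} \vu_i^\top\mY_t^{-p-1}\vu_i$ to be $O(p d / k)\tr(\mX^{-p})$ or smaller. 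Bounding this requires: (i) controlling how many/how heavy the excluded vectors can be — here $\tr(\mA_t)=1$ and $\tr(\mZ_t^{-1}) \leq d/\lambda_{\min}(\mZ_t) = O(d)$ limit the number of vectors with large inner product (a Markov-type count: at most $O(\sqrt d/\alpha \cdot 2) = O(\gamma\sqrt d)$ vectors, wait — more carefully, $\sum_i \alpha^2\inner{\vv_i\vv_i^\top}{\mA_t} \le \alpha^2$-type bounds, and $\sum_i \inner{\vv_i\vv_i^\top}{\mZ_t^{-1}} = \tr(\mZ_t^{-1}\mZ_t) \cdot(\text{something}) = O(d)$ counts those with inner product exceeding $\kappa$ as at most $O(d/\kappa) = O(pd)$), and (ii) for each such vector using $\vu_i^\top\mY_t^{-p-1}\vu_i \leq \frac1k\tr(\mX^{-p})$ via the optimality condition when $\vx(i)$ is fractional, plus \Cref{obs:prob} to note integral coordinates are never candidates for removal. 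Threading these two effects through to exactly the claimed $O(p^2 d/(kM))$ rate — rather than something off by a factor of $p$ — is the delicate bookkeeping that I expect to consume most of the proof, and it is precisely where the choices $\gamma = 1/6p$ and $\kappa = 1/2p$ must be used in full strength.
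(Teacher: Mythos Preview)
Your anticipated ``main obstacle'' for the linear term is a phantom. The first inequality of \Cref{l:E-A-gamma} says $\vu_i^\top\mY_t^{-1}\vu_i = \inner{\vv_i\vv_i^\top}{\mZ_t^{-1}} \leq \alpha\inner{\vv_i\vv_i^\top}{\mA_t^{1/2}}$, so the sampling weight and the denominator satisfy
\[
\frac{1-\alpha\inner{\vv_i\vv_i^\top}{\mA_t^{1/2}}}{1-\vu_i^\top\mY_t^{-1}\vu_i}\;\leq\;1
\]
with no leftover whatsoever. After this cancellation each summand $(1-\vx(i))\,\vu_i^\top\mY_t^{-p-1}\vu_i$ is nonnegative, so you may freely enlarge the index set from $S'_{t-1}$ to $S_{t-1}$; this \emph{increases} an upper bound and immediately reconstitutes $\tr(\mY_t^{-p})-\inner{\mX_{S_{t-1}}}{\mY_t^{-p-1}}$. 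No Markov-type count on excluded vectors is needed.

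The genuine gap is in your treatment of the tail $i\geq 2$. If you bound $(\vu_i^\top\mY_t^{-1}\vu_i)^{i-1}\leq\kappa^{i-1}$ in full, your $O(p)$ geometric sum leaves you with
\[
\frac{O(p)}{M}\sum_{i\in S'_{t-1}}(1-\vx(i))\,\vu_i^\top\mY_t^{-p-1}\vu_i.
\]
Whether you now invoke optimality per vector and sum over $|S'_{t-1}|$ indices, or simply use $\sum_{i\in S_{t-1}}\vu_i^\top\mY_t^{-p-1}\vu_i=\tr(\mY_t^{-p})\leq O(\tr(\mX^{-p}))$, you land at $O(p/M)\cdot\tr(\mX^{-p})$. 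This is too large by a factor $k/(pd)\gtrsim 1/\eps$ compared to the claimed $O(p^2d/(kM))\cdot\tr(\mX^{-p})$, and downstream (\Cref{l:p-exp-progress}) the expected progress would fail to be positive for small $\eps$. The missing idea in the paper's proof is to \emph{not} spend all $i-1$ factors on $\kappa$: write $(\vu_i^\top\mY_t^{-1}\vu_i)^{l-1}=(\vu_i^\top\mY_t^{-1}\vu_i)\cdot(\vu_i^\top\mY_t^{-1}\vu_i)^{l-2}\leq (\vu_i^\top\mY_t^{-1}\vu_i)\cdot\kappa^{l-2}$, apply the optimality condition to the $\mY_t^{-p-1}$ factor, and then use
\[
\sum_{i\in S'_{t-1}}\vu_i^\top\mY_t^{-1}\vu_i \;\leq\; \inner{\mY_t}{\mY_t^{-1}}\;=\;d.
\]
This produces the crucial $d$ in the numerator rather than $|S'_{t-1}|\sim k$, and is exactly what yields the $O(p^2d/(kM))$ rate.
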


\begin{proof}
There are $p$ terms in the loss term $l_t$. We deal with the linear term and higher order terms separately. 
Consider the linear term:
\begin{align*}
    \E_t\left[ \frac{p \cdot \vu_{i_t}^\top \mY_t^{-p-1} \vu_{i_t} }{1- \vu_{i_t}^\top \mY_t^{-1} \vu_{i_t}} \right] & = \sum_{i \in S'_{t-1}} \frac{1- \vx(i)}{M} \Big( 1 - \alpha \inner{\vv_i \vv_i^\top}{\mA_t^{1/2}} \Big) \cdot \frac{p \cdot \vu_i^\top \mY_t^{-p-1} \vu_i }{1- \vu_i^\top \mY_t^{-1} \vu_i} \\
    & = \sum_{i \in S'_{t-1}} \frac{1- \vx(i)}{M} \Big( 1 - \alpha \inner{\vv_i \vv_i^\top}{\mA_t^{1/2}} \Big) \cdot \frac{p \cdot \vu_i^\top \mY_t^{-p-1} \vu_i }{1- \inner{\vv_i \vv_i^\top}{\mZ_t^{-1}}},
\end{align*}
where the second line follows by the definitions of $\mY_t$ and $\mZ_t$, which implies that
\begin{align} \label{eq:Y-Z}
    \inner{\vv_i \vv_i^\top}{\mZ_t^{-1}} = \vu_i^\top \mY_t^{-1} \vu_i.
\end{align}

Note that $\gamma = 1/6p \leq 1/6$. Thus, we can apply the first inequality in \Cref{l:E-A-gamma} and then relax $S'_{t-1}$ to $S_{t-1}$ to obtain that
\begin{align} \label{eq:l-linear}
    \E_t\left[ \frac{p \cdot \vu_{i_t}^\top \mY_t^{-p-1} \vu_{i_t} }{1- \vu_{i_t}^\top \mY_t^{-1} \vu_{i_t}} \right] \leq \frac{p}{M} \sum_{i \in S_{t-1}} (1-\vx(i)) \cdot \vu_i^\top \mY_t^{-p-1} \vu_i = \frac{p}{M} \Big( \tr(\mY_t^{-p}) - \inner{\mX_{S_{t-1}}}{\mY_t^{-p-1}} \Big).
\end{align}

Then, we consider the remaining $p-1$ higher order loss terms.
\begin{multline*}
    \underbrace{\E_t\left[ \sum_{l=2}^p \binom{p}{l} \frac{(\vu_{i_t}^\top \mY_t^{-1} \vu_{i_t})^{l-1} \cdot \vu_{i_t}^\top \mY_t^{-p-1} \vu_{i_t}}{(1-\vu_{i_t}^\top \mY_t^{-1} \vu_{i_t})^l} \right]}_{\small \Circled{1}} \\
    = \sum_{i \in S'_{t-1}} \frac{1- \vx(i)}{M} \Big( 1 - \alpha \inner{\vv_i \vv_i^\top}{\mA_t^{1/2}} \Big) \sum_{l=2}^p \binom{p}{l} \frac{(\vu_i^\top \mY_t^{-1} \vu_i)^{l-1} \cdot \vu_i^\top \mY_t^{-p-1} \vu_i}{(1-\vu_i^\top \mY_t^{-1} \vu_i)^l}.
\end{multline*}

Again, by applying \Cref{l:E-A-gamma}, it holds that
\begin{align*}
    \Circled{1} \leq & \sum_{i \in S'_{t-1}} \frac{1- \vx(i)}{M} \cdot \vu_i^\top \mY_t^{-p-1} \vu_i \cdot \sum_{l=2}^p \binom{p}{l} \frac{(\vu_i^\top \mY_t^{-1} \vu_i)^{l-1} }{(1-\vu_i^\top \mY_t^{-1} \vu_i)^{l-1}}.
\end{align*}

Notice that
\begin{align} \label{eq:Y-X}
    \lambda_{\min}(\mZ_t) \geq 1-5\gamma \qquad \iff \qquad \mY_t \succcurlyeq (1-5\gamma) \mX.
\end{align}
Thus, by the assumption $\lambda_{\min}(\mZ_t) \geq 1-5\gamma$, it follows that
\begin{align*}
    \Circled{1} \leq & \frac{(1-5\gamma)^{-p-1}}{M} \sum_{i \in S'_{t-1}} (1- \vx(i)) \cdot \vu_i^\top \mX^{-p-1} \vu_i \cdot \sum_{l=2}^p \binom{p}{l} \frac{(\vu_i^\top \mY_t^{-1} \vu_i)^{l-1} }{(1-\vu_i^\top \mY_t^{-1} \vu_i)^{l-1}}.
\end{align*}
Using the fact that $\vx$ is a fractional optimal solution to \eqref{eq:convex-p} and then applying \Cref{l:optimality-p}, it holds that
\begin{align*}
    \Circled{1} & \leq \frac{(1-5\gamma)^{-p-1}}{kM} \cdot \tr(\mX^{-p}) \sum_{i \in S'_{t-1}} (1- \vx(i)) \cdot \sum_{l=2}^p \binom{p}{l} \frac{(\vu_i^\top \mY_t^{-1} \vu_i)^{l-1} }{(1-\vu_i^\top \mY_t^{-1} \vu_i)^{l-1}} \\
    & \leq \frac{(1-5\gamma)^{-p-1}}{kM} \cdot \tr(\mX^{-p}) \sum_{l=2}^p \binom{p}{l} \sum_{i \in S'_{t-1}} \vu_i^\top \mY_t^{-1} \vu_i \cdot \frac{(\vu_i^\top \mY_t^{-1} \vu_i)^{l-2} }{(1-\vu_i^\top \mY_t^{-1} \vu_i)^{l-1}}.
\end{align*}

Due to the definition of the set $S'_{t-1}$ and \eqref{eq:Y-Z}, it holds that $\vu_i^\top \mY_t^{-1} \vu_i \leq \kappa$ for all $i \in S'_{t-1}$. Thus,
\begin{align*}
    \Circled{1} \leq \frac{(1-5\gamma)^{-p-1}}{kM} \cdot \tr(\mX^{-p}) \sum_{l=2}^p \binom{p}{l} \sum_{i \in S'_{t-1}} \vu_i^\top \mY_t^{-1} \vu_i \cdot \Big(\frac{\kappa}{1-\kappa}\Big)^{l-2} \cdot \frac{1}{1-\kappa}.
\end{align*}
Since $\sum_{i \in S'_{t-1}} \vu_i^\top \mY_t^{-1} \vu_i \leq \inner{Y_t}{Y_t^{-1}} = d$, we can further upper bound the $\Circled{1}$ term by
\begin{align}
    \Circled{1} ~\leq~ & \frac{(1-5\gamma)^{-p-1} d}{kM} \cdot \tr(\mX^{-p}) \cdot \sum_{l=2}^p \binom{p}{l} \Big(\frac{\kappa}{1-\kappa}\Big)^l \cdot \frac{1-\kappa}{\kappa^2} \nonumber \\
    =~ & \frac{(1-5\gamma)^{-p-1} d}{kM} \cdot \tr(\mX^{-p}) \cdot \frac{1-\kappa}{\kappa^2} \Big( \frac{1}{(1-\kappa)^p} - 1 - \frac{p \kappa}{1-\kappa} \Big) \nonumber \\
    \leq~ & \frac{(1-5\gamma)^{-p-1} d}{kM} \cdot \tr(\mX^{-p}) \cdot \frac{1}{\kappa^2} \cdot \frac{1}{(1-\kappa)^p}. \label{eq:l-higher}
\end{align}

Combining \eqref{eq:l-linear} and \eqref{eq:l-higher}, the expected loss can be bounded by
\begin{align*} 
    \E_t[l_t] & \leq \frac{p}{M} \Big( \tr(\mY_t^{-p}) - \inner{\mX_{S_{t-1}}}{\mY_t^{-p-1}} \Big) + \frac{(1-5\gamma)^{-p-1} d}{\kappa^2 (1-\kappa)^p kM} \cdot \tr(\mX^{-p}) \\
    & \leq \frac{p}{M} \Big( \tr(\mY_t^{-p}) - \inner{\mX_{S_{t-1}}}{\mY_t^{-p-1}} \Big) + O\Big(\frac{p^2 d}{kM}\Big) \cdot \tr(\mX^{-p}),
\end{align*}
where the last inequality follows by the choice of $\gamma$ and $\kappa$.
\end{proof}

\subsubsection{Expected Gain}

The analysis of the expected gain is slightly more complicated since the sampling probability does not cancel out with the denominator term in $g_t$ as nicely as in the analysis of loss $l_t$, and so we will divide into two cases and use different arguments. 
Again the minimum eigenvalue lower bound (\Cref{l:E-A-gamma}) and the optimality condition (\Cref{l:optimality-p}) are crucial in the following lemma.

\begin{lemma}[Expected Gain] \label{l:gain-p}
Let $S_{t-1}$ be the solution set at time $t$ and $\mZ_t = \sum_{i \in S_{t-1}} \vv_i \vv_i^\top$ for $1 \leq t \leq \tau$.
Suppose $\vx$ is an optimal solution to \eqref{eq:convex-p}, $\lambda_{\min}(\mZ_t) \in [1-5\gamma,1)$, and $\gamma = 1/6p$.
Then
\[
\E_t[g_t] \geq \frac{p}{M} \Big( \inner{\mX}{\mY_t^{-p-1}} - \inner{\mX_{S_{t-1}}}{\mY_t^{-p-1}} \Big) - O\Big( \frac{p^2d}{kM} \Big) \cdot \tr(\mX^{-p}).
\]
\end{lemma}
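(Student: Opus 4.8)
\emph{Proof plan.} The plan is to expand the conditional expectation over the random index $j_t$, isolate the linear ($i=1$) term of the gain as the dominant contribution, and show that the alternating higher-order terms cost only $O(p^2 d/(kM))\cdot\tr(\mX^{-p})$. For $j\in[m]\setminus S_{t-1}$ set $y_j := \vu_j^\top\mY_t^{-1}\vu_j = \inner{\vv_j\vv_j^\top}{\mZ_t^{-1}}$ (using \eqref{eq:Y-Z}) and $w_j := \vu_j^\top\mY_t^{-p-1}\vu_j \ge 0$. Then the gain realized when $j_t=j$ is
\[
g_t^{(j)} = \sum_{i=1}^p \binom{p}{i}(-1)^{i+1}\frac{y_j^{i-1}w_j}{(1+y_j)^i} = \frac{w_j}{y_j}\Big(1 - \frac{1}{(1+y_j)^p}\Big) \ge 0,
\]
where the closed form follows from the binomial theorem (put $z=y_j/(1+y_j)$) and nonnegativity is immediate. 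By \Cref{obs:prob} only indices with $0<\vx(j)<1$ enter the sum, which is exactly the range in which the optimality condition \Cref{l:optimality-p} applies. Using \Cref{l:E-A-gamma}, namely $y_j=\inner{\vv_j\vv_j^\top}{\mZ_t^{-1}}\le\alpha\inner{\vv_j\vv_j^\top}{\mA_t^{1/2}}$, the sampling weight satisfies $1+\alpha\inner{\vv_j\vv_j^\top}{\mA_t^{1/2}}\ge 1+y_j$, so since $g_t^{(j)}\ge 0$,
\[
\E_t[g_t] \;\ge\; \frac1M\sum_{j\notin S_{t-1}} \vx(j)\,(1+y_j)\,g_t^{(j)},
\]
and the extra factor $1+y_j$ cancels one power in the denominator, making the linear term of $(1+y_j)g_t^{(j)}$ equal to $p w_j$ exactly.

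Next I would split $[m]\setminus S_{t-1}$ by whether $y_j$ is below or above a threshold of order $1/p$, so that $py_j=O(1)$ on the ``small'' side. On the small side, $(1+y_j)g_t^{(j)}\ge p w_j - \big|(1+y_j)\sum_{i\ge2}\binom{p}{i}(-1)^{i+1}\tfrac{y_j^{i-1}w_j}{(1+y_j)^i}\big|$, and the alternating remainder is bounded by $\tfrac{w_j}{z_j}\big((1+z_j)^p-1-pz_j\big)$ with $z_j=y_j/(1+y_j)$, which is $O(p^2 w_j y_j)$ via $\binom{p}{i}\le p^i/i!$ and $pz_j\le py_j=O(1)$; hence $(1+y_j)g_t^{(j)}\ge p w_j - O(p^2 w_j y_j)$. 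On the large side, use only $g_t^{(j)}\ge 0$, so the sum forfeits $\tfrac pM\sum_{\text{large }j}\vx(j)w_j$ from the target, and since $y_j\gtrsim 1/p$ there, $\sum_{\text{large }j}\vx(j)w_j\lesssim p\sum_j\vx(j)w_jy_j$. Combining the two sides,
\[
\E_t[g_t] \;\ge\; \frac pM\sum_{j\notin S_{t-1}}\vx(j)\,w_j \;-\; O(p^2/M)\sum_{j\notin S_{t-1}}\vx(j)\,w_j\,y_j,
\]
and the first sum equals $\inner{\mX}{\mY_t^{-p-1}}-\inner{\mX_{S_{t-1}}}{\mY_t^{-p-1}}$ because $\mX-\mX_{S_{t-1}}=\sum_{j\notin S_{t-1}}\vx(j)\vu_j\vu_j^\top$.

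It remains to bound the error sum $\sum_{j\notin S_{t-1}}\vx(j)w_jy_j$. From $\lambda_{\min}(\mZ_t)\ge 1-5\gamma$, i.e.\ $\mY_t\succcurlyeq(1-5\gamma)\mX$ via \eqref{eq:Y-X}, together with \Cref{l:optimality-p}, one gets $w_j\le(1-5\gamma)^{-p-1}\vu_j^\top\mX^{-p-1}\vu_j\le(1-5\gamma)^{-p-1}\tfrac1k\tr(\mX^{-p})$ for each fractional $j$, while $\sum_{j\notin S_{t-1}}\vx(j)y_j=\inner{\mX-\mX_{S_{t-1}}}{\mY_t^{-1}}\le\inner{\mX}{\mY_t^{-1}}\le(1-5\gamma)^{-1}d$, again by \eqref{eq:Y-X}. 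Since $\gamma=1/6p$, the factor $(1-5\gamma)^{-p}=(1-\tfrac{5}{6p})^{-p}=O(1)$, so $\sum_j\vx(j)w_jy_j=O(d/k)\cdot\tr(\mX^{-p})$ and the error term is $O(p^2 d/(kM))\cdot\tr(\mX^{-p})$, which gives the claim. I expect the main obstacle to be precisely this error control: a crude treatment of the alternating higher-order terms of $g_t^{(j)}$, or of the ``bad'' large-$y_j$ indices, blows up exponentially in $p$; it is the combination of (i) splitting at scale $1/p$ so that $py_j=O(1)$ where the expansion is carried out, (ii) the optimality condition, which makes the large-$y_j$ (equivalently, the $w_jy_j$) mass negligible, and (iii) the learning-rate choice $\gamma=\Theta(1/p)$, keeping $(1-5\gamma)^{-p}=O(1)$, that keeps all constants polynomial in $p$.
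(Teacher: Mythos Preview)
Your proof is correct and uses the same essential ingredients as the paper (the closed form for $g_t^{(j)}$, \Cref{l:E-A-gamma}, the optimality condition \Cref{l:optimality-p}, and $\gamma=\Theta(1/p)$ so that $(1-5\gamma)^{-p}=O(1)$), but it organizes the case analysis differently.

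The paper splits $[m]\setminus S_{t-1}$ according to whether $\alpha\inner{\vv_j\vv_j^\top}{\mA_t^{1/2}}\ge p\,y_j$ or not. On the set $S_1$ where this holds, the sampling weight is at least $1+py_j$, and the paper verifies the algebraic inequality $\tfrac{1+px}{x}\big(1-(1+x)^{-p}\big)\ge p$, which immediately gives the full linear contribution $\tfrac{p}{M}\sum_{j\in S_1}\vx(j)w_j$. On $S_2$ the weight is at most $1+py_j$, and the paper separates the linear term (handled by \Cref{l:E-A-gamma}) from the higher-order terms, bounding the latter via the optimality condition exactly as you do. By contrast, you apply \Cref{l:E-A-gamma} once at the outset to replace the weight by $1+y_j$ uniformly, after which the action matrix $\mA_t$ disappears from the argument; you then split on a fixed threshold $y_j\lesssim 1/p$, Taylor-expand on the small side, and simply discard the nonnegative contribution on the large side, recovering it as part of the $O(p^2/M)\sum_j\vx(j)w_jy_j$ error. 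Your route is arguably a bit more elementary (no need to track $\alpha\inner{\vv_j\vv_j^\top}{\mA_t^{1/2}}$ through the case split and no need for the clean identity on $S_1$), at the cost of throwing away the large-$y_j$ gain rather than keeping it; the paper's $S_1$ argument extracts the full $pw_j$ even there. Both end at the same error bound $O(p^2 d/(kM))\cdot\tr(\mX^{-p})$ via the same pointwise bound $w_j\le O(1/k)\tr(\mX^{-p})$ and $\sum_j\vx(j)y_j\le\inner{\mX}{\mY_t^{-1}}\le O(d)$.
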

\begin{proof}
We consider two separate cases:
\begin{align*}
& S_1 := \{ j \in [m] \backslash S_{t-1} \mid \alpha \inner{\vv_j \vv_j^\top}{\mA_t^{1/2}} \geq p \cdot \vu_j^\top \mY_t^{-1} \vu_j \} \quad \text{and} \\
& S_2 := \{ j \in [m] \backslash S_{t-1} \mid \alpha \inner{\vv_j \vv_j^\top}{\mA_t^{1/2}} < p \cdot \vu_j^\top \mY_t^{-1} \vu_j \}.
\end{align*}
Then, the expected gain can be written as
\begin{align}
    \E_t[g_t] = & \sum_{j \in S_1} \frac{\vx(j)}{M} \cdot \Big( 1 + \alpha \inner{\vv_j \vv_j^\top}{\mA_t^{1/2}} \Big) \cdot \sum_{l=1}^p \binom{p}{l}  (-1)^{l+1} \frac{(\vu_j^\top \mY_t^{-1} \vu_j)^{l-1} \cdot \vu_j^\top \mY_t^{-p-1} \vu_j}{(1+\vu_j^\top \mY_t^{-1} \vu_j)^l} \label{eq:p-S1} \\
    & \qquad + \sum_{j \in S_2} \frac{\vx(j)}{M} \cdot \Big( 1 + \alpha \inner{\vv_j \vv_j^\top}{\mA_t^{1/2}} \Big) \cdot \sum_{l=1}^p \binom{p}{l}  (-1)^{l+1} \frac{(\vu_j^\top \mY_t^{-1} \vu_j)^{l-1} \cdot \vu_j^\top \mY_t^{-p-1} \vu_j}{(1+\vu_j^\top \mY_t^{-1} \vu_j)^l}. \label{eq:p-S2}
\end{align}

We analyze~\eqref{eq:p-S1} and~\eqref{eq:p-S2} separately. First, we consider~\eqref{eq:p-S1} and rearrange the $g_t$ term a bit, which will be easier for the analysis of \eqref{eq:p-S1}.
\begin{align} 
g_t & = \frac{u_{j_t}^\top \mY_t^{-p-1} \vu_{j_t}}{\vu_{j_t}^\top \mY_t^{-1} \vu_{j_t}} \cdot \sum_{i=1}^p \binom{p}{i}  (-1)^{i+1} \bigg( \frac{\vu_{j_t}^\top \mY_t^{-1} \vu_{j_t}}{1+\vu_{j_t}^\top \mY_t^{-1} \vu_{j_t}} \bigg)^i \nonumber \\
& = \frac{u_{j_t}^\top \mY_t^{-p-1} \vu_{j_t}}{\vu_{j_t}^\top \mY_t^{-1} \vu_{j_t}} \bigg( 1 - \bigg( 1 - \frac{\vu_{j_t}^\top \mY_t^{-1} \vu_{j_t}}{1 + \vu_{j_t}^\top \mY_t^{-1} \vu_{j_t}}\bigg)^p \bigg) \label{eq:gt1} \\
& = \frac{u_{j_t}^\top \mY_t^{-p-1} \vu_{j_t}}{\vu_{j_t}^\top \mY_t^{-1} \vu_{j_t}} \bigg( 1 - \frac{1}{(1 + \vu_{j_t}^\top \mY_t^{-1} \vu_{j_t})^p} \bigg). \label{eq:gt}
\end{align}
Thus, we can rewrite \eqref{eq:p-S1} as
\begin{align*}
    \eqref{eq:p-S1} = \sum_{j \in S_1} \frac{\vx(j)}{M} \cdot \Big( 1 + \alpha \inner{\vv_j \vv_j^\top}{\mA_t^{1/2}} \Big) \cdot \frac{u_j^\top \mY_t^{-p-1} \vu_j}{\vu_j^\top \mY_t^{-1} \vu_j} \bigg( 1 - \frac{1}{(1 + \vu_j^\top \mY_t^{-1} \vu_j)^p} \bigg).
\end{align*}
By the definition of $S_1$, it holds that
\begin{align*}
    \eqref{eq:p-S1} 
    & \geq \sum_{j \in S_1} \frac{\vx(j)}{M} \cdot u_j^\top \mY_t^{-p-1} \vu_j \cdot \frac{1 + p \cdot \vu_j^\top \mY_t^{-1} \vu_j}{\vu_j^\top \mY_t^{-1} \vu_j} \bigg( 1 - \frac{1}{(1 + \vu_j^\top \mY_t^{-1} \vu_j)^p} \bigg).
\end{align*}

Let $x = \vu_j^\top \mY_t^{-1} \vu_j > 0$ (as $\mY_t \succ 0$). Then, it holds that
\[
\frac{1+px}{x} \cdot \Big(1 - \frac{1}{(1+x)^p} \Big) = \frac{1+px}{x} \cdot \frac{(1+x)^p - 1}{(1+x)^p} = \frac{px(1+x)^p + (1+x)^p - 1 - px}{x(1+x)^p} \geq p.
\]
Thus,
\begin{align} \label{eq:S1}
    \eqref{eq:p-S1} & \geq \frac{p}{M} \sum_{j \in S_1} \vx(j) \cdot \vu_j^\top \mY_t^{-p-1} \vu_j.
\end{align}

Then, we consider~\eqref{eq:p-S2}. As in the proof of \Cref{l:loss-p}, we separate \eqref{eq:p-S2} into two parts, $\Circled{2}$ concerning the linear term and $\Circled{3}$ concerning the remaining $p-1$ higher order terms.
\begin{align*}
    \eqref{eq:p-S2} & = \underbrace{\sum_{j \in S_2} \frac{\vx(j)}{M} \cdot \Big( 1 + \alpha \inner{\vv_j \vv_j^\top}{\mA_t^{1/2}} \Big) \cdot \frac{p \cdot \vu_j^\top \mY_t^{-p-1} \vu_j}{1 + \vu_j^\top \mY_t^{-1} \vu_j}}_{\Circled{2}} \\
    & \qquad - \underbrace{\sum_{j \in S_2} \frac{\vx(j)}{M} \cdot \Big( 1 + \alpha \inner{\vv_j \vv_j^\top}{\mA_t^{1/2}} \Big) \cdot \sum_{l=2}^p \binom{p}{l}  (-1)^l \frac{(\vu_j^\top \mY_t^{-1} \vu_j)^{l-1} \cdot \vu_j^\top \mY_t^{-p-1} \vu_j}{(1+\vu_j^\top \mY_t^{-1} \vu_j)^l}}_{\Circled{3}}.
\end{align*}

The linear term $\Circled{2}$ is easy to bound, we can control it by \Cref{l:E-A-gamma} (combined with \eqref{eq:Y-Z}) so that
\begin{align} \label{eq:S2-linear}
    \Circled{2} \geq \frac{p}{M} \sum_{j \in S_2} \vx(j) \cdot \vu_j^\top \mY_t^{-p-1} \vu_j.
\end{align}

Then, we upper bound the higher order terms $\Circled{3}$ (notice that $\Circled{3}$ does not contain the minus sign). 
To upper bound $\Circled{3}$, for each $j \in S_2$, we can assume $\sum_{l=2}^p \binom{p}{l} (-1)^l \frac{(\vu_j^\top \mY_t^{-1} \vu_j)^{l-1} \cdot \vu_j^\top \mY_t^{-p-1} \vu_j}{(1+\vu_j^\top \mY_t^{-1} \vu_j)^l} \geq 0$ without loss of generality, as otherwise we can simply ignore the $j$-th term.
With this assumption and by the definition of $S_2$, it follows that
\begin{align*}
    \Circled{3} & \leq \frac1M \sum_{j \in S_2} \vx(j) (1+ p \vu_j^\top \mY_t^{-1} \vu_j) \sum_{l=2}^p \binom{p}{l}  (-1)^l \frac{(\vu_j^\top \mY_t^{-1} \vu_j)^{l-1} \cdot \vu_j^\top \mY_t^{-p-1} \vu_j}{(1+\vu_j^\top \mY_t^{-1} \vu_j)^{l}} \\
    & = \frac1M \sum_{j \in S_2} \vx(j) \vu_j^\top \mY_t^{-p-1} \vu_j \cdot (1+ p \vu_j^\top \mY_t^{-1} \vu_j) \sum_{l=2}^p \binom{p}{l}  (-1)^l \frac{(\vu_j^\top \mY_t^{-1} \vu_j)^{l-1}}{(1+\vu_j^\top \mY_t^{-1} \vu_j)^{l}}.
\end{align*}

Using the assumption $\lambda_{\min}(\mZ_t) \geq 1-5\gamma$ and \eqref{eq:Y-X}, it holds that
\begin{align*}
    \Circled{3} & \leq \frac{(1-5\gamma)^{-p-1}}{M} \sum_{j \in S_2} \vx(j) \vu_j^\top \mX^{-p-1} \vu_j \cdot (1+ p \vu_j^\top \mY_t^{-1} \vu_j) \sum_{l=2}^p \binom{p}{l}  (-1)^l \frac{(\vu_j^\top \mY_t^{-1} \vu_j)^{l-1}}{(1+\vu_j^\top \mY_t^{-1}. \vu_j)^{l}}.
\end{align*}

Since $\vx$ is an optimal solution to \eqref{eq:convex-p}, we can apply \Cref{l:optimality-p} and derive that
\begin{align*}
    \Circled{3} & \leq \frac{(1-5\gamma)^{-p-1}}{kM} \cdot \tr(\mX^{-p}) \sum_{j \in S_2} \vx(j) \cdot (1+ p \vu_j^\top \mY_t^{-1} \vu_j) \sum_{l=2}^p \binom{p}{l}  (-1)^l \frac{(\vu_j^\top \mY_t^{-1} \vu_j)^{l-1}}{(1+\vu_j^\top \mY_t^{-1} \vu_j)^{l}} \\
    & = \frac{(1-5\gamma)^{-p-1}}{kM} \cdot \tr(\mX^{-p}) \sum_{j \in S_2} \vx(j) \cdot \frac{1+ p \vu_j^\top \mY_t^{-1} \vu_j}{\vu_j^\top \mY_t^{-1} \vu_j} \bigg( \Big( 1 - \frac{\vu_j^\top \mY_t^{-1} \vu_j}{1 + \vu_j^\top \mY_t^{-1} \vu_j}\Big)^p - 1 + \frac{p \vu_j^\top \mY_t^{-1} \vu_j}{1+ \vu_j^\top \mY_t^{-1} \vu_j}\bigg).
\end{align*}

Let $x = \vu_j^\top \mY_t^{-1} \vu_j$, we want to upper bound
\[
 \frac{1+px}{x} \bigg( \Big( 1- \frac{x}{1+x}\Big)^p - 1 + \frac{px}{1+x}\bigg) = \frac{1}{x}\bigg( \Big( 1- \frac{x}{1+x}\Big)^p - 1 + \frac{px}{1+x}\bigg) + p\bigg( \Big( 1- \frac{x}{1+x}\Big)^p - 1 + \frac{px}{1+x}\bigg).
\]
For any $y \in [0,1]$, it holds that $(1-y)^p \leq 1- py + \binom{p}{2} y^2$. Thus, it follows that
\begin{align*}
& \frac{1}{x}\bigg( \Big( 1- \frac{x}{1+x}\Big)^p - 1 + \frac{px}{1+x}\bigg) + p\bigg( \Big( 1- \frac{x}{1+x}\Big)^p - 1 + \frac{px}{1+x}\bigg) \\
\leq~ & \frac1x \cdot \binom{p}{2} \cdot \frac{x^2}{(1+x)^2} + p \cdot \frac{px}{1+x} = \binom{p}{2} \cdot \frac{x}{1+x} + \frac{p^2 x}{1+x} \leq 2p^2 x.
\end{align*}

Therefore, the $\Circled{3}$ term can be further bounded by
\begin{align*}
\Circled{3} & \leq \frac{2 (1-5\gamma)^{-p-1} \cdot p^2}{kM} \cdot \tr(\mX^{-p}) \sum_{j \in S_2} \vx(j) \cdot \vu_j^\top \mY_t^{-1} \vu_j \leq \frac{2 (1-5\gamma)^{-p-1} \cdot p^2}{kM} \cdot \tr(\mX^{-p}) \inner{\mX}{\mY_t^{-1}}.
\end{align*}
Using \eqref{eq:Y-X}, it holds that
\begin{align} \label{eq:S2-higher}
\Circled{3} \leq \frac{2 (1-5\gamma)^{-p-2} \cdot p^2 d}{kM} \cdot \tr(\mX^{-p}) \leq O\Big( \frac{p^2 d}{kM} \Big) \cdot \tr(\mX^{-p}),
\end{align}
where the last inequality follows by the choice of $\gamma$.

Combining \eqref{eq:S1}, \eqref{eq:S2-linear} and \eqref{eq:S2-higher}, we can lower bound the expected gain by
\begin{align*}
\E_t[g_t] & \geq \frac{p}{M} \bigg( \sum_{j \in [m] \backslash S_{t-1}} \vx(j) \vu_j^\top \mY_t^{-p-1} \vu_j \bigg) - O\Big( \frac{p^2 d}{kM} \Big) \cdot \tr(\mX^{-p}) \\ 
& = \frac{p}{M} \Big( \inner{\mX}{\mY_t^{-p-1}} - \inner{\mX_{S_{t-1}}}{\mY_t^{-p-1}} \Big) - O\Big( \frac{p^2 d}{kM} \Big) \cdot \tr(\mX^{-p}). \qedhere  
\end{align*}
\end{proof}

\subsubsection{Expected Progress}
 
Finally, we apply H\"older's inequality appropriately to compare the gain term and the loss term.

\begin{lemma} \label{l:holder}
Given positive definite matrices $\mA, \mB \in \mathbb{S}^d_{++}$ and an integer $p \geq 1$, it holds that
\[
\inner{\mA}{\mB^{-p-1}} \geq \left( \frac{\tr(\mB^{-p})}{\tr(\mA^{-p})} \right)^{1/p} \cdot \tr(\mB^{-p}).
\]
\end{lemma}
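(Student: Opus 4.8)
The plan is to reduce the matrix inequality to two elementary scalar inequalities by passing to the eigenbasis of $\mA$, so that no non‑commuting products of powers of $\mA$ and $\mB$ ever have to be manipulated. Write $\mA = \sum_{i=1}^d \lambda_i\, e_i e_i^\top$ with $\lambda_i > 0$ and $\{e_i\}$ orthonormal, and evaluate all three traces in this basis: $\tr(\mA^{-p}) = \sum_i \lambda_i^{-p}$, $\inner{\mA}{\mB^{-p-1}} = \sum_i \lambda_i \langle e_i, \mB^{-p-1} e_i\rangle$, and $\tr(\mB^{-p}) = \sum_i \langle e_i, \mB^{-p} e_i\rangle$. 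Abbreviate $\beta_i^{(q)} := \langle e_i, \mB^{-q} e_i\rangle > 0$.

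The first ingredient is the per‑coordinate comparison $\beta_i^{(p)} \le \big(\beta_i^{(p+1)}\big)^{p/(p+1)}$ for every $i$. I would derive it from a spectral Jensen argument: for a unit vector $e$, a Hermitian $\mM \succ 0$, and a convex function $f$ on $(0,\infty)$, writing $\mM = \sum_k t_k P_k$ spectrally gives $\langle e, f(\mM) e\rangle = \sum_k f(t_k)\,\|P_k e\|^2 \ge f\big(\sum_k t_k \|P_k e\|^2\big) = f\big(\langle e, \mM e\rangle\big)$, since the weights $\|P_k e\|^2$ sum to $1$. Applying this with $\mM = \mB^{-p}$ and $f(s) = s^{(p+1)/p}$ (convex because $(p+1)/p \ge 1$, and $f(\mB^{-p}) = \mB^{-(p+1)}$) yields $\big(\beta_i^{(p)}\big)^{(p+1)/p} \le \beta_i^{(p+1)}$, which is the claim. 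Summing over $i$, $\tr(\mB^{-p}) \le \sum_i \big(\beta_i^{(p+1)}\big)^{p/(p+1)}$.

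The second ingredient is ordinary Hölder's inequality on the index set $\{1,\dots,d\}$ with conjugate exponents $\tfrac{p+1}{p}$ and $p+1$, applied to the factorization $\big(\beta_i^{(p+1)}\big)^{p/(p+1)} = \big(\lambda_i\, \beta_i^{(p+1)}\big)^{p/(p+1)}\cdot \lambda_i^{-p/(p+1)}$:
\[
\sum_i \big(\beta_i^{(p+1)}\big)^{p/(p+1)} \;\le\; \Big(\sum_i \lambda_i\, \beta_i^{(p+1)}\Big)^{p/(p+1)} \Big(\sum_i \lambda_i^{-p}\Big)^{1/(p+1)} = \big(\inner{\mA}{\mB^{-p-1}}\big)^{p/(p+1)}\big(\tr(\mA^{-p})\big)^{1/(p+1)}.
\]
Chaining the two displays gives $\tr(\mB^{-p}) \le \big(\inner{\mA}{\mB^{-p-1}}\big)^{p/(p+1)}\big(\tr(\mA^{-p})\big)^{1/(p+1)}$; raising both sides to the power $(p+1)/p$ and rearranging produces exactly $\inner{\mA}{\mB^{-p-1}} \ge \big(\tr(\mB^{-p})/\tr(\mA^{-p})\big)^{1/p}\tr(\mB^{-p})$.

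I do not anticipate a serious obstacle here; the proof is short once set up correctly. The one point that needs care is to resist working directly with traces of non‑commuting products like $\tr\big((\mB^{-(p+1)/2}\mA\mB^{-(p+1)/2})^{p/(p+1)} \mA^{-p/(p+1)}\big)$ and forcing a Lieb–Thirring bound — there the inequality wants to go the wrong way. Choosing the eigenbasis of $\mA$ (rather than of $\mB$) is precisely what makes the Jensen step and the Hölder split both point in the same direction; the rest is bookkeeping with exponents.
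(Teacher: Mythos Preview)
Your proof is correct. Both your argument and the paper's land on the same pivotal inequality
\[
\tr(\mB^{-p}) \;\le\; \inner{\mA}{\mB^{-p-1}}^{p/(p+1)}\,\tr(\mA^{-p})^{1/(p+1)},
\]
obtained via H\"older with the conjugate pair $\big(\tfrac{p+1}{p},\,p+1\big)$, and then rearrange identically. The difference is in the setup: the paper diagonalizes \emph{both} $\mA=\sum_j a_j \vv_j\vv_j^\top$ and $\mB=\sum_i b_i \vw_i\vw_i^\top$, writes $\tr(\mB^{-p})=\sum_{i,j} b_i^{-p}\inner{\vv_j}{\vw_i}^2$, and applies H\"older directly to this doubly-indexed sum with the factorization $b_i^{-p}\inner{\vv_j}{\vw_i}^2 = \big(a_j b_i^{-(p+1)}\inner{\vv_j}{\vw_i}^2\big)^{p/(p+1)}\cdot\big(a_j^{-p}\inner{\vv_j}{\vw_i}^2\big)^{1/(p+1)}$. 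No Jensen step is needed. Your route diagonalizes only $\mA$ and compensates with the spectral Jensen bound $\beta_i^{(p)}\le(\beta_i^{(p+1)})^{p/(p+1)}$ before applying H\"older to a single-index sum; this is a valid extra inequality but not strictly necessary. Your closing remark that one must ``choose the eigenbasis of $\mA$ rather than of $\mB$'' is over-cautious: the paper simply uses both bases at once, which is the cleanest way through.
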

\begin{proof}
Let $\mA = \sum_{i=1}^d a_i \vv_i \vv_i^\top$ be the eigendecomposition of $\mA$, and $\mB = \sum_{i=1}^d b_i \vw_i \vw_i^\top$ be the eigendecomposition of $\mB$.
Then,
\begin{align*}
\tr(\mB^{-p}) = \sum_{i=1}^d \frac{1}{b_i^p} = \sum_{1 \leq i,j \leq d}  \frac{1}{b_i^p} \inner{\vv_i}{\vw_j}^2 & = \sum_{1 \leq i,j \leq d} \frac{a_j^{p/(p+1)}}{b_i^p} \inner{\vv_i}{\vw_j}^{2p/(p+1)} \cdot \frac{1}{a_j^{p/(p+1)}} \inner{\vv_i}{\vw_j}^{2/(p+1)} \\
& \leq \bigg( \sum_{1 \leq i,j\leq d} \frac{a_j}{b_i^{p+1}} \inner{\vv_i}{\vw_j}^2 \bigg)^{p/(p+1)} \cdot \bigg( \sum_{1 \leq i,j\leq d} \frac{1}{a_j^p} \inner{\vv_i}{\vw_j}^2 \bigg)^{1/(p+1)} \\
& = \inner{\mA}{\mB^{-p-1}}^{p/(p+1)} \cdot \tr(\mA^{-p})^{1/(p+1)},
\end{align*}
where the inequality follows by H\"{o}lder's inequality.
Then, the lemma follows by taking the $(p+1)/p$'s power of both sides and rearranging the terms.
\end{proof}
 
Now, we lower bound the expected progress by combining \Cref{l:loss-p} and \Cref{l:gain-p}.
\begin{lemma} \label{l:p-exp-progress}
Let $S_{t-1}$ be the solution set at time $t$ and $\mZ_t = \sum_{i \in S_{t-1}} \vv_i \vv_i^\top$ for $1 \leq t \leq \tau$.
Suppose $\vx$ is an fractional optimal solution to \eqref{eq:convex-p}, and $\lambda_{\min}(\mZ_t) \in [1-5\gamma,1)$, $(\tr(\mY_t^{-2}))^{1/p} \geq \lambda \cdot (\tr\left( \mX^{-2} \right))^{1/p}$ for some $\lambda \geq 1$ and for $1 \leq t \leq \tau$.
Then
\[
\E_t[\Gamma_t] \geq \bigg( \frac{p(\lambda - 1) \lambda^p}{M} - O\Big( \frac{p^2 d}{kM} \Big) \bigg) \cdot \tr(\mX^{-p}).
\]
In particular, when $\lambda > 1 + \eps$ and $k \gtrsim \frac{p d}{\eps}$, the expected progress is positive.
\end{lemma}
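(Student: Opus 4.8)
The plan is to combine the expected-gain bound of \Cref{l:gain-p} with the expected-loss bound of \Cref{l:loss-p}, and then feed the outcome into the H\"older-type inequality of \Cref{l:holder}, which converts the relevant expression into a function of the single scalar ratio $r := \tr(\mY_t^{-p})/\tr(\mX^{-p})$.

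First I would write $\E_t[\Gamma_t] = \E_t[g_t] - \E_t[l_t]$ by linearity of expectation and substitute the two lemmas. The key point is that the cross term $\inner{\mX_{S_{t-1}}}{\mY_t^{-p-1}}$ appears in both bounds with opposite signs and cancels \emph{exactly}, leaving
\[
\E_t[\Gamma_t] \geq \frac{p}{M}\Big(\inner{\mX}{\mY_t^{-p-1}} - \tr(\mY_t^{-p})\Big) - O\Big(\frac{p^2 d}{kM}\Big)\tr(\mX^{-p}).
\]
Next I would apply \Cref{l:holder} with $\mA = \mX$ and $\mB = \mY_t$ to obtain $\inner{\mX}{\mY_t^{-p-1}} \geq r^{1/p}\,\tr(\mY_t^{-p}) = r^{1/p}\,r\,\tr(\mX^{-p})$, so that the parenthesized quantity is at least $r(r^{1/p}-1)\,\tr(\mX^{-p})$. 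The hypothesis $\big(\tr(\mY_t^{-p})\big)^{1/p} \geq \lambda\big(\tr(\mX^{-p})\big)^{1/p}$ says exactly $r \geq \lambda^p \geq 1$; since $r \mapsto r(r^{1/p}-1)$ is nondecreasing on $[1,\infty)$, this gives $r(r^{1/p}-1) \geq \lambda^p(\lambda-1)$, and substituting back yields precisely the claimed bound $\E_t[\Gamma_t] \geq \big(\tfrac{p(\lambda-1)\lambda^p}{M} - O(\tfrac{p^2 d}{kM})\big)\tr(\mX^{-p})$.

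For the final assertion, when $\lambda > 1+\eps$ we have $\lambda^p > 1$ and hence $(\lambda-1)\lambda^p > \eps$, so the main term is at least $\tfrac{p\eps}{M}\tr(\mX^{-p})$; this dominates the $O(p^2 d/(kM))\,\tr(\mX^{-p})$ error term exactly when $k \gtrsim pd/\eps$, making the expected progress strictly positive. Given the three cited lemmas I do not expect a genuine obstacle here — the argument is essentially bookkeeping. The only points that need a little care are the exact cancellation of the $\inner{\mX_{S_{t-1}}}{\mY_t^{-p-1}}$ cross term (which is what makes the gain/loss comparison clean), the elementary monotonicity of $r(r^{1/p}-1)$ on $[1,\infty)$, and matching the constant implicit in ``$k \gtrsim pd/\eps$'' against the hidden constant in the $O(\cdot)$ term; all the real work has already been done in \Cref{l:loss-p}, \Cref{l:gain-p}, and \Cref{l:holder}.
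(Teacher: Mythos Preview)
Your proposal is correct and follows essentially the same route as the paper: combine \Cref{l:gain-p} and \Cref{l:loss-p} so that the $\inner{\mX_{S_{t-1}}}{\mY_t^{-p-1}}$ terms cancel, apply \Cref{l:holder} with $\mA=\mX$, $\mB=\mY_t$, and then use $r\geq \lambda^p$ to finish. The only cosmetic difference is that the paper bounds $(r^{1/p}-1)\tr(\mY_t^{-p}) \geq (\lambda-1)\tr(\mY_t^{-p}) \geq (\lambda-1)\lambda^p \tr(\mX^{-p})$ in two steps rather than invoking monotonicity of $r\mapsto r(r^{1/p}-1)$; these are equivalent.
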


\begin{proof}
Combining the expected loss \Cref{l:loss-p} and the expected gain~\Cref{l:gain-p}, it follows that 
\begin{align*}
\E_t[\Gamma_t] & \geq \frac{p}{M} \Big( \inner{\mX}{\mY_t^{-p-1}} - \tr(\mY_t^{-p}) \Big) - O\Big( \frac{p^2 d}{kM} \Big) \cdot \tr(\mX^{-p}).
\end{align*}
Applying \Cref{l:holder}, we derive that
\begin{align*}
\E_t[\Gamma_t] & \geq \frac{p}{M} \bigg( \Big(\frac{\tr(\mY_t^{-p})}{\tr(\mX^{-p})}\Big)^{1/p} \cdot \tr(\mY_t^{-p}) - \tr(\mY_t^{-p}) \bigg) - O\Big( \frac{p^2 d}{kM} \Big) \cdot \tr(\mX^{-p}).
\end{align*}
By the assumption that $(\tr(\mY_t^{-p}))^{1/p} \geq \lambda (\tr(\mX^{-p}))^{1/p}$, or equivalently $\tr(\mY_t^{-p}) \geq \lambda^p \tr(\mX^{-p})$, we arrive at the final bound that
\begin{align*}
\E_t[\Gamma_t] & \geq \frac{p(\lambda - 1)}{M} \cdot \tr(\mY_t^{-p}) - O\Big( \frac{p^2 d}{kM} \Big) \cdot  \tr(\mX^{-p}) \geq \bigg( \frac{p(\lambda - 1) \lambda^p}{M} - O\Big( \frac{p^2 d}{kM} \Big) \bigg) \cdot \tr(\mX^{-p}). \qedhere
\end{align*}
\end{proof}

\subsection{Martingale Concentration Argument} \label{ss:concentration}

In this subsection, we prove that the total progress is concentrated around the expectation.
The proof uses the minimum eigenvalue assumption and the optimality characterization in Lemma~\ref{l:optimality-p} to bound the variance of the random process.
The proof idea is similar to the one in~\cite{LZ21}, but we need some additional efforts to take care of the higher order terms that are introduced by higher $p$-norm.

\begin{lemma} \label{l:progress-whp-p}
For any $\eta > 0$, it holds that
\begin{multline*}
\Pr \left[ \sum_{t=1}^\tau \Gamma_t \leq \sum_{t=1}^\tau \E_t[\Gamma_t] - \eta ~\bigcap~ \min_{1 \leq t \leq \tau} \lambda_{\min}(\mZ_t) \geq 1-5\gamma \right] \\
\leq \exp\left( -\Omega\left(\frac{\eta^2 k M}{\tau p^3 \sqrt{d} (\tr(\mX^{-p}))^2 + \eta p M \tr(\mX^{-p})} \right)\right).
\end{multline*}
\end{lemma}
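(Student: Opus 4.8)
The plan is to prove \Cref{l:progress-whp-p} by Freedman's martingale inequality applied to the centered increments $D_t := \E_t[\Gamma_t] - \Gamma_t$, after passing to a stopped version of the process so that the minimum-eigenvalue hypothesis is available as a \emph{predictable} guarantee. Set $\mathbf{1}_t := \mathbf{1}\big[\lambda_{\min}(\mZ_s)\geq 1-5\gamma \text{ for all } 1\leq s\leq t\big]$ and $\tilde D_t := \mathbf{1}_t D_t$. Since $\mZ_t$ is a function of $S_{t-1}$, the factor $\mathbf{1}_t$ is measurable with respect to the history up to time $t$, so $\{\tilde D_t\}$ is still a martingale difference sequence ($\E_t[\tilde D_t] = \mathbf{1}_t\E_t[\Gamma_t] - \mathbf{1}_t\E_t[\Gamma_t] = 0$), and $\sum_{t=1}^\tau \tilde D_t = \sum_{t=1}^\tau D_t$ on the event $\{\min_{1\leq t\leq\tau}\lambda_{\min}(\mZ_t)\geq 1-5\gamma\}$. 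Hence the probability in the lemma is at most $\Pr[\sum_{t=1}^\tau \tilde D_t\geq\eta]$, and whenever $\mathbf{1}_t=1$ I may use $\mZ_t\succcurlyeq(1-5\gamma)\mI$, equivalently $\mY_t\succcurlyeq(1-5\gamma)\mX$ by \eqref{eq:Y-X}, together with \Cref{obs:prob} (so that the swapped indices $i_t,j_t$ have $\vx(\cdot)\in(0,1)$) and hence \Cref{l:optimality-p}. It then remains to supply a uniform bound $|\tilde D_t|\leq R$ and a predictable variance bound $\sum_{t=1}^\tau\E_t[\tilde D_t^2]\leq\Sigma^2$, and to invoke Freedman's inequality $\Pr[\sum_t\tilde D_t\geq\eta]\leq\exp\big(-\eta^2/(2\Sigma^2+\tfrac23 R\eta)\big)$.

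For the pointwise bound I would, on $\{\mathbf{1}_t=1\}$, argue exactly as in the proofs of \Cref{l:loss-p} and \Cref{l:gain-p}. The closed form \eqref{eq:gt} together with $1-(1+x)^{-p}\leq\min\{1,px\}$ gives $g_t\leq p\cdot\vu_{j_t}^\top\mY_t^{-p-1}\vu_{j_t}$; the restriction to $S'_{t-1}$ (so that $\vu_{i_t}^\top\mY_t^{-1}\vu_{i_t}=\inner{\vv_{i_t}\vv_{i_t}^\top}{\mZ_t^{-1}}\leq\kappa=\tfrac1{2p}$ by \eqref{eq:Y-Z}) collapses the geometric-type sum defining $l_t$ to give $l_t\leq O(p)\cdot\vu_{i_t}^\top\mY_t^{-p-1}\vu_{i_t}$; and $\mY_t\succcurlyeq(1-5\gamma)\mX$ with $\gamma=\tfrac1{6p}$ together with \Cref{l:optimality-p} bounds each of $\vu_{j_t}^\top\mY_t^{-p-1}\vu_{j_t}$ and $\vu_{i_t}^\top\mY_t^{-p-1}\vu_{i_t}$ by $O\!\big(\tfrac1k\tr(\mX^{-p})\big)$. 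Since $g_t,l_t\geq 0$, this already gives $|\Gamma_t|\leq g_t+l_t=O\!\big(\tfrac pk\tr(\mX^{-p})\big)$ on $\{\mathbf{1}_t=1\}$; moreover $\E_t[g_t]$ and $\E_t[l_t]$ are averages of values of this size, so $|D_t|\leq|\Gamma_t|+|\E_t[\Gamma_t]|=O\!\big(\tfrac pk\tr(\mX^{-p})\big)$. I take $R=O\!\big(\tfrac pk\tr(\mX^{-p})\big)$, which produces the $\eta pM\tr(\mX^{-p})$ term in the denominator of the exponent after clearing the common factor $kM$.

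For the variance I would bound $\E_t[\tilde D_t^2]\leq\mathbf{1}_t\E_t[\Gamma_t^2]\leq 2\mathbf{1}_t\big(\E_t[g_t^2]+\E_t[l_t^2]\big)$, estimating each expectation with the same ingredients: pull out a single factor $\vu^\top\mY_t^{-p-1}\vu=O\!\big(\tfrac1k\tr(\mX^{-p})\big)$ and bound what remains as in \Cref{l:loss-p} and \Cref{l:gain-p}. For the loss, $\sum_{i\in S'_{t-1}}\vu_i^\top\mY_t^{-p-1}\vu_i\leq\tr(\mY_t^{-p})=O(\tr(\mX^{-p}))$ (again using $\mY_t\succcurlyeq(1-5\gamma)\mX$ and $\gamma=\tfrac1{6p}$), with the higher-order ($l\geq 2$) terms absorbed through $\kappa$ exactly as in \eqref{eq:l-higher}; for the gain, the total sampling weight is $\tfrac1M\sum_j\vx(j)\big(1+\alpha\inner{\vv_j\vv_j^\top}{\mA_t^{1/2}}\big)\leq\tfrac1M\big(k+\alpha\sqrt d\big)=O\!\big(\tfrac kM\big)$, using $\sum_j\vx(j)\vv_j\vv_j^\top=\mI$, $\tr(\mA_t^{1/2})\leq\sqrt d$, $\alpha=\sqrt d/\gamma$ and $k\gtrsim pd$, while the higher-order gain terms are absorbed as in \eqref{eq:S2-higher}. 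This gives $\E_t[\Gamma_t^2]=O\!\big(\tfrac{p^2}{kM}(\tr(\mX^{-p}))^2\big)$, which is in particular $O\!\big(\tfrac{p^3\sqrt d}{kM}(\tr(\mX^{-p}))^2\big)$ since $k\geq d$; summing over $t\leq\tau$ gives $\Sigma^2=O\!\big(\tfrac{\tau p^3\sqrt d}{kM}(\tr(\mX^{-p}))^2\big)$, the first denominator term. Substituting $R$ and $\Sigma^2$ into Freedman's bound and multiplying numerator and denominator of the exponent by $kM$ yields exactly the claimed inequality.

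The main obstacle is the variance estimate, and in particular keeping the dependence on $p$ (and on $d$) under control. Here $\Gamma_t$ is a sum of $2p$ terms of the shape $\binom{p}{l}\frac{(\vu^\top\mY_t^{-1}\vu)^{l-1}\,\vu^\top\mY_t^{-p-1}\vu}{(1\pm\vu^\top\mY_t^{-1}\vu)^l}$, and one must simultaneously use (i) the restriction to $S'_{t-1}$, i.e.\ the new parameter $\kappa$ in the algorithm, to prevent the denominators $(1-\vu^\top\mY_t^{-1}\vu)^{-l}$ in the loss from blowing up; (ii) the optimality condition \Cref{l:optimality-p} to extract a factor $\tfrac1k$ from one copy of $\vu^\top\mY_t^{-p-1}\vu$ per exchanged vector; and (iii) the identity $\sum_j\vx(j)\inner{\vv_j\vv_j^\top}{\mA_t^{1/2}}=\tr(\mA_t^{1/2})\leq\sqrt d$ (rather than a termwise estimate) to tame the learning-rate weights $\alpha\inner{\vv\vv^\top}{\mA_t^{1/2}}$ inside $\E_t[g_t^2]$. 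A secondary point to get right is that the stopping indicator $\mathbf{1}_t$ is genuinely predictable (it is, because $\mZ_t$ depends only on $S_{t-1}$), which is precisely the device that lets the intersection with $\{\min_{1\leq t\leq\tau}\lambda_{\min}(\mZ_t)\geq 1-5\gamma\}$ appear on the left-hand side of the lemma.
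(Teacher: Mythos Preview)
Your proposal is correct and follows the same overall scheme as the paper: Freedman's inequality applied to the centered increments $D_t=\E_t[\Gamma_t]-\Gamma_t$, with the pointwise bound $R=O\big(\tfrac{p}{k}\tr(\mX^{-p})\big)$ obtained exactly as you describe (via \Cref{l:optimality-p}, $\mY_t\succcurlyeq(1-5\gamma)\mX$, and the $\kappa$-restriction on $S'_{t-1}$). Your explicit use of the predictable indicator $\mathbf{1}_t$ is in fact a cleaner way to handle the intersection event than what the paper writes.

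The one substantive difference is the variance step. The paper does not bound $\E_t[\Gamma_t^2]$ directly; it uses $\E_t[X_t^2]\leq R\cdot\E_t[|X_t|]\leq R\cdot(\E_t[g_t]+\E_t[l_t])$ and then bounds $\E_t[g_t]\leq O\big(\tfrac{p^2\sqrt d}{M}\big)\tr(\mX^{-p})$ via the \emph{second} inequality of \Cref{l:E-A-gamma}, which yields $\tfrac{1+\alpha\langle\vv_j\vv_j^\top,\mA_t^{1/2}\rangle}{1+\vu_j^\top\mY_t^{-1}\vu_j}\leq\alpha$ and hence $\E_t[g_t]\leq\tfrac{p\alpha}{M}\langle\mX,\mY_t^{-p-1}\rangle$; the factor $\alpha=\sqrt d/\gamma=6p\sqrt d$ is exactly where the $p^3\sqrt d$ in the statement comes from. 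Your route---pull out a second pointwise factor $O(\tfrac1k)$ from \Cref{l:optimality-p} and bound the total sampling mass by $\tfrac1M(k+\alpha\sqrt d)$---actually gives the tighter estimate $\E_t[\Gamma_t^2]=O\big(\tfrac{p^2}{kM}\big)(\tr(\mX^{-p}))^2$, but two small remarks are in order: the subsequent inequality $\tfrac{p^2}{kM}\leq\tfrac{p^3\sqrt d}{kM}$ is just $1\leq p\sqrt d$, not ``since $k\geq d$''; and $\tfrac1M(k+\alpha\sqrt d)=O(\tfrac kM)$ needs $k\gtrsim d/\gamma=6pd$, which is not a hypothesis of the lemma---however, using only $k\geq d$ one has $k+6pd\leq 7pk$, giving $\E_t[g_t^2]=O\big(\tfrac{p^3}{kM}\big)(\tr(\mX^{-p}))^2$ and hence the stated bound with no extra assumption. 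Either route lands on the same Freedman exponent.
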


\begin{proof}
We define two sequences of random variables $\{X_t\}_t$ and $\{Y_t\}_t$, where $X_t := \E_t[\Gamma_t] - \Gamma_t$ and $Y_t := \sum_{l=1}^t X_l$.
It is easy to check that $\{Y_t\}_t$ is a martingale with respect to $\{S_t\}_t$.
We will use Freedman's inequality (\Cref{t:Freedman}) to bound the probability $\Pr[Y_\tau \geq \eta \cap \min_{1 \leq t \leq \tau} \lambda_{\min}(\mZ_t) \geq 1-5\gamma]$.

In the following, we first show that if the event $\min_{1 \leq t \leq \tau} \lambda_{\min}(\mZ_t) \geq 1-5\gamma$ happens, then we can upper bound $X_t$ and $\E_t[X_t^2]$ so that we can apply Freedman's inequality.
To upper bound $X_t$, we first prove an upper bound on $g_t$ and $l_t$.

Note that, if the event $\lambda_{\min}(\mZ_t) \geq 1-5\gamma$ happens, then $\mY_t \succeq (1-5\gamma) \mX$, which implies
\[
\vu_{i_t}^\top \mY_t^{-p-1}\vu_{i_t} \leq (1-5\gamma)^{-p-1} \vu_{i_t}^\top \mX^{-p-1} \vu_{i_t} \quad \text{and} \quad \vu_{j_t}^\top \mY_t^{-p-1} \vu_{j_t} \leq (1-5\gamma)^{-p-1} \vu_{j_t}^\top \mX^{-p-1} \vu_{j_t}.
\]
By \Cref{obs:prob}, for all the exchange pairs $i_t, j_t$, it holds that $\vx(i_t), \vx(j_t) \in (0,1)$. Thus, we can apply \Cref{l:optimality-p} to show that $\inner{\mX^{-p-1}}{\vu_{i_t} \vu_{i_t}^\top} \leq \frac1k \cdot \tr(\mX^{-p})$ and $\inner{\mX^{-p-1}}{\vu_{j_t} \vu_{j_t}^\top} \leq \frac1k \cdot \tr(\mX^{-p})$.
Therefore, since $\gamma = 1/6p$, 
\begin{align} \label{eq:opt-cond}
\vu_{i_t}^\top \mY_t^{-p-1} \vu_{i_t} \leq O\Big( \frac1k \Big) \cdot \tr(\mX^{-p}) \quad \text{and} \quad \vu_{j_t}^\top \mY_t^{-p-1} \vu_{j_t} \leq O\Big( \frac1k \Big) \cdot \tr(\mX^{-p}).
\end{align}

We first give a deterministic bound on $g_t$.
Let $x = \vu_{j_t}^\top \mY_t^{-p-1} \vu_{j_t}$, according to \eqref{eq:gt1}, the gain term $g_t$ can be written as
\[
g_t = \vu_{j_t}^\top \mY_t^{-p-1} \vu_{j_t} \cdot \frac1x \cdot \bigg( 1 - \Big( 1 - \frac{x}{1+x} \Big)^p \bigg).
\]
Since $(1-y)^p \geq 1-py$ for $y \in [0,1]$ and $p \geq 1$, we can bound $g_t$ by
\[
0 \leq g_t \leq \vu_{j_t}^\top \mY_t^{-p-1} \vu_{j_t} \cdot \frac1x \cdot \frac{px}{1+x} \leq p \cdot \vu_{j_t}^\top \mY_t^{-p-1} \vu_{j_t} \leq O\Big( \frac{p}{k} \Big) \cdot \tr(\mX^{-p}),
\]
where the last inequality follows from~\eqref{eq:opt-cond}.

Then, we give an deterministic bound on $l_t$.
By the definition of $S'_{t-1}$ and \eqref{eq:Y-Z}, it holds that $0 < \vu_{i_t}^\top \mY_t^{-1} \vu_{i_t} = \inner{\vv_{i_t} \vv_{i_t}^\top}{\mZ_t^{-1}} \leq \kappa$. 
Thus, we can bound $l_t$ by
\begin{align*}
0 \leq l_t & = \sum_{i=1}^p \binom{p}{i} \frac{(\vu_{i_t}^\top \mY_t^{-1} \vu_{i_t})^{i-1} \cdot \vu_{i_t}^\top \mY_t^{-p-1} \vu_{i_t}}{(1-\vu_{i_t}^\top \mY_t^{-1} \vu_{i_t})^i} \\
& \leq u_{i_t}^\top \mY_t^{-p-1} \vu_{i_t} \cdot \sum_{i=1}^p \binom{p}{i} \frac{\kappa^{i-1}}{(1-\kappa)^i} \\
& = \vu_{i_t}^\top \mY_t^{-p-1} \vu_{i_t} \cdot \frac{1}{\kappa} \cdot \bigg( \frac{1}{(1-\kappa)^p} - 1\bigg).
\end{align*}
For $\kappa = \frac{1}{2p}$, we can control $l_t$ such that
\[
l_t \leq O(p) \cdot \vu_{i_t}^\top \mY_t^{-p-1} \vu_{i_t} \leq O\Big( \frac{p}{k} \Big) \cdot \tr(\mX^{-p}), 
\]
where the last inequality follows from \eqref{eq:opt-cond}.

With the above bounds on $g_t$ and $l_t$, we can control the size of the martingale increment by
\[
|X_t| = |\E_t[\Gamma_t] - \Gamma_t|
\leq g_t + l_t 
\leq O\Big(\frac{p}{k} \Big) \cdot \tr(\mX^{-p}).
\]

Next, we upper bound $\E_t[X_t^2]$ by
\begin{equation*} 
\begin{aligned} 
\E_t[X^2_t] 
& \leq |X_t| \cdot \E_t[|X_t|] \leq O\Big(\frac{p}{k} \Big) \cdot \tr(\mX^{-p}) \cdot  \Big(\E_t[g_t ] + \E_t[l_t]\Big).
\end{aligned}
\end{equation*}
Using \Cref{l:loss-p}, we bound the expected loss term by
\begin{align*}
\E_t[l_t] 
& \leq \frac{p}{M} \Big( \tr(\mY_t^{-p}) - \inner{\mX_{S_{t-1}}}{\mY_t^{-p-1}} \Big) + O\Big( \frac{p^2 d}{kM} \Big) \cdot \tr(\mX^{-p}) \\
& \leq O\Big( \frac{p }{M}\Big) \cdot \tr(\mX^{-p}) + O\Big( \frac{p^2 d}{kM} \Big) \cdot \tr(\mX^{-p}) \\ 
& \leq O\Big( \frac{p}{M} \Big) \cdot \tr(\mX^{-p}),
\end{align*}
where the inequality on the second line follows by the assumption that $\lambda_{\min}(\mZ_t) \geq 1-5\gamma$ happens (and \eqref{eq:Y-X}),
the inequality on the last line follows by the assumption on $k$. 

Then, with the expression of $g_t$ in~\eqref{eq:gt1}, we write the expected gain as
\begin{align*}
\E_t[g_t] 
= \sum_{j \in [m] \backslash S_{t-1}} \frac{\vx(j)}{M} 
\cdot \Big( 1+ \alpha \inner{\vv_j \vv_j^\top}{\mA_t^{\frac12}} \Big) 
\cdot \frac{\vu_{j}^\top \mY_t^{-p-1} \vu_{j}}{\vu_{j}^\top \mY_t^{-1} \vu_{j}} \bigg( 1 - \bigg( 1 - \frac{\vu_{j}^\top \mY_t^{-1} \vu_{j}}{1 + \vu_{j}^\top \mY_t^{-1} \vu_{j}}\bigg)^p \bigg).
\end{align*}
Using the fact that $(1-y)^p \geq 1- py$ for $y \in [0,1]$ and $p\geq 1$, the expected gain can be bounded by
\begin{align*}
    \E_t[g_t] \leq \sum_{j \in [m] \backslash S_{t-1}} \frac{p}{M} 
\cdot \vx(j) \cdot \vu_j^\top \mY_t^{-p-1} \vu_j \cdot \frac{1+ \alpha \inner{\vv_j \vv_j^\top}{\mA_t^{\frac12}} }{1+\vu_j^\top \mY_t^{-1} \vu_j}.
\end{align*}
By \eqref{eq:Y-Z}, $\vu_j^\top \mY_t^{-1} \vu_j = \inner{\vv_j \vv_j^\top}{\mZ^{-1}}$. Then, by the second inequality in \Cref{l:E-A-gamma}, it holds that
\[
\E_t[g_t] \leq \frac{p}{M} \cdot \alpha \lambda_{\min}(\mZ_t) \cdot \sum_{j=1}^m \vx(j) \cdot \vu_j^\top \mY_t^{-p-1} \vu_j \leq \frac{p}{M} \cdot \alpha \cdot \inner{\mX}{\mY_t^{-p-1}},
\]
where the last inequality holds as $\lambda_{\min}(\mZ_t) < 1$ before the termination of the algorithm. By the assumption that $\lambda_{\min}(\mZ_t) \geq 1-5\gamma$ happens (and \eqref{eq:Y-X}) and the choice of $\alpha = \sqrt{d}/\gamma$ and $\gamma = 1/6p$, we obtain the bound that
\[
\E_t[g_t] \leq \frac{p \sqrt{d}}{\gamma (1-5\gamma)^{p+1} M} \cdot \tr(\mX^{-p}) \leq O\Big( \frac{p^2 \sqrt{d}}{M} \Big) \cdot \tr(\mX^{-p}).
\]

Therefore,
\[
\E_t[X^2_t] \leq O\Big( \frac{p}{k} \Big) \cdot O\Big( \frac{p^2 \sqrt{d}}{M} \Big) \cdot \big(\tr(\mX^{-p})\big)^2 = O\Big( \frac{p^3 \sqrt{d}}{k M} \Big) \cdot \big(\tr(\mX^{-p})\big)^2,
\]
which implies
\[
W_t := \sum_{l=1}^t \E_l[X^2_l] \leq  O\Big(  \frac{\tau p^3 \sqrt{d}}{k M}  \Big) \cdot \big(\tr(\mX^{-p})\big)^2, \quad \forall t \in [\tau].
\]
Finally, we can apply Freedman's martingale inequality Theorem~\ref{t:Freedman} with 
\[
R= O\Big(\frac{p}{k}\Big) \cdot \tr(\mX^{-p}) \qquad \text{and} \qquad \sigma^2 = O\Big(  \frac{p^3 \sqrt{d} \tau}{k M} \Big) \cdot \big(\tr(\mX^{-p})\big)^2
\]
to conclude that
\begin{align*}
\Pr\left[ Y_\tau \geq \eta ~\bigcap~ \min_{1 \leq t \leq \tau} \lambda_{\min}(\mZ_t) \geq 1-5\gamma \right] & \leq \Pr[\exists t \in [\tau]: Y_t \geq \eta \cap W_t \leq \sigma^2]\\
& \leq \exp\left(-\frac{\eta^2/2}{\sigma^2 + R\eta/3}\right) \\
& = \exp\left( -\Omega\left(\frac{\eta^2 k M}{\tau p^3 \sqrt{d} (\tr(\mX^{-p}))^2 + \eta p M \tr(\mX^{-p})} \right)\right).
\end{align*}
The lemma follows by noting that $\sum_{t=1}^\tau \Gamma_t \leq \sum_{t=1}^\tau \E_t[\Gamma_t] - \eta$ is equivalent to $Y_\tau \geq \eta$. 
\end{proof}

\subsection{Proof of \Cref{t:termination-p}} \label{ss:termination-p}

We are ready to prove Theorem~\ref{t:termination-p}.
Let $\tau = \frac{2M}{\eps p}$. 
We want to upper bound the probability that the following three events happen simultaneously:
\begin{itemize}
    \item $E_1$: The randomized exchange algorithm entered the second phase, i.e., $\lambda_{\min}(\mZ_1) \geq 1-2\gamma$ using the notation in this subsection.
    \item $E_2$: $\min_{1 \leq t \leq \tau} \lambda_{\min}(\mZ_t) \geq 1-5\gamma$.
    \item $E_3$: The second phase of the algorithm has not terminated by time $\tau$.
\end{itemize}

Suppose the event $E_3$ happens. 
Then $\lambda = \min_{1 \leq t \leq \tau+1} (\tr(\mY_t^{-p}))^{1/p}/(\tr(\mX^{-p}))^{1/p} > (1+\eps)$.
If the event $E_2$ also happens, then \Cref{l:p-exp-progress} implies that 
\begin{align} \label{eq:exp-progress-p}
\sum_{t=1}^{\tau} \E_t \left[ \Gamma_t \right] \geq \tau \cdot \bigg( \frac{p(\lambda - 1) \lambda^p}{M} - O\Big( \frac{p^2 d}{kM} \Big) \bigg) \cdot \tr(\mX^{-p}) \geq \frac{2M}{\eps p} \cdot \bigg( \frac{\eps p}{M} - \frac{\eps p}{2M} \bigg) \cdot \tr(\mX^{-p})
\geq \tr\left(\mX^{-p}\right),
\end{align}
where the second inequality holds for $k \gtrsim p d / \eps$ with large enough constant.

On the other hand, the initial solution of the second phase satisfies $\mZ_1 \succcurlyeq (1-2\gamma) \mI$ (follows by $E_1$), which implies that $\mY_1 \succcurlyeq (1-2\gamma) \mX$. 
Thus, $\tr(\mY_1^{-p}) \leq (1-2\gamma)^{-p} \tr(\mX^{-p}) \leq (1-1/3p)^{-p} \tr(\mX^{-p}) \leq 3 \tr(\mX^{-p}) /2$, for $p \geq 1$.
When the event $E_2$ happens, we know from Lemma~\ref{l:E-A-gamma} that 
$\inner{\vv_{i_t} \vv_{i_t}^\top}{\mZ_t^{-1}} \leq \alpha \cdot \inner{\vv_{i_t} \vv_{i_t}^\top}{\mA_t^{\frac12}} < \frac12$,
and so we can apply~\eqref{eq:trace-bound-p} to deduce that
\begin{align*}
\tr(\mY_{\tau+1}^{-p}) \leq \tr(\mY_{1}^{-p}) - \sum_{t=1}^\tau \Gamma_t \leq \frac32 \cdot \tr\big(\mX^{-p}\big) - \sum_{t=1}^\tau \Gamma_t.
\end{align*}
Since the algorithm has not terminated by time $\tau$,
\begin{align} \label{eq:progress-up-p}
\tr(\mX^{-p}) \leq \tr(\mY_{\tau+1}^{-p}) \quad \implies \quad \sum_{t=1}^\tau \Gamma_t \leq \frac12 \cdot \tr\big(\mX^{-p}\big). 
\end{align}
Combining~\eqref{eq:exp-progress-p} and~\eqref{eq:progress-up-p}, $E_1 \cap E_2 \cap E_3$ implies a large deviation of the progress from the expectation such that
\[
\sum_{t=1}^\tau \Gamma_t - \sum_{t=1}^\tau \E_t[\Gamma_t] < -\frac12 \cdot \tr(\mX^{-p}).
\]
Thus, we can apply Lemma~\ref{l:progress-whp-p} with $\eta=\frac12 \cdot \tr(\mX^{-p})$ and $\tau=\frac{2M}{\eps p}$ to conclude that
\begin{equation*}
\begin{aligned}
\Pr\left[ E_1 \cap E_2 \cap E_3  \right] \leq & \Pr\left[ \sum_{t=1}^{\tau} \Gamma_t <  \sum_{t=1}^{\tau} \E_t \left[ \Gamma_t \right] - \frac12 \cdot \tr(\mX^{-p}) ~\bigcap~ E_2 \right]
\\
\leq & \exp\bigg( -\Omega\bigg(\frac{(\tr(\mX^{-p}))^2 \cdot k M }{\left(\frac{2M}{\eps p}\right) p^3 \sqrt{d} \cdot (\tr(\mX^{-p}))^2 + p M \cdot (\tr(\mX^{-p}))^2} \bigg)\bigg)
\\
\leq & \exp\left(-\Omega\Big( \frac{\eps k}{p^2 \sqrt{d}} \Big)\right) \leq \exp\left(-\Omega\Big( \frac{\sqrt{d}}{p} \Big)\right) = \exp\left(-\Omega\Big( \gamma \sqrt{d} \Big)\right),
\end{aligned}
\end{equation*}
where the last inequality holds by the assumption $k \gtrsim p d/\eps$, and the last equality follows as $\gamma = 1/6p$.

\section*{Acknowledgment}
We thank Mohit Singh for bringing the $\Phi_p$ objective function to our attention.
We also thank anonymous reviewers of an earlier version of this manuscript for helpful suggestions.

\bibliographystyle{plain}
\bibliography{references}

\appendix

\section{Background}

We will use the following two results to analyze the $\Phi_p$ objective function.

\begin{lemma}[Sherman-Morrison Formula~\cite{SM50}] \label{l:Sherman-Morrison}
Suppose $\mA \in \R^{d \times d}$ is an invertible matrix, and $\vu, \vv \in \R^d$. Then, $\mA + \vu \vv^\top$ is invertible if and only if $1 + \vv^\top \mA^{-1} \vu \neq 0$, and under this case
\[
\left( \mA + \vu \vv^\top \right)^{-1} = \mA^{-1} - \frac{\mA^{-1} \vu \vv^\top \mA^{-1}}{1 + \vv^\top \mA^{-1} \vu}.
\]
\end{lemma}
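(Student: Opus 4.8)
The plan is to prove the statement by direct verification, handling the invertibility criterion as two separate implications.

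For the ``only if'' direction, I would first record the identity $(\mA + \vu\vv^\top)(\mA^{-1}\vu) = \vu + (\vv^\top\mA^{-1}\vu)\,\vu = (1 + \vv^\top\mA^{-1}\vu)\,\vu$, which holds with no assumption on the invertibility of $\mA + \vu\vv^\top$. If $1 + \vv^\top\mA^{-1}\vu = 0$, then $\vu \neq 0$ (otherwise the scalar would equal $1$), hence $\mA^{-1}\vu \neq 0$ since $\mA$ is invertible, and the identity exhibits $\mA^{-1}\vu$ as a nonzero vector in the kernel of $\mA + \vu\vv^\top$; so $\mA + \vu\vv^\top$ is singular. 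The only care needed here is to exclude $\vu = 0$ before asserting $\mA^{-1}\vu \neq 0$.

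For the ``if'' direction, assume $\beta := 1 + \vv^\top\mA^{-1}\vu \neq 0$ and set $\mB := \mA^{-1} - \beta^{-1}\mA^{-1}\vu\vv^\top\mA^{-1}$. I would expand $(\mA + \vu\vv^\top)\mB$, using $\mA\mA^{-1} = \mI$ and pulling the scalar $\vv^\top\mA^{-1}\vu$ out of the two rank-one cross terms; every non-identity term is a multiple of the single rank-one matrix $\vu\vv^\top\mA^{-1}$, with total coefficient $1 - \beta^{-1} - \beta^{-1}\vv^\top\mA^{-1}\vu = 1 - \beta^{-1}(1 + \vv^\top\mA^{-1}\vu) = 0$, so $(\mA + \vu\vv^\top)\mB = \mI$. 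Since a square matrix over $\R$ possessing a right inverse is invertible with that same two-sided inverse (or, symmetrically, by the identical computation for $\mB(\mA + \vu\vv^\top)$), we conclude that $\mA + \vu\vv^\top$ is invertible with inverse $\mB$, which is precisely the claimed formula.

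There is no real obstacle: the argument is a short rank-one computation, and the only things to watch are (i) the $\vu = 0$ edge case above and (ii) tracking the rank-one terms so that the scalar coefficient visibly collapses to zero exactly because $\beta = 1 + \vv^\top\mA^{-1}\vu$. As an alternative one could read both implications off the two Schur-complement expressions for the inverse of the $(d+1)\times(d+1)$ block matrix $\left(\begin{smallmatrix} \mA & -\vu \\ \vv^\top & 1 \end{smallmatrix}\right)$, whose Schur complements with respect to the two diagonal blocks are $\mA + \vu\vv^\top$ and $1 + \vv^\top\mA^{-1}\vu$; but the direct check above is shorter.
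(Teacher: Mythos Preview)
Your proof is correct and complete; both implications are handled cleanly, and you correctly flag the $\vu = 0$ edge case. However, the paper does not prove this lemma at all: it is stated in the background appendix as a classical result with a citation to~\cite{SM50}, so there is no ``paper's own proof'' to compare against. Your direct-verification argument is the standard one and would serve perfectly well if a self-contained proof were desired.
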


\begin{lemma}[Lieb-Thirring Inequality~\cite{LT76}] \label{l:Lieb-Thirring}
For any $\mA, \mB \succcurlyeq 0$ and $p \geq 1$, it holds that
\[
\tr((\mB \mA \mB)^p) \leq \tr(\mB^p \mA^p \mB^p).
\]
\end{lemma}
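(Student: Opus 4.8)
The plan is to deduce the inequality from a \emph{log-majorization} statement (the Araki refinement of Lieb--Thirring) together with the fact that log-majorization implies the corresponding trace inequality. First I would rewrite both sides through singular values. Since $\mB\mA\mB = (\mA^{1/2}\mB)^\top(\mA^{1/2}\mB)\succcurlyeq 0$, its eigenvalues are $\sigma_i(\mA^{1/2}\mB)^2$, so $\tr((\mB\mA\mB)^p)=\sum_i \sigma_i(\mA^{1/2}\mB)^{2p}$. On the other side, by cyclicity $\tr(\mB^p\mA^p\mB^p)=\tr(\mA^p\mB^{2p})$, and $\mA^p\mB^{2p}$ has the same eigenvalues as $\mA^{p/2}\mB^{2p}\mA^{p/2}=(\mB^p\mA^{p/2})^\top(\mB^p\mA^{p/2})$, namely $\sigma_i(\mA^{p/2}\mB^p)^2$, so $\tr(\mB^p\mA^p\mB^p)=\sum_i \sigma_i(\mA^{p/2}\mB^p)^2$. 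Putting $X=\mA^{1/2}$ and $Y=\mB$ (both $\succcurlyeq 0$), the lemma becomes
\[
\sum_i \big(\sigma_i(XY)^p\big)^2 \;\le\; \sum_i \big(\sigma_i(X^pY^p)\big)^2 ,
\]
which would follow from the log-majorization $\big(\sigma_i(XY)^p\big)_i \prec_{\log} \big(\sigma_i(X^pY^p)\big)_i$ (singular values in decreasing order), because a log-majorization $a\prec_{\log} b$ forces $\sum_i g(a_i)\le \sum_i g(b_i)$ for every $g$ with $t\mapsto g(e^t)$ convex, and $g(t)=t^2$ qualifies.

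To prove the log-majorization I would use the two classical ingredients. The Weyl multiplicative inequalities, obtained from submultiplicativity of the operator norm on the exterior powers $\wedge^k$: for any square $M$ one has $\prod_{i\le k}|\lambda_i(M)|\le \prod_{i\le k}\sigma_i(M)$, and for any $M,N$ one has $\prod_{i\le k}\sigma_i(MN)\le \prod_{i\le k}\sigma_i(M)\sigma_i(N)$, with equality at $k=d$ coming from determinants. The base case $p=2$ then falls out: $\sigma_i(XY)^2=\lambda_i(XY^2X)=\lambda_i(X^2Y^2)$ (a $\lambda(AB)=\lambda(BA)$ manipulation, and these eigenvalues are nonnegative), hence $\prod_{i\le k}\sigma_i(XY)^2=\prod_{i\le k}\lambda_i(X^2Y^2)\le \prod_{i\le k}\sigma_i(X^2Y^2)$ with equality at $k=d$; that is $\big(\sigma_i(XY)^2\big)_i\prec_{\log}\big(\sigma_i(X^2Y^2)\big)_i$. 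Raising a log-majorization to a power and using transitivity (applying the $p=2$ case to the pair $(X^p,Y^p)$) propagates it from $p$ to $2p$, covering all dyadic exponents; a convexity argument in the exponent (e.g.\ convexity of $r\mapsto \sum_{i\le k}\log\sigma_i(X^rY^r)$, or a Stein--Hirschman complex-interpolation argument) then upgrades it to all real $p\ge 1$, in particular to every integer $p$. Feeding this into the displayed reduction finishes the proof.

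The main obstacle is precisely the dyadic-to-general step: passing from $p=2^m$ to an arbitrary (even integer) exponent is not elementary algebra but needs a genuine convexity or complex-interpolation input, and setting up Weyl's inequalities rigorously requires the exterior-power formalism. A more self-contained route for the integer, trace-only version actually needed here is to start from the clean identity $\tr(X^2Y^2)-\tr((XY)^2)=\tfrac12\norm{XY-YX}_F^2\ge 0$ and try to bootstrap upward in $p$ with Cauchy--Schwarz for the Hilbert--Schmidt inner product; but taming the cross terms there is itself the delicate point. Since the statement is classical, one may of course simply cite~\cite{LT76}; the sketch above is the argument I would reconstruct.
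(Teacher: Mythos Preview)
The paper does not actually prove this lemma: it is stated in the Background appendix as a classical result and simply cited to~\cite{LT76}, with no accompanying argument. So there is no paper-side proof to compare against; the paper treats Lieb--Thirring as a black box.

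Your sketch, which follows the Araki log-majorization route, is essentially sound. The reduction of both sides to singular values of $XY$ and $X^pY^p$ is correct (the manipulations with $\lambda(AB)=\lambda(BA)$ and $(\mA^{1/2}\mB)^\top(\mA^{1/2}\mB)=\mB\mA\mB$ all check out), the Weyl product inequality yields the base case $p=2$ as you describe, and the doubling step from $p$ to $2p$ via transitivity of log-majorization is valid. You are also right that the only genuinely non-elementary step is the passage from dyadic exponents to all $p\ge 1$; this is precisely the content of Araki's refinement, and is typically handled either by an operator-monotonicity argument (showing that $\|BAB\|\le 1$ implies $\|B^sA^sB^s\|\le 1$ for $0<s\le 1$, then lifting to exterior powers to get the full log-majorization) or by a Stein--Hirschman complex-interpolation argument. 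Note that even for the integer exponents actually used in the paper this step cannot be skipped, since doubling alone never reaches, e.g., $p=3$. Your alternative suggestion of bootstrapping the Hilbert--Schmidt identity $\tr(X^2Y^2)-\tr((XY)^2)=\tfrac12\|XY-YX\|_F^2$ is appealing but, as you anticipate, does not extend cleanly to higher $p$ without essentially rediscovering the majorization machinery.

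In short: the paper offers nothing to compare with, your outline is a correct roadmap to a full proof, and the gap you flag is real but standard.
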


We will use concentration inequality for martingales to analyze our randomized algorithm.
Recall that a sequence of random variables $Y_1, \ldots, Y_\tau$ is a martingale with respect to a sequence of random variables $Z_1, \ldots, Z_\tau$ if for all $t > 0$, it holds that 
\begin{enumerate}
\item $Y_t$ is a function of $Z_1, \ldots, Z_{t-1}$;
\item $\E[|Y_t|] < \infty$; 
\item $\E[Y_{t+1} | Z_1, \ldots, Z_t] = Y_t$. 
\end{enumerate}
We will use the following theorem by Freedman to bound the probability that $Y_\tau$ is large.

\begin{theorem}[\cite{Fre75,Tro11}] \label{t:Freedman}
Let $\{Y_t\}_t$ be a real-valued martingale with respect to $\{Z_t\}_t$,
and $\{X_t=Y_t - Y_{t-1}\}_t$ be the difference sequence. 
Assume that $X_t \leq R$ deterministically for $1 \leq t \leq \tau$. 
Let $W_t := \sum_{j=1}^t \E[X_j^2 | Z_1, ..., Z_{j-1}]$ for $1 \leq t \leq \tau$.
Then, for all $\delta \geq 0$ and $\sigma^2 > 0$,
\[
\Pr\left( \exists t \in [\tau]: Y_t \geq \delta~{\rm and}~W_t \leq \sigma^2 \right) \leq \exp\left( \frac{-\delta^2/2}{\sigma^2 + R\delta/3} \right).
\]
\end{theorem}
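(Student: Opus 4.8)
The plan is to prove this by the standard Cram\'er--Chernoff exponential-supermartingale method, executed uniformly in $t$. I work with the natural filtration $\mathcal{F}_t := \sigma(Z_1,\dots,Z_t)$, so that each $Y_t$ (hence each increment $X_t$) is $\mathcal{F}_t$-measurable, $\E[X_t \mid \mathcal{F}_{t-1}] = 0$, and the quadratic-variation proxy $W_t = \sum_{j=1}^t \E[X_j^2 \mid \mathcal{F}_{j-1}]$ is $\mathcal{F}_{t-1}$-measurable (``predictable''). One may assume $\delta > 0$ and $R > 0$, the boundary cases being trivial or recovered by taking limits in the final bound. Fix a parameter $\lambda \in [0, 3/R)$ to be chosen at the very end.

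First I would record the elementary estimate $\dfrac{e^{x} - 1 - x}{x^{2}} \le \dfrac{1}{2(1 - x/3)}$ for $0 \le x < 3$, which follows from the power series via $e^{x} - 1 - x = \tfrac{x^{2}}{2}\sum_{m \ge 0} \tfrac{2\,x^{m}}{(m+2)!} \le \tfrac{x^{2}}{2}\sum_{m \ge 0} (x/3)^{m}$, using $\tfrac{2}{(m+2)!} \le 3^{-m}$ for every $m \ge 0$. Since $y \mapsto (e^{y} - 1 - y)/y^{2}$ is nondecreasing on $\R$ and $\lambda X_t \le \lambda R < 3$, this yields the pointwise inequality $e^{\lambda X_t} \le 1 + \lambda X_t + \tfrac{\lambda^{2} X_t^{2}}{2(1 - \lambda R/3)}$. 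Taking $\E[\,\cdot \mid \mathcal{F}_{t-1}]$ and using $\E[X_t \mid \mathcal{F}_{t-1}] = 0$ together with $1 + u \le e^{u}$ gives $\E[e^{\lambda X_t} \mid \mathcal{F}_{t-1}] \le \exp\!\big(c\, v_t\big)$ with $v_t := \E[X_t^{2} \mid \mathcal{F}_{t-1}]$ and $c := \tfrac{\lambda^{2}}{2(1 - \lambda R/3)} \ge 0$. Consequently, setting $M_t := \exp\!\big(\lambda Y_t - c\, W_t\big)$ for $t \ge 0$ (with $Y_0 = W_0 = 0$, so $M_0 = 1$), predictability of $W_t$ gives $\E[M_t \mid \mathcal{F}_{t-1}] = M_{t-1}\, e^{-c v_t}\, \E[e^{\lambda X_t} \mid \mathcal{F}_{t-1}] \le M_{t-1}$, so $\{M_t\}_{t=0}^{\tau}$ is a nonnegative supermartingale with $\E[M_0] = 1$.

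Next I would convert this into the uniform-in-$t$ tail bound. On the event $A := \{\exists\, t \in [\tau] : Y_t \ge \delta \text{ and } W_t \le \sigma^{2}\}$, choosing such a $t$ and using $c \ge 0$ gives $M_t = e^{\lambda Y_t - c W_t} \ge e^{\lambda \delta - c \sigma^{2}}$, so $\sup_{0 \le s \le \tau} M_s \ge e^{\lambda \delta - c \sigma^{2}}$ on $A$. By Ville's maximal inequality for nonnegative supermartingales (equivalently, optional stopping of $M$ at the first time it crosses the threshold), $\Pr[\sup_s M_s \ge a] \le \E[M_0]/a = 1/a$, hence $\Pr[A] \le \exp\!\big(-\lambda\delta + c\sigma^{2}\big)$. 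Finally, taking $\lambda := \dfrac{\delta}{\sigma^{2} + R\delta/3}$ --- which satisfies $0 \le \lambda R < 3$ since $\sigma^{2} > 0$ --- one computes $1 - \lambda R/3 = \sigma^{2}/(\sigma^{2} + R\delta/3)$, so that $c\sigma^{2} = \tfrac{\lambda^{2}(\sigma^{2} + R\delta/3)}{2} = \tfrac{\lambda\delta}{2}$ and the exponent collapses to $-\tfrac{\lambda\delta}{2} = -\dfrac{\delta^{2}/2}{\sigma^{2} + R\delta/3}$, which is the assertion.

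The main obstacle, and the only point needing real care, is the uniform-in-$t$ step: a union bound over $t \in [\tau]$ would cost a spurious factor of $\tau$, so one must instead invoke a maximal inequality (Ville/Doob) for the supermartingale $M_t$; relatedly, one must use that $W_t$ is $\mathcal{F}_{t-1}$-measurable so that the factor $e^{-cW_t}$ legitimately factors out of the conditional expectation at each step (and at the stopping time). The numerical bound $\frac{e^{x}-1-x}{x^{2}} \le \frac{1}{2(1-x/3)}$ and the closed-form optimization over $\lambda$ are then routine.
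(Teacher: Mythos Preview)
The paper does not prove this statement: Theorem~\ref{t:Freedman} is listed in the appendix as a background result with citations to Freedman~\cite{Fre75} and Tropp~\cite{Tro11}, and no proof is given. Your proposal is a correct, self-contained proof and follows exactly the classical route of those references: build the exponential supermartingale $M_t = \exp(\lambda Y_t - cW_t)$ from a one-step moment bound, apply Ville's maximal inequality to avoid the union-bound factor of~$\tau$, and optimize over~$\lambda$. Since the paper offers no argument of its own to compare against, there is nothing further to contrast.
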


\section{Omitted Proofs in \Cref{s:framework}} \label{a:cost}

\begin{proof}[Sketched Proof of \Cref{t:cost-kappa}]
As in the proof of Theorem 3.12 in~\cite{LZ20}, there are two main parts in proving \Cref{t:cost-kappa}: (1) bound the expected change of solution size; (2) bound the conditional variance of the stochastic process.
The main difference in proving \Cref{t:cost-kappa} is within the first part, i.e., the expected change of solution size will become worse because of the new $\kappa$ parameter.

Let ${\rm 1}_{i_t}$ and ${\rm 1}_{j_t}$ be the indicator variables of whether $i_t$ and $j_t$ are empty or not.
The key of the updated proof is to show the following inequalities.
\begin{align} \label{eq:expected-change}
\frac{1}{M} \Big(\sum_{i=1}^n \vx(i) - |S_{t-1}| \Big) \leqslant \E_t[{\rm 1}_{j_t} - {\rm 1}_{i_t}  ]
\leqslant \frac{1}{M} \bigg(\sum_{i=1}^n \vx(i) - |S_{t-1}| + \Big( \frac{11d}{\gamma} + \frac{d}{\kappa}\Big) \bigg).
\end{align}
We will sketch how we obtain this bound and highlight the loss caused by $\kappa < 1$.

We first bound the conditional expectation of ${\rm 1}_{j_t}$.
With the same argument as in~\cite{LZ20}, we can show that
\begin{equation} \label{eq:expected-add}
\frac{1}{M} \bigg( \sum_{i=1}^n \vx(i) - \sum_{i \in S_{t-1}} \vx(i) \bigg)
\leqslant \E_t[{\rm 1}_{j_t} ] 
\leqslant \frac{1}{M} \bigg( \sum_{i=1}^n \vx(i)  - \sum_{i \in S_{t-1}} \vx(i) + \frac{d\norm{\vc}_{\infty}}{\gamma}\bigg).
\end{equation}
Next we bound the expectation of ${\rm 1}_{i_t}$. This is the main difference caused by $\kappa$.
By the probability distribution of $i_t$,
\begin{align*}
\E_t[{\rm 1}_{i_t} ] 
= &~ \frac{1}{M} \sum_{i \in S'_{t-1}} (1-\vx(i)) \big(1-\alpha \inner{\vv_i \vv_i^{\top}}{\mA_t^{\frac12}} \big)
\\ 
= &~ \frac{1}{M} \Big(\sum_{i \in S'_{t-1}}  (1-\vx(i)) - \alpha \sum_{i \in S'_{t-1}}  (1-\vx(i)) \inner{\vv_i \vv_i^{\top}}{\mA_t^{\frac12}} \Big)
\\
= &~ \frac{1}{M} \bigg( |S_{t-1}| - \bigg( \sum_{i \in S_{t-1}} \vx(i) \bigg) - \sum_{i \in S_{t-1} \backslash S'_{t-1}} (1-\vx(i))
- \alpha \sum_{i \in S'_{t-1}} (1-\vx(i)) \inner{\vv_i \vv_i^{\top}}{\mA_t^{\frac12}} \bigg).
\end{align*}
We would like to bound the last two terms of the right hand side.
Recall that $S'_{t-1} := \{ i \in S_{t-1} \mid \alpha\inner{\vv_i \vv_i^{\top}}{\mA_t^{\frac12}} \leq \frac12~\text{and}~ \inner{\vv_i \vv_i^\top}{\mZ_t^{-1}} \leq \kappa \}$. Therefore, for any $i \in S_{t-1} \backslash S'_{t-1}$, it holds that $1 < 2\alpha \inner{\vv_i \vv_i^\top}{\mA_t^{1/2}} + \kappa^{-1} \inner{\vv_i \vv_i^\top}{\mZ_t^{-1}}$.
This implies that
\begin{align*}
|S_{t-1} \setminus S'_{t-1}|
& \leq \sum_{i \in S_{t-1} \setminus S'_{t-1}} 2 \alpha \inner{\vv_i \vv_i^{\top}}{\mA_t^{\frac12}} + \sum_{i \in S_{t-1} \backslash S'_{t-1}} \kappa^{-1} \inner{\vv_i \vv_i^\top}{\mZ_t^{-1}} \\
& \leq 2 \alpha \sum_{i \in S_{t-1}} \inner{\vv_i \vv_i^{\top}}{\mA_t^{\frac12}} + \kappa^{-1} \sum_{i \in S_{t-1}} \inner{\vv_i \vv_i^\top}{\mZ_t^{-1}} \\
& \leqslant 2 \big(d + \alpha \sqrt{d} \cdot \lambda_{\min}(\mZ_t)\big) + \frac{d}{\kappa} \leqslant \frac{8d}{\gamma} + \frac{d}{\kappa},
\end{align*}
where the second last inequality uses Claim 2.14 in~\cite{AZLSW21} and the last inequality is by $\alpha = \frac{\sqrt{d}}{\gamma}$ and the assumption that $\lambda_{\min}(\mZ_t) < 1$.
Since $\vx \in [0,1]^m$, it follows that the second last term is 
\[
0 \leqslant \sum_{i \in S_{t-1} \backslash S'_{t-1}} (1-\vx(i)) \leqslant |S_{t-1} \backslash S'_{t-1}| \leqslant \Big( \frac{8d}{\gamma} + \frac{d}{\kappa} \Big) 
\]
Similarly, for the last term,
\begin{align*}
0 \leqslant \alpha \sum_{i \in S'_{t-1}} (1-\vx(i)) \inner{\vv_i \vv_i^{\top}}{\mA_t^{\frac12}} 
& \leqslant \alpha \sum_{i \in S_{t-1}} \inner{\vv_i \vv_i^{\top}}{\mA_t^{\frac12}} \leqslant(d+\alpha\sqrt{d}\cdot \lambda_{\min}(\mZ_t)) 
\leqslant \frac{2d}{\gamma}.
\end{align*}
Plugging back these upper and lower bounds for the last two terms, for $\kappa < 1$, we obtain
\begin{equation} \label{eq:expected-remove}
\frac{1}{M} \Bigg( |S_{t-1}| - \bigg(\sum_{i \in S_{t-1}} \vx(i) \bigg) - \Big(\frac{10d}{\gamma} + \frac{d}{\kappa} \Big) \Bigg)
\leqslant \E_t[{\rm 1}_{i_t}] 
\leqslant \frac{1}{M} \bigg( |S_{t-1}| - \sum_{i \in S_{t-1}} \vx(i) \bigg).
\end{equation}
Combining \eqref{eq:expected-add} and \eqref{eq:expected-remove}, we can obtain \eqref{eq:expected-change}. The remaining concentration argument is the same as in the proof of Theorem 3.12 in~\cite{LZ21}.
\end{proof}

\section{Omitted Proofs in \Cref{s:objective}} \label{a:proofs-sec-term}

\begin{proof}[Proof of Lemma~\ref{l:E-A-gamma}]
Recall that $\mA_t = (\alpha \mZ_t - c_t \mI)^{-2}$ where $c_t$ is the unique value such that $\mA_t \succ 0$ and $\tr(\mA_t) = 1$. 
Since $\mZ_t \succcurlyeq \lambda_{\min}(\mZ_t) \cdot \mI$, it follows that
\[
1 = \tr(\mA_t) \leq \left(\alpha \lambda_{\min}(\mZ_t) - c_t\right)^{-2} \cdot \tr(\mI) 
\quad \Longrightarrow \quad 
\alpha \lambda_{\min}(\mZ_t) - c_t \leq \sqrt{d} 
\quad \Longrightarrow \quad 
c_t \geq 0,
\]
where the last implication holds as $\alpha = \sqrt{d}/\gamma$, $\lambda_{\min}(\mZ_t) \geq 1-5\gamma$ and the assumption $\gamma \leq \frac16$.
This implies that $\mA^{\frac12}_t = (\alpha \mZ_t - c_t \mI)^{-1} \succcurlyeq \alpha^{-1} \mZ^{-1}_t$, proving the first inequality.

For the second inequality, consider the eigen-decomposition of $\mZ_t \!=\! \sum_{j =1}^d \lambda_j \vw_j \vw_j^\top$, 
where $0 < \lambda_1 \leq \ldots \leq \lambda_d$ are the eigenvalues and $\{\vw_j\}$ are the corresponding orthonormal eigenvectors.
Then,
\begin{align*}
    \frac{\inner{\vv_i \vv_i^\top}{\mA_t^{\frac12}}}{\inner{\vv_i \vv_i^\top}{\mZ_t^{-1}}} & = \frac{\sum_{j=1}^d \frac{\inner{\vv_i}{\vw_j}^2}{\alpha \lambda_j - c_t}}{\sum_{j=1}^d \frac{\inner{\vv_i}{\vw_j}^2}{\lambda_j}} \leq \max_{j \in [d]} \frac{\lambda_j}{\alpha \lambda_j - c_t} \leq \frac{\lambda_1}{\alpha \lambda_1 - c_t} \leq \lambda_1,
\end{align*}
where the first inequality holds since $\alpha \lambda_j - c_t > 0$ as $\mA_t \succ 0$, 
the second inequality holds as $c_t \geq 0$ and the function $f(x) = \frac{x}{\alpha x - c_t}$ is decreasing for $x \geq \frac{l_t}{\alpha}$ when $c_t \geq 0$,
and the last inequality follows as $1=\tr(\mA_t) \geq (\alpha \lambda_1 - c_t)^{-2}$ which implies $\alpha \lambda_1 - c_t \geq 1$. 
\end{proof}

\begin{proof}[Proof of \Cref{l:optimality-p}]
We will use a dual characterization to prove the lemma.
We introduce a dual variable $\mY$ for the first equality constraint, 
a dual variable $\mu \geq 0$ for the budget constraint $\sum_{i=1}^m \vx(i) - k \leq 0$, 
a dual variable $\beta^-_i \geq 0$ for each non-negative constraint $-\vx(i) \leq 0$,
and a dual variable $\beta^+_i \geq 0$ for each capacity constraint $\vx(i) - 1 \leq 0$.

The Lagrange function $L(\vx, \mX, \mY, \mu, \beta^+, \beta^-)$ is defined as
{\small
\begin{align*}
& \Big(\tr\big(\mX^{-p}\big)\Big)^{1/p}
\!+\! \bigg\langle \mY, \mX - \sum_{i=1}^m \vx(i) \cdot \vu_i \vu_i^\top \bigg\rangle 
\!+\! \mu \bigg(\sum_{i=1}^m \vx(i) - k\bigg)
 \!-\! \sum_{i=1}^m \beta^-_i \vx(i) + \sum_{i=1}^m \beta^+_i (\vx(i)-1) 
\\
= & \Big(\tr\big(\mX^{-p}\big)\Big)^{1/p} \!+\! \inner{\mY}{\mX} \!-\! \mu k
\!-\! \sum_{i=1}^m \beta^+_i
\!-\! \sum_{i=1}^m \vx(i) \cdot \big(\inner{\mY}{\vu_i \vu_i^\top} - \mu + \beta^-_i - \beta^+_i \big).
\end{align*}}%
The Lagrangian dual program is 
\[
\max_{\substack{\mY \succ 0, \mu \geq 0, \\ \beta^+ \geq 0, \beta^- \geq 0}} \min_{\mX \succ 0, \vx} L(\vx,\mX,\mY,\mu,\beta^+,\beta^-).
\]
Note that we can add the constraint $\mX \succ 0$ to the inner minimization without loss of generality, as otherwise the Lagrange function is unbounded. Similarly, we can assume that $\mY \succ 0$ in the outer maximization, as otherwise the inner maximization problem is unbounded below.

To simplify the dual program, we will apply first order optimality condition to the inner minimization problem. We start with computing the gradient of $(\tr(\mX^{-p}))^{1/p}$.
\begin{align*}
\nabla_\mX (\tr(\mX^{-p}))^{1/p} = \frac1p (\tr(\mX^{-p}))^{\frac{1-p}{p}} \cdot \partial_\mX (X^p)^{-1} & = -\frac1p (\tr(\mX^{-p}))^{\frac{1-p}{p}} \cdot X^{-p} (\partial_\mX X^p) X^{-p} \\
& = -(\tr(\mX^{-p}))^{\frac{1-p}{p}} \cdot X^{-p-1}.
\end{align*}
Therefore, given any $\mY \succ 0, \mu, \beta^+, \beta^- \geq 0$, the minimizers $\mX \succ 0, \vx$ of the inner minimization problem satisfy the optimality conditions that
\begin{align} \label{eq:optimality}
\nabla_{\mX} L = -(\tr(\mX^{-p}))^{\frac{1-p}{p}} \cdot X^{-p-1} + \mY = 0 \quad \text{and} \quad \nabla_{\vx(i)} L = -\inner{\mY}{\vu_i \vu_i^\top} + \mu - \beta^-_i + \beta^+_i = 0.
\end{align}
On the other hand, it is easy to verify that $\vx = \delta \vec{1}$ is a strictly feasible solution of the primal program for a small enough $\delta$. 
So, Slater's condition implies that strong duality holds. 
Let $\vx,\mX$ be an optimal solution for the primal program,
and $\mY,\mu,\beta^+,\beta^-$ be an optimal solution for the dual program. Strong duality implies that
\[
\Big(\tr\big(\mX^{-p}\big)\Big)^{1/p} = \Big(\tr\big(\mX^{-p}\big)\Big)^{1/p} \!+\! \inner{\mY}{\mX} \!-\! \mu k
\!-\! \sum_{i=1}^m \beta^+_i
\!-\! \sum_{i=1}^m \vx(i) \cdot \big(\inner{\mY}{\vu_i \vu_i^\top} - \mu + \beta^-_i - \beta^+_i \big),
\]
where the LHS is the optimal value of the primal program and the RHS is equal to the optimal value of the dual program given the optimal primal and dual solutions.

Then, it follows from the Lagrangian optimality condition~\eqref{eq:optimality} that
\begin{align*}
0 = \tr(\mX^{-p})^{1/p} - \mu k - \sum_{i=1}^m \beta^+_i 
\quad & \Longrightarrow \quad 
\mu k + \sum_{i=1}^m \beta^+_i = \tr(\mX^{-p})^{1/p} \quad \Longrightarrow \quad 
\mu \leq \frac{1}{k} \cdot \tr(\mX^{-p})^{1/p},
\end{align*}
where the last implication follows from $\beta^+_i \geq 0$ for all $i \in [m]$.

Finally, by the complementary slackness conditions, we have $\beta^-_i \cdot \vx(i) = 0$ and $\beta^+_i \cdot (1-\vx(i)) = 0$ for all $i \in [n]$. 
Therefore, for each $i \in [n]$ with $0 < \vx(i) < 1$, we must have $\beta^+_i = \beta^-_i = 0$. By~\eqref{eq:optimality}, it follows that
\begin{align*}
\frac1k \cdot \tr(\mX^{-p})^{1/p} \geq \mu = \inner{\mY}{\vu_i \vu_i^\top} = (\tr(\mX^{-p}))^{\frac{1-p}{p}} \cdot \inner{X^{-p-1}}{\vu_i \vu_i^\top} ~\implies~ \inner{\mX^{-p-1}}{\vu_i \vu_i^\top} \leq \frac1k \cdot \tr(\mX^{-p}). \end{align*}
\end{proof}

\end{document}